\newtheorem{theorem}{Theorem}
\newtheorem{lemma}{Lemma}
\newtheorem{example}{Example}
\newtheorem{problem}{Problem}
\newtheorem{fact}{Fact}
\newcommand{\spara}[1]{\smallskip\noindent{\bf #1}}
\newcommand{\squishlist}{
 \begin{list}{$\bullet$}
  {  \setlength{\itemsep}{0pt}
     \setlength{\parsep}{3pt}
     \setlength{\topsep}{3pt}
     \setlength{\partopsep}{0pt}
     \setlength{\leftmargin}{2em}
     \setlength{\labelwidth}{1.5em}
     \setlength{\labelsep}{0.5em}
} }
\newcommand{\squishlisttight}{
 \begin{list}{$\bullet$}
  { \setlength{\itemsep}{0pt}
    \setlength{\parsep}{0pt}
    \setlength{\topsep}{0pt}
    \setlength{\partopsep}{0pt}
    \setlength{\leftmargin}{2em}
    \setlength{\labelwidth}{1.5em}
    \setlength{\labelsep}{0.5em}
} }
\newcommand{\squishdesc}{
 \begin{list}{}
  {  \setlength{\itemsep}{0pt}
     \setlength{\parsep}{3pt}
     \setlength{\topsep}{3pt}
     \setlength{\partopsep}{0pt}
     \setlength{\leftmargin}{1em}
     \setlength{\labelwidth}{1.5em}
     \setlength{\labelsep}{0.5em}
} }
\newcommand{\squishend}{
  \end{list}
}
\newcommand{\eat}[1]{}
\newcommand{\NP}{\ensuremath{\mathbf{NP}}\xspace}
\newcommand{\PTAS}{\ensuremath{\mathbf{PTAS}}\xspace}
\newcommand{\sharpP}{\ensuremath{\mathbf{\#P}}\xspace}
\newcommand{\Poly}{\ensuremath{\mathbf{P}}\xspace}
\newcounter{ccc}
\DeclareMathOperator*{\argmin}{arg\,min}
\DeclareMathOperator*{\argmax}{arg\,max}
\newcommand{\bigO}{\mathcal{O}}
\newcommand{\maxk}{\mbox{\textsc{max $k$-cover}}\xspace}
\newcommand{\topcat}{\mbox{\textsc{$s$-$t$ top-$k$ catalysts}}\xspace}
\newcommand{\setcov}{\mbox{\textsc{set cover}}\xspace}
\newcommand{\bone}{\mbox{\textsf{Individual top-$k$}}\xspace}
\newcommand{\btwo}{\mbox{\textsf{Greedy}}\xspace}
\newcommand{\ours}{\mbox{\textsf{Rel-Path}}\xspace}
\newcommand{\ipi}{\mbox{\textsc{iterative path inclusion}}\xspace}
\newcommand{\steptwo}{\mbox{\textsf{Iterative Path Inclusion}}\xspace}
\newcommand{\topcatavg}{\mbox{\textsc{top-$k$ catalysts avg}}\xspace}
\newcommand{\topcatmax}{\mbox{\textsc{top-$k$ catalysts max}}\xspace}
\newcommand{\topcatmin}{\mbox{\textsc{top-$k$ catalysts min}}\xspace}
\newcommand{\mincat}{\mbox{\textsc{minimum set of catalysts}}\xspace}
\newcommand{\topcatconn}{\mbox{\textsc{connectivity top-$k$ catalysts}}\xspace}
\newcommand{\boneshort}{\mbox{\textsf{Ind-$k$}}\xspace}
\newcommand{\btwoshort}{\btwo}
\newcommand{\oursshort}{\mbox{\textsf{Rel-Path}}\xspace}
\long\def\@IEEEtitleabstractindextextbox#1{\parbox{0.922\textwidth}{#1}}
\begin{document}

\title{Conditional Reliability in Uncertain Graphs}

\author{Arjiit~Khan,
        Francesco~Bonchi,
        Francesco~Gullo,
        and~Andreas~Nufer
\IEEEcompsocitemizethanks{\IEEEcompsocthanksitem A. Khan is with Nanyang Technological University, Singapore.\protect
\IEEEcompsocthanksitem F. Bonchi is with ISI Foundation, Italy.
\IEEEcompsocthanksitem F. Gullo is with UniCredit, R\&D Dept., Italy.
\IEEEcompsocthanksitem A. Nufer is with ETH Zurich, Switzerland.}
\thanks{The research is supported by MOE Tier-1 RG83/16 and NTU M4081678.}
\thanks{Manuscript received January 31, 2017.}}
\vspace{-5mm}

\vspace{-2mm}
\IEEEtitleabstractindextext{%
\begin{abstract}
Network reliability is a well-studied problem that requires to measure the probability that a target node is reachable from a source node in a
probabilistic (or uncertain) graph, i.e., a graph where every edge is assigned a probability of existence. Many approaches
and problem variants have been considered in the literature, majority of them assuming that edge-existence probabilities are fixed. Nevertheless,
in real-world graphs, edge probabilities typically depend on external conditions. In metabolic networks, a protein can be converted
into another protein with some probability depending on the presence of certain enzymes. In social influence networks, the probability
that a tweet of some user will be re-tweeted by her followers depends on whether the tweet contains specific hashtags. In transportation
networks, the probability that a network segment will work properly or not, might depend on external conditions such as weather or time of the day.

In this paper, we overcome this limitation and focus on \emph{conditional reliability}, that is, assessing reliability when edge-existence
probabilities depend on a set of conditions. In particular, we study the problem of determining the top-$k$ conditions that maximize the
reliability between two nodes. We deeply characterize our problem and show that, even employing polynomial-time reliability-estimation methods,
it is \NP-hard, does not admit any \PTAS, and the underlying objective function is non-submodular. We then devise a practical method that targets
both accuracy and efficiency. We also study natural generalizations of the problem with multiple source and target nodes.
An extensive empirical evaluation on several large, real-life graphs demonstrates effectiveness and scalability of our methods.
\end{abstract}

\begin{IEEEkeywords}
Uncertain graphs, Reliability, Conditional probability.
\end{IEEEkeywords}}

\maketitle
\sloppy
\IEEEdisplaynontitleabstractindextext
\IEEEpeerreviewmaketitle

\vspace{-2mm}
\section{Introduction}
\label{sec:intro}

\emph{Uncertain graphs}, i.e., graphs whose edges are assigned a probability of existence, have recently attracted a great
deal of attention, due to their rich expressiveness and given that uncertainty is inherent in the data in a wide range of applications.
Uncertainty may arise due to noisy measurements \cite{A09}, inference and prediction models \cite{AdarR07}, or explicit manipulation, e.g.,
for privacy purposes~\cite{Boldietal12}.
A fundamental problem in uncertain graphs is the so-called \emph{reliability},
which asks to measure the probability that two given (sets of) nodes are reachable \cite{AMG75}.
Reliability has been well-studied in the context of device networks, i.e.,
networks whose nodes are electronic devices and the (physical) links between such devices have a probability of failure \cite{AMG75}.
More recently, the attention has been shifted to other types of networks that can naturally be represented as uncertain graphs, such as social networks or
biological networks \cite{PBGK10,JLDW11}.

In the bulk of the literature,  reliability queries have been modeled without taking into account any external factor that
could influence the probability of existence of the links in the network.
In this paper, we overcome this limitation and introduce the notion of \emph{conditional reliability}, which takes into account that edge probabilities
may depend on a set of conditions, rather being fixed. This situation models real-world uncertain graphs. As an example, Figure~\ref{fig:example} shows a link $(u,v)$ of a \emph{social influence network}, i.e., a social graph where the associated probability represents the likelihood that a piece of information (e.g., a tweet) originated by $u$ will be ``adopted'' (re-tweeted) by her follower $v$.
The re-tweeting probability clearly depends on the content of the tweet.
In the example, $v$ is much more interested in sports than politics. Hence, if the tweet contains the hashtag \texttt{\#NFL},
then it will likely be re-tweeted by $v$,  while if it is about elections (i.e., it contains the hashtag \texttt{\#GetToThePolls}), it will be re-tweeted only with a small probability.
In this example, hashtags correspond to external factors that influence probabilities.
We hereinafter refer to such external factors as \emph{conditions} or \emph{catalysts}, and use all these terms interchangeably throughout the paper.

Given an uncertain graph with external-factor-dependent edge probabilities, in this work we study the following problem: Given a source node, a target node, and a small integer $k$, identify a set
of $k$ catalysts that maximizes the reliability between $s$ and $t$. This problem arises in many real-world scenarios, such as the ones described next.
\begin{figure}[t]
\vspace{-5mm}
\centering
\includegraphics[width=.7\columnwidth]{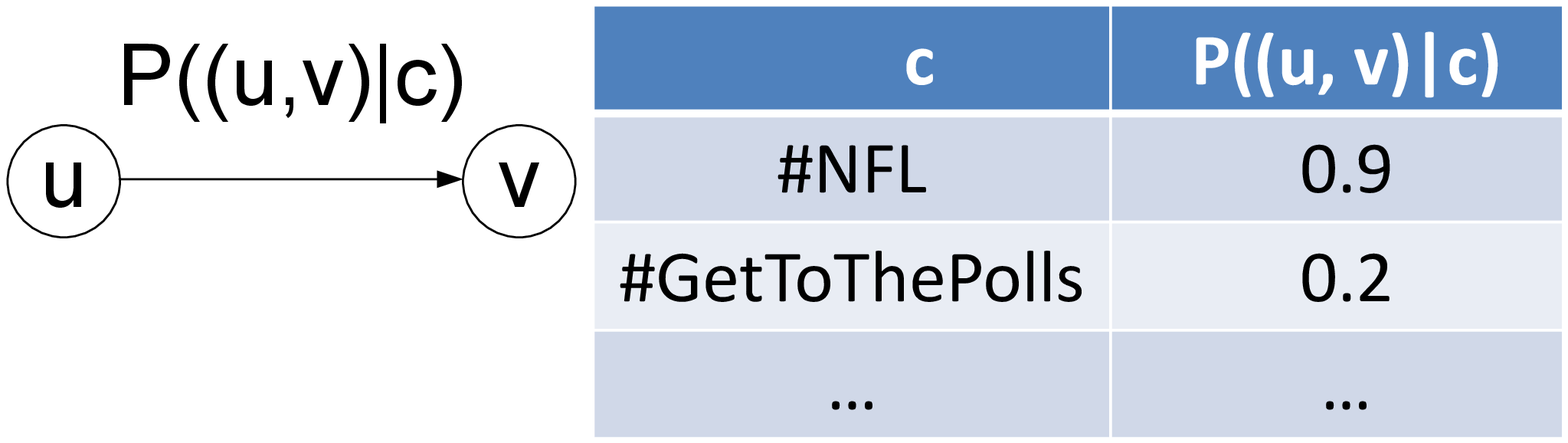}
\vspace{-2mm}
\caption{\scriptsize A link $(u,v)$ of a social influence network, where the associated probability represents the likelihood that a tweet by $u$ will be re-tweeted by her follower $v$. This probability depends on the content of the tweet. In this example if the tweet contains the hashtag \texttt{\#NFL}, then it will likely be re-tweeted, while if it is about elections (i.e., it contains the hashtag \texttt{\#GetToThePolls}), it will be re-tweeted only with a small probability.}
\label{fig:example}
\vspace{-5mm}
\end{figure}

\vspace{-1.5mm}
\spara{Pathway formation in biological networks.} To understand metabolic chain reactions in
cellular systems, biologists utilize metabolic networks \cite{JHWRX10}, where nodes represent compounds, and an edge between two
compounds indicates that a compound can be transformed into another one through a chemical reaction.
Reactions are controlled by various enzymes, and each enzyme defines a probability that the underlying reaction will actually take place.
Thus, reactions (edges) are assigned various probabilities of existence, which depend on the specific enzyme (external factor).
A fundamental question posed by biologists is to identify a set of enzymes which guarantee with high probability that a sequence of chemical reactions will take place to convert an input compound $s$ into a target compound $t$.
Since enzymes are expensive (they need to go through a long multi-step process before being commercialized \cite{CB90}), the output enzyme set should be limited in size.
Often known as {\em cost-effective experiment design} \cite{MYWY17,MGC04}, this corresponds to solving an instance of our problem: Given a source compound $s$ and a target compound $t$, what is the set of top-$k$ enzymes which maximizes the probability that $s$ will be converted into $t$ via a series of chemical reactions?

\vspace{-2mm}
\spara{Information cascades.} Studying information cascades in influential networks is receiving more and more attention, mainly due to its large applicability in {\em viral marketing}
strategies. Social influence can be modeled as in Figure~\ref{fig:example}, i.e.,  by means of a probability that once $u$ has been ``activated'' by a campaign, she will influence her friend
$v$ to perform the same action. This probability typically depends on topics and contents of the campaign \cite{BBM12,CFLFTT15}.
Within this view, let us consider the following example, which is motivated from \cite{LFZT17}. During the 2016 US Presidential election, Hillary Clinton's campaign promises were
infrastructure rebuild, free trade, open borders, unlimited immigration, equal pay, background checks to gun sales, increasing minimum wage, etc.
To get more votes, Hillary's publicity manager could have prioritized the most influential among all these standpoints in subsequent speeches from her, her vice presidential candidate
(Tim Kaine), and her political supporters (e.g., Barack and Michelle Obama), while also planning how to influence more voters from
the ``blue wall'' states (Michigan, Pennsylvania, and Wisconsin) \cite{S16}.
As speeches should be kept limited due to time constraints and risk of becoming ineffective in case of information overload, it is desirable to find a limited set of standpoints that maximize the influence from a set of early adopters (e.g., popular people who are close to
Hillary Clinton) to a set of target voters (e.g., citizens of the ``blue wall'' states)  \cite{aralwalker}.
This corresponds to identifying the top-$k$ conditions that maximize the reliability between two (sets of) nodes in the social graph, i.e., the problem we study in this work.

\vspace{-1.5mm}
\spara{Challenges and contributions.}
The problem that we study in this work is a non-trivial one.
Computing standard reliability over uncertain graphs is a \sharpP-complete problem \cite{B86}.
We show that, even assuming polynomial-time sampling methods to estimate conditional reliability (such as  {\sf RHT}-sampling \cite{JLDW11}, recursive stratified sampling \cite{LYMJ14}),
our problem of computing a set of $k$ catalysts that maximizes conditional reliability between two nodes remains \NP-hard.
Moreover, our problem turns out to be not easy to approximate, as ($i$) it does not admit any \PTAS, and ($ii$) the underlying objective function is shown to be non-submodular.
Therefore, standard algorithms, such as iterative hill-climbing that greedily maximizes the marginal gain at every iteration,
do not provide any approximation guarantees and are expected to have limited performance.
Within this view, we devise a novel algorithm that first extracts highly-reliable paths between source and target nodes, and then iteratively selects these paths so as to achieve maximum improvement in reliability while still satisfying the constraint on the number of conditions.

After studying the single-source-single-target query, we focus on generalizations where multiple source and target nodes can be provided as input, thus opening the stage to a wider family of queries and applications.
We study two variants of this more general problem: (\emph{i}) maximizing an aggregate function over pairwise reliability between nodes in source and target sets, and (\emph{ii}) maximizing the probability that source and target nodes remain all connected.

The main contributions of this paper are as follows:
\squishlist
\item We focus on the notion of conditional reliability in uncertain graphs, which arises when the input graph has conditional edge-existence probabilities.
In particular, we formulate and study the problem of finding a limited set of conditions that maximizes reliability between a source and a target node (Section~\ref{sec:preliminaries}).

\item We deeply characterize our problem from a theoretical point of view, showing that it is \NP-hard and hard to approximate even when polynomial-time reliability estimation is employed (Section~\ref{sec:preliminaries}).

\item
We design an algorithm that provides effective (approximated) solutions to our problem, while also looking at efficiency.
The proposed method properly selects a number of highly-reliable paths so as to maximize reliability while satisfying the budget on the number of conditions (Section~\ref{sec:ourAlg}).

\item We generalize our problem and algorithms to the case of multiple source and target nodes (Section~\ref{sec:algorithm_multi}).

\item We empirically demonstrate effectiveness and efficiency of our methods on real-life graphs, while also detailing applications in information cascade (Section \ref{sec:experiments}).
\squishend

\vspace{-5mm}
\section{Single-source single-target: \\ problem statement}
\label{sec:preliminaries}

An uncertain graph $\mathcal{G}$ is a quadruple $(V, E, C, P)$, where $V$ is a set of $n$
nodes, $E \subseteq V \times V$ is a set of $m$ directed edges,  and $C$ is a set of external
conditions that influence the edge-existence probabilities.
We hereinafter refer to such external conditions as {\em catalysts}.
$P: E \times C \rightarrow (0,1]$ is a function that assigns a conditional probability to each edge $e \in E$ given a specific catalyst $c \in C$, i.e., $P(e|c)$ 
denotes the probability that the edge $e$ exists given the catalyst $c$.

The bulk of the literature on uncertain graphs assumes that edge probabilities are independent of one another \cite{JLDW11}.
In this work, we make the same assumption.
Additionally, we assume that the existence of an edge is determined by an independent
process (coin flipping), one per catalyst $c$, and the ultimate existence of an edge is decided based on the success of at least one of such processes.
This assumption naturally holds in various settings.
For instance, in a metabolic network, with an initial
compound and an enzyme, the probability that a target compound would be produced depends only on that specific
reaction, and it is independent of other chemical reactions defined in the network.
As a result, the global
existence probability of an edge $e$, given a set of catalysts $C_1 \subseteq C$, can be derived as
$P(e|C_1) = 1- \prod_{c \in C_1}(1-P(e|c))$.

Given a set $C_1$ of catalysts, the uncertain graph $\mathcal{G}$ yields $2^m$ deterministic graphs $G \sqsubseteq \mathcal{G}|C_1$, where each $G$ is a pair $(V,$ $E_G)$, with $E_G \subseteq E$,
and its probability of being observed is given below.
\begin{align}
\displaystyle P(G|C_1) = \prod_{e\in E_G}P(e|C_1) \prod_{e\in E\setminus E_G}(1 - P(e|C_1))
\end{align}

For a source node $s \in V$, and a target node $t \in V$, we define \emph{conditional reliability} $R\left((s, t)|C_1\right)$ as the probability that $t$ is reachable from $s$ in $\mathcal{G}$, given a set $C_1$
of catalysts. Formally, for a possible graph $G \sqsubseteq \mathcal{G}|C_1$, let $I_G(s, t)$ be an indicator function taking value $1$ if there exists a path from $s$ to $t$ in $G$, and $0$ otherwise.
$R\left((s, t)|C_1\right)$ is computed as follows.
\vspace{-2mm}
\begin{align}
\displaystyle R\left((s, t)|C_1\right) = \sum_{G \sqsubseteq  \mathcal{G}| C_1} \left[I_G(s, t) \times P(G|C_1)\right]
\end{align}
The problem that we tackle in this work is introduced next.
\begin{figure}[t]
\vspace{-4mm}
\centering
\includegraphics[width=.2\columnwidth]{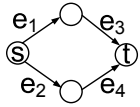}
\vspace{-2mm}
\caption{\scriptsize Example of non-submodularity. $P(e_1|c_1)=0.5$,
$P(e_2|c_2)=0.6$, $P(e_3|c_3)=0.5$, $P(e_4|c_1)=0.5$. $P(e|c)=0$ for all
other edge-catalyst combinations that are not specified.}\label{fig:nonsubmodular}
\vspace{-5mm}
\end{figure}

\begin{problem} [\topcat]
Given an uncertain graph $\mathcal{G}=(V,E,C,P)$, a source node $s \in V$, a target node $t \in V$, and a positive integer $k$,
find a set $C^* \subseteq C$ of catalysts, having size $k$, that maximizes the conditional reliability $R\left((s,t)|C^*\right)$ from $s$ to $t$:
\vspace{-4mm}
\begin{align}
& C^* = \argmax_{C_1 \subseteq C} R\left((s,t)|C_1\right) \nonumber & \\
& \text{subject to}\quad |C_1|=k. &
\vspace{-2mm}
\end{align}
\vspace{-3mm}
\label{prob:topk_rel_col_set}
\end{problem}
\vspace{-3mm}
Intuitively, the top-$k$ set $C^*$ yields multiple high-probability paths from the source node $s$ to the target node $t$.
Any specific path can have edges formed due to different catalysts.

\footnotetext[1]{\scriptsize Given an uncertain graph, a source node $s$, and a target node $t$, compute the probability that $t$ is reachable from $s$.}

\spara{Theoretical characterization.}
Problem \ref{prob:topk_rel_col_set} intrinsically relies on
the classical reliability problem \footnotemark[1], which is \sharpP-complete \cite{B86}.
As a result, Problem \ref{prob:topk_rel_col_set} is hard as well.

However, like standard reliability, conditional reliability can be estimated in polynomial time via Monte Carlo sampling, or other sampling methods \cite{JLDW11}.
Thus, the key question is whether Problem \ref{prob:topk_rel_col_set} remains hard even if polynomial-time conditional-reliability estimation is employed. As formalized next, the answer to this question is positive.
\begin{theorem}
\label{th:np_hard}
Problem \ref{prob:topk_rel_col_set} is \NP-hard even assuming polynomial-time computation for conditional reliability.
\end{theorem}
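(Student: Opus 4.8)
\emph{Approach.} The plan is to give a polynomial-time reduction from the decision version of \setcov (``given a universe, a family of sets, and a budget $k$, is there a subfamily of at most $k$ sets whose union is the whole universe?''), which is a classical \NP-complete problem. I would engineer the resulting instance of \topcat so that its underlying graph is a single $s$--$t$ path; on such a graph conditional reliability has a trivial closed form and is computable in polynomial time. Consequently the reduction establishes hardness \emph{even in a model where a polynomial-time conditional-reliability routine is available}, which is exactly the strengthening the theorem asserts.

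\emph{Construction.} Given a \setcov instance with universe $U=\{e_1,\dots,e_n\}$, sets $S_1,\dots,S_m\subseteq U$, and budget $k$ (WLOG $k<m$; otherwise pad $C$ with catalysts corresponding to empty sets), build $\mathcal G=(V,E,C,P)$ as follows: let $C=\{c_1,\dots,c_m\}$, one catalyst per set; let $V=\{v_0,\dots,v_n\}$ with $s:=v_0$ and $t:=v_n$; and let $E=\{(v_{i-1},v_i):1\le i\le n\}$, so that edge $(v_{i-1},v_i)$ represents element $e_i$. Set $P\big((v_{i-1},v_i)\mid c_j\big)=1$ whenever $e_i\in S_j$, and $P\big((v_{i-1},v_i)\mid c_j\big)=\tfrac12$ otherwise; this is a legitimate function into $(0,1]$.

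\emph{Analysis.} Fix any $C_1\subseteq C$ with $|C_1|=k$ and let $\mathcal F=\{S_j:c_j\in C_1\}$ be the corresponding subfamily. By $P(e\mid C_1)=1-\prod_{c\in C_1}(1-P(e\mid c))$, the edge $(v_{i-1},v_i)$ acquires global probability $1$ if some $S_j\in\mathcal F$ contains $e_i$ (one factor in the product vanishes), and probability $1-(\tfrac12)^{k}<1$ if no set of $\mathcal F$ contains $e_i$. Since the graph is a single $s$--$t$ path, $R\big((s,t)\mid C_1\big)=\prod_{i=1}^{n}P\big((v_{i-1},v_i)\mid C_1\big)$, computable in $O(nk)$ time; hence $R\big((s,t)\mid C_1\big)=1$ if $\mathcal F$ covers $U$, and $R\big((s,t)\mid C_1\big)\le 1-2^{-k}<1$ otherwise. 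Adding catalysts only (weakly) increases every edge probability, so a cover of size $k^{*}\le k$ can always be padded to size exactly $k$ while keeping $R=1$; conversely, if no $k$-subfamily covers $U$, then every feasible $C_1$ leaves some edge with probability $<1$ and the optimum of \topcat is $<1$. Thus the \setcov instance is a YES-instance iff the optimal value of the constructed \topcat instance equals $1$, which proves the theorem.

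\emph{Main obstacle and remarks.} The reduction itself is short; the two points to get right are (i) making the reliability value separate YES from NO \emph{without} any numerical-precision argument — achieved by letting covered edges be \emph{certain}, so the optimum is exactly $1$ in the YES case and at most $1-2^{-k}$ in the NO case — and (ii) verifying that conditional reliability is polynomial-time computable on the produced instances, which is immediate here because it is a product of $n$ edge probabilities along the unique $s$--$t$ path; this is precisely what delivers the ``even assuming polynomial-time computation for conditional reliability'' clause. I expect the more technical variant, needed to later rule out a \PTAS, to start from \maxk instead: replace the path by $n$ internally node-disjoint length-two $s$--$t$ paths, one per element, each with a ``detector'' edge that saturates near probability $1$ under coverage and a coverage-independent ``baseline'' edge, so that $R$ becomes a strictly increasing function of the number of distinct covered elements; carefully quantifying that monotone dependence (and keeping all probabilities of polynomial bit-length) is the delicate part of that stronger argument.
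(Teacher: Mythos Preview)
Your proof is correct, but it takes a genuinely different route from the paper's. The paper reduces from \maxk: it builds $Z=|U|$ internally node-disjoint length-three $s$--$t$ paths (one per element), where the first and last edge of every path depend on a single shared ``baseline'' catalyst $c$ with probability $p<1$, and the middle edge $(x_i,u_i)$ carries probability $1$ under catalyst $c_j$ precisely when $S_j$ covers $e_i$ (and $0$ otherwise). Asking for $k{+}1$ catalysts forces $c$ into any solution, and the remaining $k$ catalysts maximize reliability iff they maximize the number of covered elements. Your reduction is from \setcov and uses a single $s$--$t$ path with one edge per element; covered edges become certain and uncovered ones stay below $1$, so the optimum equals $1$ iff a size-$k$ cover exists. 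Your construction is more elementary (no extra baseline catalyst, trivially product-form reliability, a clean $R=1$ vs.\ $R<1$ separation), and it makes the ``even with polynomial-time reliability'' clause explicit. The paper's parallel-path gadget, on the other hand, is exactly what gets reused for the \PTAS-hardness argument in Theorem~\ref{th:apx_hard}---as you anticipate in your closing remark---so its extra complexity buys the approximation lower bound for free.
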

\begin{proof}
We prove \NP-hardness by a reduction from the \maxk problem.
In \maxk, we are given a universe $U$, and a set of $h$ subsets of $U$, i.e., $\mathcal{S} = \{S_1, S_2,\ldots, S_h\}$,
where $S_i \subseteq U$, for all $i \in [1\ldots h]$. The goal is to find a subset $\mathcal{S}^*$ of $\mathcal{S}$, of size $|\mathcal{S}^*|=k$,
such that the number of elements covered by $\mathcal{S}^*$ is maximized, i.e., so as to maximize $|\cup_{S \in \mathcal{S}^*} S|$.
Given an instance of \maxk, we construct in polynomial time an instance of \topcat problem as follows.

We create an uncertain graph ${\mathcal G}$ with a source node $s$ and a target node $t$. We add to $\mathcal{G}$ a set of nodes $u_1,u_2,\ldots,$ $u_Z$, one for each element
in $U$ ($Z = |U|$).
We connect each of these nodes $u_i$ to the target node $t$ with a (directed) edge $(u_i,t)$, and assume that each of such edges $(u_i,t)$ can occur only in the presence of
a single catalyst $c$ with a certain probability $p < 1$, i.e., $\forall i \in [1..Z]:P((u_i,t)|c) = p$ and $P((u_i,t)|c') = 0, \forall c' \neq c$.
Similarly, we put in $\mathcal{G}$ another set of nodes $x_1,x_2,\ldots,$ $x_Z$ (again one for each element in $U$),
and connect each of these nodes $x_i$ to the source node $s$ with an edge $(s,x_i)$. Each of such edges $(s,x_i)$ can also be present only in the presence of catalyst $c$, with probability $P((s,x_i)|c)=p$.
Finally, if some element $u_i\in U$ is covered by at least one of the subsets in $\mathcal{S}$, we add a directed edge $(x_i,u_i)$ in $\mathcal{G}$.
For each set $S_j \in \mathcal{S}$ that covers item $u_i$, we consider a corresponding catalyst $c_j$  and set the probability $P((x_i,u_i)|c_j)=1$.

Now, we ask for a solution of \topcat on the uncertain graph $\mathcal{G}$ constructed by using $k+1$ catalysts.
Every solution to our problem necessarily takes catalyst $c$, as otherwise there would be no way to connect $s$ to $t$.
Moreover, given that the paths connecting $s$ to $t$ are all disjoint, and each of them exists with probability $< 1$ (as $p <1$), the reliability from $s$ to $t$ is maximized by selecting $k$ additional catalysts that make the maximum number of paths exist, or, equivalently, selecting $k$ other catalysts that make each of the edges $(x_i, u_i)$ exist with probability $1$.
In order for each edge $(x_i, u_i)$ to exist with probability $1$, it suffices to have selected only one of the catalysts that are assigned to $(x_i, u_i)$.
Thus, selecting $k$ catalysts that maximize the number of edges $(x_i, u_i)$ existing with probability $1$ corresponds to selecting $k$ subsets $S_j$ that maximize the number of elements covered.
Hence, the theorem.
\end{proof}
\vspace{-2mm}
Apart from being \NP-hard, Problem \ref{prob:topk_rel_col_set} is also not easy to approximate, as it does not admit any \emph{Polynomial Time Approximation Scheme} (\PTAS).
\begin{theorem}
\label{th:apx_hard}
Problem \ref{prob:topk_rel_col_set} does not admit any \PTAS, unless \Poly\ = \NP.
\end{theorem}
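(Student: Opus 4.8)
The plan is to upgrade the reduction from \maxk used in the proof of Theorem~\ref{th:np_hard} into an approximation-preserving one, exploiting the well-known fact that \maxk admits no \PTAS unless \Poly = \NP (this is implied by its \textbf{APX}-hardness; indeed \maxk cannot be approximated within a factor better than $1-1/e$ unless \Poly = \NP). The difficulty is that, in the instance built from a \maxk instance, the reliability objective is \emph{not} linear in the number of covered elements: if a feasible solution picks the mandatory catalyst $c$ together with $k$ set-catalysts jointly covering $\ell$ elements, then, since the $Z=|U|$ candidate $s$--$t$ paths are internally vertex-disjoint (so the events ``path $i$ works'' are mutually independent, and each covered path works with probability exactly $p^2$), the resulting reliability is exactly $1-(1-p^2)^{\ell}$. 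So the first step is to \emph{linearize} this objective.

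To do so I would let the common probability $p$ be small and instance-dependent: given a \maxk instance, choose a rational, polynomially-sized $p\in(0,1]$ with $Z\,p^2\le\delta$, for a parameter $\delta$ fixed below. Elementary Bonferroni-type bounds give $\ell p^2\bigl(1-\tfrac12\ell p^2\bigr)\le 1-(1-p^2)^{\ell}\le \ell p^2$, so under $\ell\le Z$ and $Z p^2\le\delta$ every feasible solution satisfies $R((s,t)\,|\,\cdot)\in[\,(1-\delta)\,\ell p^2,\ \ell p^2\,]$; that is, reliability and coverage agree up to a multiplicative $(1-\delta)$ factor. The construction remains polynomial, and on these instances reliability is literally $1-(1-p^2)^{\ell}$, hence computable exactly in polynomial time, consistent with the hypothesis of the statement.

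Then I would argue by contradiction. Suppose \topcat admitted a \PTAS. Fix an arbitrary target accuracy $\varepsilon\in(0,1)$ for \maxk and set $\delta=\varepsilon'=\varepsilon/2$. Given a \maxk instance, build the \topcat instance above (asking for $k+1$ catalysts), run the assumed \PTAS with accuracy $\varepsilon'$, and read off, from the $k$ non-$c$ catalysts it returns, a \maxk solution covering some $\ell$ elements; the returned solution must contain $c$, since otherwise its reliability is $0$ whereas $\mathrm{OPT}_{\mathrm{rel}}>0$ (assuming \kwlog $k\ge 1$ and that $\mathcal S$ covers at least one element). If $\ell^\star$ is the optimal \maxk coverage, then $\mathrm{OPT}_{\mathrm{rel}}=1-(1-p^2)^{\ell^\star}\ge(1-\delta)\,\ell^\star p^2$, while the returned solution has reliability $R\ge(1-\varepsilon')\,\mathrm{OPT}_{\mathrm{rel}}$ and $R\le \ell p^2$; combining, $\ell\ge(1-\varepsilon')(1-\delta)\,\ell^\star\ge(1-\varepsilon)\,\ell^\star$. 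Since this works for every $\varepsilon>0$ and runs in polynomial time, it is a \PTAS for \maxk, contradicting $\Poly\neq\NP$.

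The main obstacle is exactly the linearization step and the quantifier bookkeeping around it: one must commit to $\delta$ and $\varepsilon'$ (hence to $p$) \emph{before} seeing the instance, verify that $p$ has polynomial bit-length and that all resulting probabilities and reliability computations stay polynomial, and ensure the approximate solution can always be normalized to contain the mandatory catalyst $c$ so that the induced coverage is well defined. Everything else is a direct adaptation of the reduction in Theorem~\ref{th:np_hard}.
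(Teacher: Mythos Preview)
Your proposal is correct and takes a genuinely different route from the paper. The paper reduces from the \emph{decision} version of \setcov on the very same gadget: it observes that if some $k$ sets cover all $Z$ elements the optimum reliability equals $1-(1-p^2)^Z$, otherwise it is at most $1-(1-p^2)^{Z-1}$, and argues that whenever $\beta>f(p,Z):=\frac{1-(1-p^2)^{Z-1}}{1-(1-p^2)^Z}$ a $\beta$-approximate solution would separate the two cases and hence decide \setcov. You instead stay with \maxk as the source problem, make $p$ small and instance-dependent so that $1-(1-p^2)^\ell$ is within a $(1-\delta)$ factor of $\ell p^2$, and transfer a hypothetical \PTAS for \topcat directly into a \PTAS for \maxk.

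What each approach buys: the paper's argument is shorter and needs no linearization, but it only exhibits, for each fixed $Z$, \emph{some} $\beta<1$ that works, and one must be careful that a single constant $\beta$ works uniformly across all instances (note that $\inf_p f(p,Z)=(Z-1)/Z\to 1$). Your L-reduction-style argument sidesteps this entirely: $\varepsilon'$ and $\delta$ are fixed from $\varepsilon$ alone, only $p$ varies with $Z$, and the approximation loss is explicit and uniform; you also inherit the full $(1-1/e)$ inapproximability of \maxk rather than merely ``no \PTAS''. One small wording issue in your last paragraph: $p$ is (and must be) chosen \emph{after} seeing the instance, since it depends on $Z$; only $\delta$ and $\varepsilon'$ are committed to upfront, so the parenthetical ``hence to $p$'' is misleading.
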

\vspace{-5mm}
\begin{proof}
See Appendix.
\end{proof}

\vspace{-2mm}
As a further evidence of the difficulty of our problem, it turns out that neither submodularity nor supermodularity holds for the objective function therein.
Thus, standard greedy hill-climbing algorithms do not directly come with approximation guarantees.
Non-supermodularity easily follows from \NP-hardness (as maximizing supermodular set functions under a cardinality constraint is solvable in polynomial time), while non-submodularity is shown next with a  counter-example.
\begin{fact}
The objective function of Problem \ref{prob:topk_rel_col_set} is not submodular.
\end{fact}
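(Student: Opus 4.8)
The plan is to refute submodularity with an explicit small counter-example, namely the instance depicted in Figure~\ref{fig:nonsubmodular}. Recall that a set function $f:2^{C}\to\mathbb{R}$ is submodular iff $f(A\cup\{x\})-f(A)\ge f(B\cup\{x\})-f(B)$ for every $A\subseteq B\subseteq C$ and every $x\in C\setminus B$; hence it suffices to produce one triple $(A,B,x)$ for which the reverse strict inequality holds, where $f(C_1):=R((s,t)\mid C_1)$ is the objective of Problem~\ref{prob:topk_rel_col_set}.

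First I would fix the instance: four nodes $s,a,b,t$, two internally disjoint $s$--$t$ paths $s\xrightarrow{e_1}a\xrightarrow{e_2}t$ and $s\xrightarrow{e_3}b\xrightarrow{e_4}t$, and three catalysts $c_1,c_2,c_3$ with $P(e_1\mid c_1)=P(e_4\mid c_1)=P(e_3\mid c_3)=0.5$, $P(e_2\mid c_2)=0.6$, and $P(e\mid c)=0$ for all remaining edge--catalyst pairs. The structural point to emphasize is that $c_1$ is the unique catalyst that can realize $e_1$ and also the unique one that can realize $e_4$, but $e_1$ and $e_4$ sit on two different paths; so $c_1$ contributes nothing on its own and becomes valuable only in combination with $c_2$ or $c_3$. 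This is exactly the complementarity that breaks the diminishing-returns property.

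Next I would evaluate $f$ on the catalyst sets that matter, using $P(e\mid C_1)=1-\prod_{c\in C_1}(1-P(e\mid c))$ together with the observation that, since the two paths are edge-disjoint, $R((s,t)\mid C_1)=1-(1-p_1)(1-p_2)$ where $p_1=P(e_1\mid C_1)P(e_2\mid C_1)$ and $p_2=P(e_3\mid C_1)P(e_4\mid C_1)$. One gets $f(\{c_2\})=f(\{c_2,c_3\})=0$ (without $c_1$ neither path can be completed, as $e_1$ and $e_4$ are unavailable), $f(\{c_1,c_2\})=0.5\cdot 0.6=0.3$, and $f(\{c_1,c_2,c_3\})=1-(1-0.3)(1-0.5\cdot 0.5)=0.475$. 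Now take $A=\{c_2\}$, $B=\{c_2,c_3\}$, $x=c_1$: the marginal gain of $x$ over $A$ is $0.3-0=0.3$, whereas the marginal gain of $x$ over $B$ is $0.475-0=0.475>0.3$, violating submodularity. A symmetric choice, $A=\{c_3\}$, $B=\{c_2,c_3\}$, $x=c_1$, works equally well, with gains $0.25$ versus $0.475$.

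There is no deep obstacle here: the proof is a finite computation. The only points requiring care are (i) ensuring the two chosen paths share no edge, so that the product form $1-(1-p_1)(1-p_2)$ is exact rather than an approximation, and (ii) adopting the convention — consistent with the caption of Figure~\ref{fig:nonsubmodular} and with the reduction used in Theorem~\ref{th:np_hard} — that an unspecified edge--catalyst pair has conditional probability $0$. With those two conventions fixed, the displayed inequality is immediate.
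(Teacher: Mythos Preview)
Your proposal is correct and uses the same counter-example instance from Figure~\ref{fig:nonsubmodular} as the paper, computing the identical four reliability values $0$, $0$, $0.3$, $0.475$. The only cosmetic difference is the choice of triple: the paper takes $A=\{c_2\}$, $B=\{c_1,c_2\}$, $x=c_3$ (marginal gains $0$ vs.\ $0.175$), whereas you take $A=\{c_2\}$, $B=\{c_2,c_3\}$, $x=c_1$ (marginal gains $0.3$ vs.\ $0.475$); both violate the diminishing-returns inequality and the argument is essentially the same.
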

\vspace{-2mm}
A set function $f$ is submodular if $f(A\cup\{x\})-
f(A) \ge f(B \cup \{x\}) - f(B)$, for all sets $A \subseteq B$ and all elements $x \notin B$.
Look at the example in Figure~\ref{fig:nonsubmodular}. Let $C_1=\{c_2\}$,
$C_2=\{c_1,c_2\}$. We find that $R\left((s,t)|C_1\right)=0$,
$R\left((s,t)|C_1\cup\{c_3\}\right)=0$, $R\left((s,t)|C_2\right)=0.3$, and $R\left((s,t)|C_2\cup\{c_3\}\right)=0.475$.
Clearly, submodularity does not hold in this example.

\vspace{-5mm}
\section{Single-source single-target: baselines}
\label{sec:algorithm}

In this section, we present two simple baseline approaches and discuss their limitations (Sections \ref{sec:ind} and \ref{sec:hill_climb}).
Then, in Section \ref{sec:ourAlg}, we propose a more sophisticated algorithm that aims at overcoming the weaknesses of such baselines.

\vspace{-3mm}
\subsection{Individual top-$k$ baseline}
\label{sec:ind}

The most immediate approach to our \topcat problem consists of estimating the reliability $R\left((s, t)|\{c\}\right)$ between the source $s$ and the target $t$ attained by each catalyst $c \in C$ individually, and then outputting the top-$k$ catalysts that achieve the highest individual reliability.

\spara{Time complexity.} For each catalyst, we can estimate reliability via Monte Carlo ({\sf MC}) sampling\footnotemark[2]: sample a set of $K$ deterministic graphs from the input uncertain graph, and estimate reliability by summing the (normalized) probabilities of the graphs where the target is reachable from the source. The time complexity of {\sf MC} sampling for a single catalyst is $\bigO(K(n+m))$, where $n$ and $m$ denote the number of nodes and edges in the input uncertain graph, respectively. Hence, the overall time complexity of the \bone baseline is $\bigO(|C|K(n+m)+|C|\log k)$, where the last term is due to top-$k$ search.

\spara{Shortcomings.} The \bone algorithm suffers from both accuracy and efficiency issues.
\begin{itemize}
\setlength\itemsep{0.1em}
\item Accuracy: This baseline is unable to capture the contribution of paths containing different catalysts. For example, in Figure~\ref{fig:nonsubmodular}, the individual reliability attained by each catalyst is $0$.
Thus, if we are to select the top-$2$ catalysts, there will be no way to discriminate among catalysts, which will be picked at random. Instead, in reality, the top-$2$ set is $\{c_1, c_2\}$.
\item Efficiency: To achieve good accuracy, {\sf MC} sampling typically requires around thousands of samples \cite{JLDW11}. Performing such a sampling for each of the $|C|$ catalysts can be quite expensive on large graphs ($|C|$ may be up to the order of thousands as well, see Section \ref{sec:experiments}).
\end{itemize}

\vspace{-5mm}
\subsection{Greedy baseline}
\label{sec:hill_climb}

A more advanced baseline consists of greedily selecting the catalyst that brings the maximum marginal gain to the total reliability, until $k$ catalysts have been selected.
More precisely, assuming that a set $C_1$ of catalysts has been already computed, in the next iteration this \btwo baseline selects a catalyst $c^*$ such that:
\begin{equation*}
c^* = \argmax_{c \in C\setminus C_1} \ [R\left((s,t)|C_1\cup\{c\}\right)-R\left((s,t)|C_1\right)]
\end{equation*}
Note that, since the \topcat problem is neither submodular nor supermodular, this greedy approach does not achieve any approximation guarantees.

\spara{Time complexity.} The time complexity of each iteration of the greedy baseline is $\bigO(|C|K(n+m))$, as we need to estimate the reliability achieved by the addition of each catalyst in order to choose the one maximizing the marginal gain.
For a total of $k$ iterations (top-$k$ catalysts are to be reported), the overall complexity is $\bigO(|C|kK(n+m))$.

\spara{Shortcomings.} While being more sophisticated than \bone, the \btwo baseline still suffers from both accuracy and efficiency issues.
\begin{itemize}
\setlength\itemsep{0.1em}
\item Accuracy: Although \btwo partially solves the accuracy issue related to the presence of paths with multiple catalysts, such an issue is still present at least in the initial phases of this second baseline.
For example, in Figure~\ref{fig:nonsubmodular} the individual reliability attained by each catalyst is $0$.
Therefore, in the first iteration the \btwo algorithm has no information to properly select a catalyst, thus ending up with a completely random choice. If $c_3$ is selected as a first catalyst, then the second catalyst selected would be $c_1$. Thus, \btwo would output $\{c_1,c_3\}$, while the top-$2$ set is $\{c_1,c_2\}$. We refer to this issue as {\em ``cold-start''} problem.
\item Efficiency: {\sf MC} sampling is performed $|C|k$ times. This is more inefficient than the \bone.
\end{itemize}
\begin{example}
We demonstrate the cold-start problem associated with the \btwo baseline with a running example in Figure~\ref{fig:inclusion1}.
Assume top-$k$=3. The individual reliability from $s$ to $t$, attained by each of the four catalysts is $0$.
Therefore, in the first iteration, the \btwo algorithm selects a catalyst uniformly at random, say $c_4$.  Then, the second catalyst selected would be $c_1$;
since $c_1$, in the presence of $c_4$, provides the maximum marginal gain compared to any other catalyst. Similarly, in the third round, \btwo
will select $c_2$ due to its higher marginal gain. Therefore, total reliability achieved by \btwo is: $R\big((s,t)|\{c_4,c_1,c_2\}\big)=1-(1-0.5\times0.5)(1-0.8\times0.7\times0.7)=0.544$.
However, the top-$3$ set is $\{c_1,c_2,c_3\}$, yielding reliability $R\big((s,t)|\{c_1,c_2,c_3\}\big)=0.8[1-(1-0.8)(1-0.7\times0.7)]=0.7184$. This shows the sub-optimality of the greedy baseline.
\end{example}
\begin{figure}[t]
\vspace{-4mm}
\centering
\includegraphics[scale=0.23]{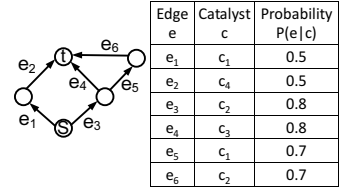}
\vspace{-2mm}
\caption{\scriptsize Difficulties with the \btwo baseline. $P(e|c)=0$ for all
edge-catalyst combinations that are not present in the table}\label{fig:inclusion1}
\vspace{-5mm}
\end{figure}

\vspace{-5mm}
\section{Single-source single-target: \\ proposed method}
\label{sec:ourAlg}

Here we describe the method we ultimately propose to provide effective and efficient solutions to the \topcat problem.

The main intuition behind our method directly follows from the shortcomings of the two baselines discussed above.
Particularly, both baselines highlight how considering catalysts one at a time is less effective.
This can easily be explained as a single catalyst can bring information that is related only to single edges.
Instead, what really matters in computing the reliability between two nodes is the set of paths connecting the source and the target.
This observation finds confirmation in the literature \cite{CWW10}.

\footnotetext[2]{\scriptsize In this paper, we employ {\sf MC} sampling as an oracle
to estimate reliability in uncertain graphs. While more advanced sampling techniques exist, e.g., {\sf RHT} \cite{JLDW11}, recursive stratified sampling \cite{LYMJ14}, our
contributions are orthogonal to them. We omit discussing advanced sampling methods for brevity.}

Motivated by this, we design the proposed method as composed of two main steps.
First, we select the top-$r$ paths exhibiting highest reliability from the source to the target.
Second, we iteratively include these paths in the solution so as to maximize the marginal gain in reliability, while still keeping the constraint on total number of catalysts satisfied.
Apart from the main advantage due to considering paths instead of individual catalysts, designing our algorithm as composed of two separate steps allows us to achieve high efficiency.
Indeed, the first step can be efficiently solved by fast algorithms for finding the top-$r$ shortest paths, while the second step requires \textsf{MC} sampling to be performed in a significantly reduced version of the original graph.

The outline of the proposed method, which we call \ours, is reported in Algorithm \ref{alg:ours}.
In the following we provide the details of each of the two steps.

\vspace{-4mm}
\subsection{Step 1: Most-reliable paths selection}

The first step of the proposed method consists of finding the top-$r$ most reliable paths from the source to the target.
Given an uncertain graph $\mathcal{G}=(V,E,C,P)$, a source node $s \in V$, and a target node $t \in V$, we first convert $\mathcal{G}$ into an uncertain, multigraph $\mathcal{G'}$ (Algorithm~\ref{alg:reliable_path}). For each edge $e = (u,v) \in E$, let $C(e) \subseteq C$ denote the set of all \emph{single} catalysts such that $\forall c\in C(e): P(e|c) > 0$.
Assume $C(e) = \{c_1,c_2,\ldots,c_i\}$.
Then, we add $i$ edges $\{e_1, e_2,$ $ \ldots, e_i\}$ between $u$ and $v$ in the multigraph $\mathcal{G'}$.
To each newly constructed edge $e_j$, $j \in [1..i]$, we assign a single catalyst $C(e_j)=c_j$ and set $P(e_j|c_j)= P(e|c_j)$.
It can be easily noted that $\mathcal{G}$ and $\mathcal{G'}$ are equivalent in terms of our problem.
The construction of $\mathcal{G'}$ only serves the purpose of selecting the top-$r$ most reliable paths from $s$ to $t$ in such a way that, for each intermediate pair of nodes $x,y$ along a path, a single edge (and, thus, a single catalyst) among the many ones possibly created by the $\mathcal{G} \rightarrow \mathcal{G'}$ transformation, is picked up.
The reliability of a path is defined as the product of the edge-probabilities along that path.
\begin{algorithm}[tb!]
\caption{\ours}
\small
\begin{algorithmic}[1]\label{alg:ours}
\REQUIRE Uncertain graph $\mathcal{G} = (V,E,C,P)$, source node $s \in V$, target node $t \in V$, positive integers $k, r$
\ENSURE Subset of catalysts $C^* \subseteq C$
\STATE $\mathcal{P} \gets $ Algorithm \ref{alg:reliable_path} on input $(\mathcal{G}, s, t, r)$
\STATE $\mathcal{P}_1 \gets $ Algorithm \ref{alg:greedy_path} on input $(\mathcal{G}, s, t, k)$, $\mathcal{P}$
\STATE $C^* \gets $catalysts present on $\mathcal{P}_1$
\end{algorithmic}
\vspace{-1mm}
\end{algorithm}
\begin{algorithm}[tb!]
\caption{{\sf Top-$r$ Most Reliable Paths Selection}}
\small
\begin{algorithmic}[1]\label{alg:reliable_path}
\REQUIRE Uncertain graph ${\mathcal G}=(V,E,C,P)$, source node $s \in V$, target node $t \in V$, positive integer $r$
\ENSURE $\mathcal{P}$: top-$r$ most reliable paths from $s$ to $t$
\FORALL{$e \in E$}
\STATE let $C(e)=\{c_1,c_2,\ldots,c_i\}$ be the set of all catalysts s.t. $P(e|c_j)>0$,  $\forall j \in [1..i]$
\STATE replace $e$ by $i$ edges $\{e_1, e_2, \ldots, e_i\}$
\STATE assign probability $P(e_j|c_j)=P(e|c_j)$
\STATE assign edge-weight $W(e_j) = -\log P(e_j|c_j)$
\ENDFOR
\STATE $\mathcal{P} \leftarrow$ top-$r$ shortest paths from $s$ to $t$ in the constructed multigraph
\end{algorithmic}
\vspace{-1mm}
\end{algorithm}
\begin{algorithm}[tb!]
\caption{\steptwo}
\small
\begin{algorithmic}[1]\label{alg:greedy_path}
\REQUIRE Top-$r$ most-reliable path set $\mathcal{P}$ from source $s$ to target $t$, positive integer $k$
\ENSURE A subset of paths ${\mathcal P}_1 \subseteq {\mathcal P}$
\STATE $\mathcal{P}_1 \gets \emptyset$
\WHILE{$|\mathcal{P}| > 0$ \ and \ total \#catalysts in ${\mathcal P}_1$ is $\leq k$}
\STATE $P^* \gets \arg\max_{P \in \mathcal{P}\setminus \mathcal{P}_1} Rel_{\mathcal{P}_1\cup\{P\}}(s,t)$ \\ s.t. \#catalysts in ${\mathcal P}_1 \cup \{P^*\}$ is $\leq k$
\STATE $\mathcal{P}_1 \gets \mathcal{P}_1 \cup \{P^*\}$
\STATE $\mathcal{P} \gets \mathcal{P} \setminus \{P^*\}$
\ENDWHILE
\end{algorithmic}
\vspace{-1mm}
\end{algorithm}

To ultimately compute the top-$r$ most reliable paths, we further convert the uncertain multigraph $\mathcal{G'}$ into an edge-weighted multigraph $\mathcal{G''}$ by assigning a weight $-\log(p_e)$ to each edge $e$ with probability $p_e$ of $\mathcal{G}'$.
This way, the top-$r$ most reliable paths in $\mathcal{G'}$ will correspond to the top-$r$ shortest paths in $\mathcal{G''}$.
To compute the top-$r$ shortest paths in $\mathcal{G''}$, we apply the well-established Eppstein's algorithm  \cite{E98},
which has time complexity $\bigO(|C|m + nlogn + r)$.

\spara{Space complexity.} We note that both $\mathcal{G'}$ and $\mathcal{G''}$ can have size at most $|C|$ times more than
the size of the original graph $\mathcal{G}=(V,E,C,P)$. This is because in Algorithm~\ref{alg:reliable_path}, each edge $e$ of $\mathcal{G}$
is replaced by $C(e)$ edges in $\mathcal{G'}$ and $\mathcal{G''}$ (lines 2-3),
where $C(e)$ denotes the set of all catalysts such that $\forall c \in C(e), P(e|c) > 0$. Clearly, $C(e)\le |C|$. Therefore, both $\mathcal{G'}$ and $\mathcal{G''}$
can have at most $|E||C|=m|C|$ edges, while still having the same number of nodes as in the original graph. In summary, the size of $\mathcal{G'}$ and $\mathcal{G''}$
is {\em linear} in that of the original graph and in the number of catalysts. Based on our experimental results, this adds a very small overhead to the overall
space requirement (see Section~\ref{sec:experiments}).

\spara{Choice of $r$.} The number $r$ of paths is an input parameter which constitutes a knob to tradeoff between efficiency and accuracy (a larger $r$ leads to higher accuracy and lower efficiency).
In general, its choice depends on the input graph and the number of top-$k$ catalysts.
An easy yet effective way to set it is to observe when the inclusion of the top-$(r+1)$-th reliable path does not tangibly increase the reliability given by the top-$r$ paths.
We provide experimental results on selecting $r$ in Section~\ref{sec:experiments}.

\vspace{-3mm}
\subsection{Step 2: Iterative path inclusion}

The second step of our \ours method aims at selecting a proper subset from the top-$r$ most-reliable path set so as to maximize reliability between source and target nodes, while also meeting the constraint on the number of output catalysts.
Denoting by $Rel_{\mathcal{P}}(s,t)$ the reliability between $s$ and $t$ in the subgraph induced by a path set $\mathcal{P}$, this step formally corresponds to the following problem:
\begin{problem} [\ipi]
Given set $\mathcal{P}$ of top-$r$ most reliable paths from $s$ to $t$ in multigraph $\mathcal{G'}$, find a path set $\mathcal{P}^* \subseteq \mathcal{P}$
such that:
\vspace{-3mm}
\begin{align}
\mathcal{P}^* = \argmax_{\mathcal{P}_1 \subseteq \mathcal{P}} Rel_{\mathcal{P}_1}(s,t) \nonumber \\
\text{subject to}\quad | \textstyle{\bigcup_{e\in \mathcal{P}_1} C(e)} | \le k
\vspace{-2mm}
\end{align}
\vspace{-4mm}
\label{prob:path}
\end{problem}

\vspace{-2mm}
The \ipi problem  can be shown to be  \NP-hard via  a reduction from \maxk.
The proof is analogous to the one in Theorem \ref{th:np_hard}, we thus omit it.
\begin{theorem}
Problem \ref{prob:path} is \NP-hard.
\label{th:np_path}
\end{theorem}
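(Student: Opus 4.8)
The plan is to mimic the reduction from \maxk used in the proof of Theorem~\ref{th:np_hard}, adapting the gadget so that the budget constraint falls naturally on the number of distinct catalysts appearing on a set of selected source-to-target paths, rather than on a freely chosen catalyst set. Given a \maxk instance with universe $U$ (with $Z=|U|$) and subsets $\mathcal{S}=\{S_1,\ldots,S_h\}$, I would build an uncertain multigraph $\mathcal{G'}$ with source $s$ and target $t$, one ``hub'' node $u_i$ for each element $i\in U$, and arrange edges so that the only paths from $s$ to $t$ are length-two (or length-three) paths passing through exactly one hub $u_i$. The edge into the hub, say $(s,u_i)$, would be the ``coverage'' edge: for every set $S_j$ containing element $i$, we create a parallel copy of $(s,u_i)$ carrying catalyst $c_j$ with probability $1$. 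The edge out of the hub, $(u_i,t)$, carries a single common catalyst $c$ with probability $p<1$. Setting the budget to $k+1$ in the \ipi instance (and then observing, exactly as in Theorem~\ref{th:np_hard}, that $c$ must be used) forces the remaining budget of $k$ catalysts to be spent on choosing which $c_j$'s to activate; the reliability $Rel_{\mathcal{P}_1}(s,t)$ of the chosen path collection is then a strictly increasing function of the number of hubs $u_i$ reachable, i.e.\ of the number of covered elements, because the hub-paths are vertex-disjoint except at $s$ and $t$ and each contributes an independent success probability $p$.

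The key steps, in order, are: (1) describe the construction of $\mathcal{G'}$, $s$, $t$, the path set $\mathcal{P}$ (which I would take to be the set of all these $s$--$u_i$--$t$ paths, one per (element, covering-set) pair, so that $r$ is polynomially bounded), and the budget $k+1$; (2) argue that any feasible $\mathcal{P}_1$ achieving positive reliability must include catalyst $c$ (otherwise no path is complete), leaving budget $k$ for the coverage catalysts; (3) show that, among the coverage edges, choosing catalysts $\{c_{j_1},\ldots,c_{j_k}\}$ makes hub $u_i$ reachable iff $i$ is covered by $S_{j_1}\cup\cdots\cup S_{j_k}$, and that one copy per hub suffices so the budget is exactly the number of chosen sets; (4) compute $Rel_{\mathcal{P}_1}(s,t)=1-(1-p)^{(\text{number of reachable hubs})}$ using vertex-disjointness, note this is strictly monotone in the number of covered elements, and conclude that an optimal \ipi solution yields an optimal \maxk solution and vice versa. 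Since all of this is polynomial-time, \NP-hardness of Problem~\ref{prob:path} follows.

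The main obstacle — and the reason this is not literally identical to Theorem~\ref{th:np_hard} — is that the budget in \ipi counts distinct catalysts \emph{on the chosen paths}, so I must be careful that adding extra parallel copies or extra hops does not secretly consume budget or create alternative cheap paths that short-circuit the coverage logic. In particular I need the out-edges $(u_i,t)$ to all share the \emph{same} catalyst $c$ (so they cost only $1$ from the budget regardless of how many hubs are reached), and I need the coverage edges for a fixed hub $u_i$ under different sets $S_j$ to be genuine parallel alternatives so that selecting any single covering $c_j$ already makes $u_i$ reachable with probability $1$ — exactly the ``one catalyst suffices per edge'' phenomenon exploited in Theorem~\ref{th:np_hard}. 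Once the gadget is set up so that the catalyst-count budget on $\mathcal{P}_1$ maps cleanly to ``$c$ plus $k$ chosen sets,'' the rest is the same disjoint-paths monotonicity argument, which is why the paper is comfortable omitting it; I would present only the construction and the two-step budget-accounting argument explicitly.
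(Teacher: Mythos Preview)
Your proposal is correct and follows essentially the same approach the paper intends: the paper explicitly states that the proof is analogous to Theorem~\ref{th:np_hard} via \maxk and omits the details, and your reduction is precisely that analogue adapted to the path-selection setting. Your length-two gadget is in fact a mild simplification of the length-three gadget used in Theorem~\ref{th:np_hard}, but the coverage-to-catalyst correspondence, the forced inclusion of the common catalyst $c$, and the disjoint-paths monotonicity argument are identical in spirit.
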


\vspace{-3mm}
\spara{Algorithm.}
We design an efficient greedy algorithm (Algorithm~\ref{alg:greedy_path}) for the \ipi problem.
At each iteration, we add a path $P^*$ to the already computed path set $\mathcal{P}_1$ which brings the maximum marginal gain in terms of reliability.
While selecting path $P^*$, we also ensure that the total
number of catalysts used in the paths $\mathcal{P}_1\cup\{P^*\}$ is no more than $k$. The algorithm terminates either when there is no path left in the top-$r$ most reliable path set $\mathcal{P}$, or no more paths can be added without violating the budget $k$ on the number of catalysts.
We report the catalysts present in $\mathcal{P}_1$ as our final solution.
If the total number of catalysts present in $\mathcal{P}_1$ is $k'<k$, additional $k-k'$ catalysts that are not in $\mathcal{P}_1$ can be selected with some proper criterion (e.g., frequency on the non-selected paths).

Next, we demonstrate our \steptwo algorithm with the previous running example.
\begin{example}
In Figure~\ref{fig:inclusion11}, which is same as Figure~\ref{fig:inclusion1},
we have 3 paths from $s$ to $t$, i.e., $P_1: e_1e_2$, $P_2:e_3e_4$, and $P_3:e_3e_5e_6$.
Assume that there is a budget constraint of 3 catalysts.
In the first iteration, we select the path $P_2$ since it has the highest reliability compared to the two other paths.
In the second iteration, $P_1$ and $P_2$ together have higher reliability than $P_2$ and $P_3$.
However, the former combination requires 4 catalysts, thus violating the constraint.
Hence, we select $P_2$ and $P_3$.
After that, the algorithm terminates as no more path can be included without violating the constraint on catalysts.
\end{example}

\vspace{-2mm}
\spara{Approximation guarantee.} The \steptwo algorithm achieves approximation guarantee under some assumptions.
If the top-$r$ most reliable paths are node-disjoint (except at source and target nodes), \steptwo exhibits an approximation ratio at least $\frac{1}{r}$.
\begin{figure}[tb!]
\subfigure [\small 3-paths from $s$ to $t$] {
\includegraphics[scale=0.145]{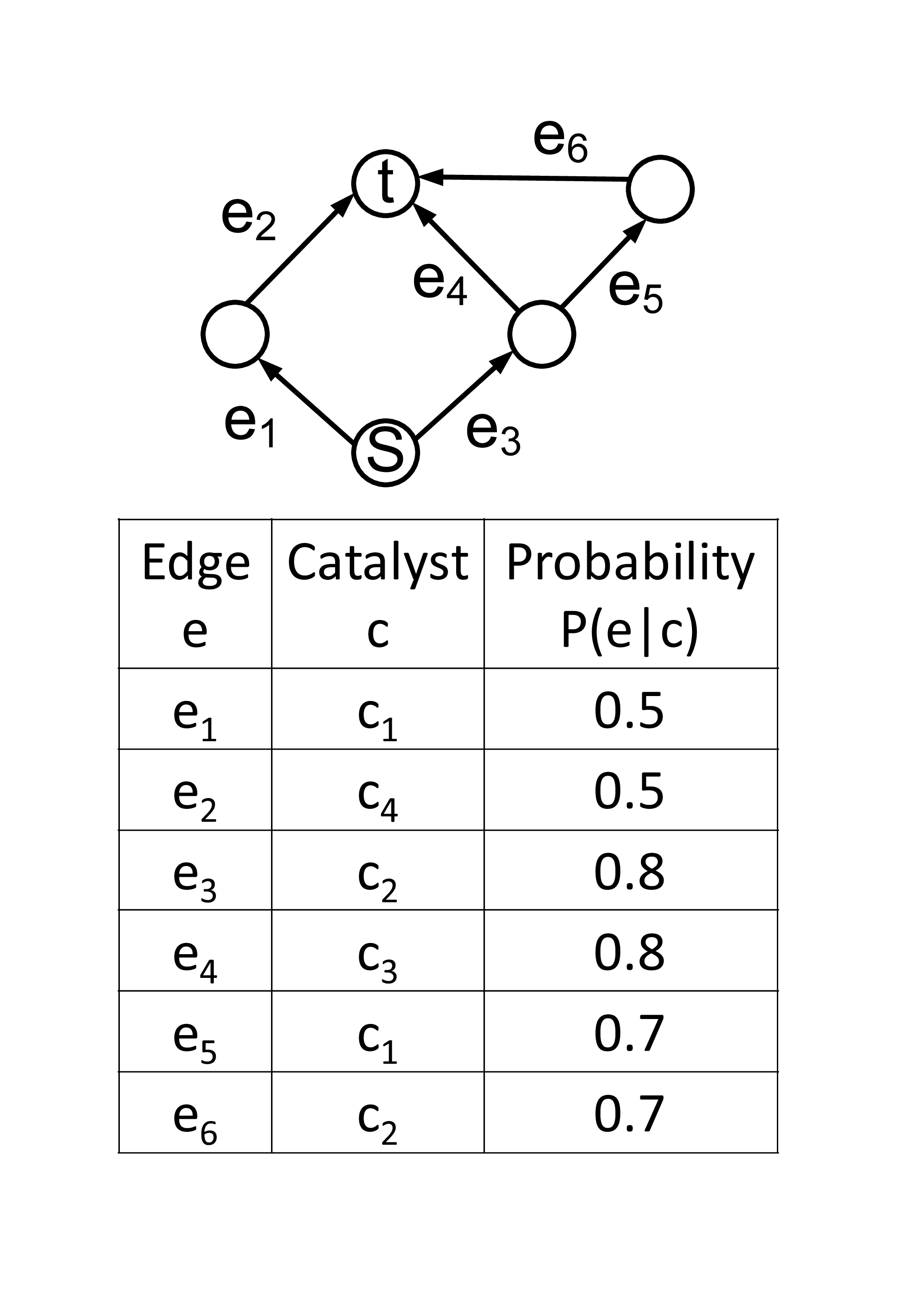}
\label{fig:inclusion11}
}
$\quad$
\subfigure [\small 1$^\text{st}$ Iteration] {
\includegraphics[scale=0.145]{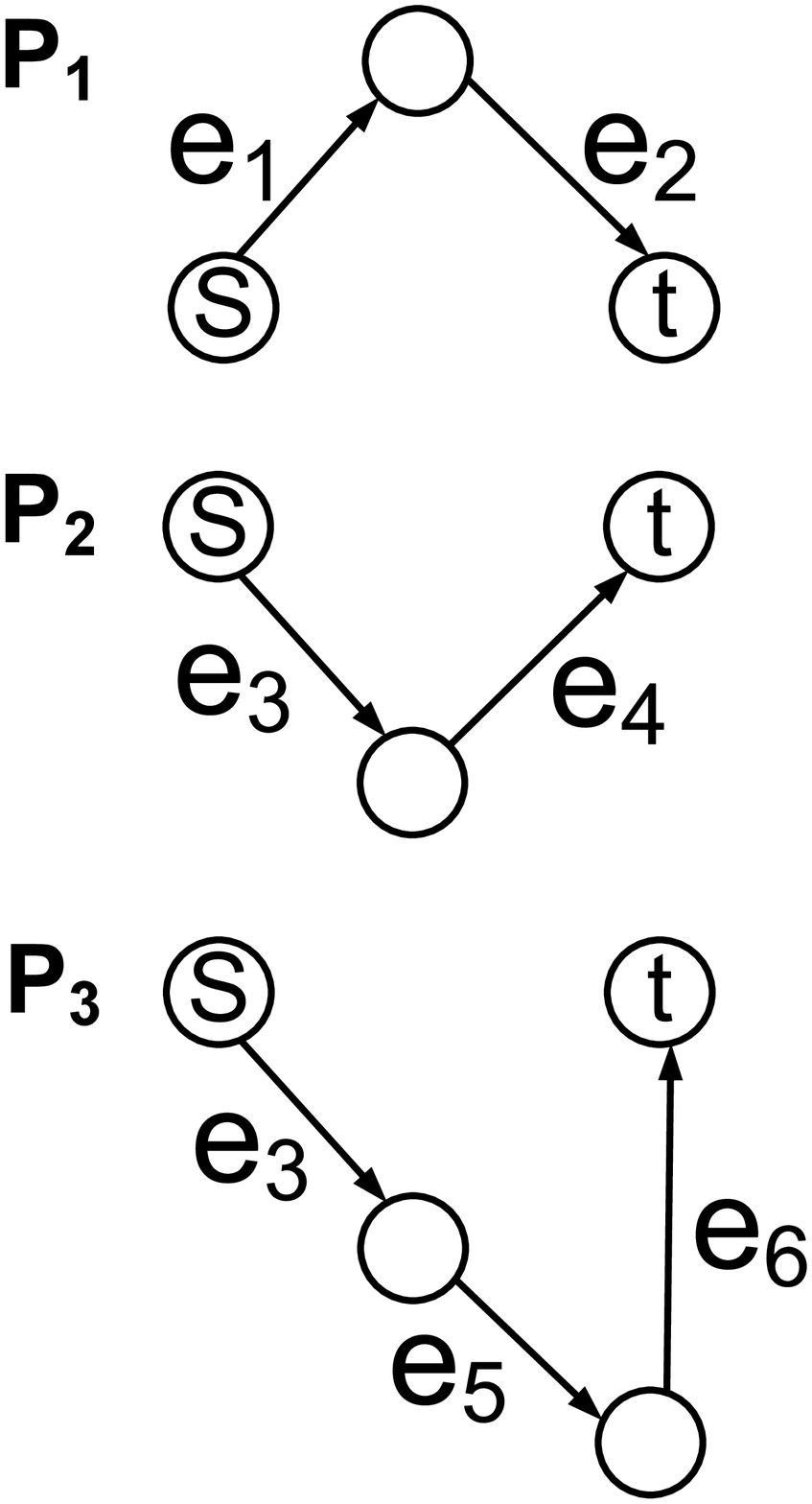}
\label{fig:inclusion2}
}
$\quad$
\subfigure [\small 2$^\text{nd}$ Iteration] {
\includegraphics[scale=0.145]{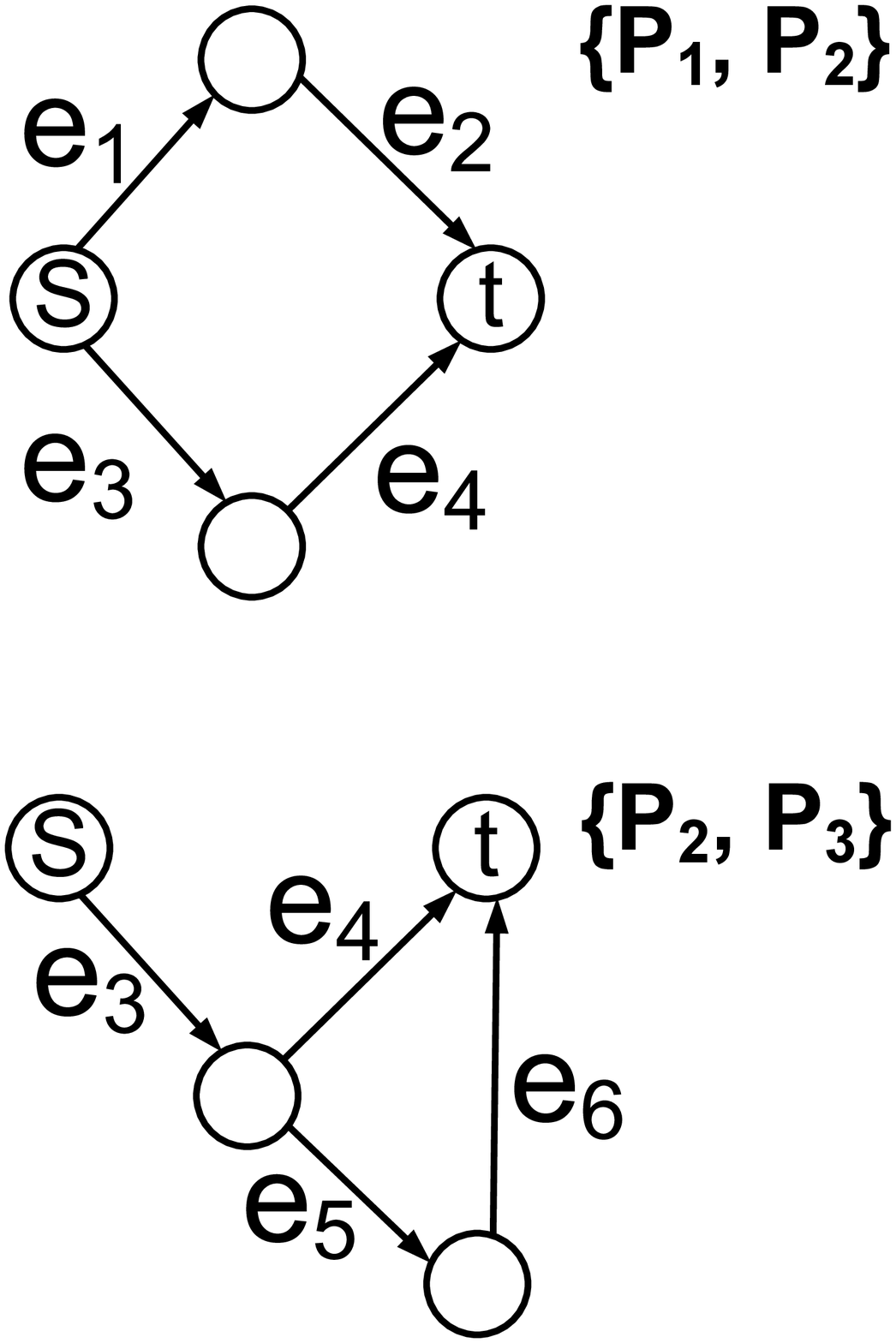}
\label{fig:inclusion3}
}
\vspace{-3mm}
\caption{\scriptsize A demonstration of the \steptwo algorithm.}
\label{fig:iter_path_inc}
\vspace{-5mm}
\end{figure}

\begin{theorem}
The \steptwo algorithm, under the assumption that the top-$r$ most reliable paths are node-disjoint,
achieves an approximation factor of:
\begin{align}
\displaystyle \frac{1}{k_{Rel}}\left(1-\left(\frac{K_C-k_{Rel}}{K_C}\right)^{k_C}\right),
\end{align}
\vspace{-4mm}
where

\begin{align}
& K_C = \max_{\mathcal{P}_1\subseteq\mathcal{P}}\left \{\left|\mathcal{P}_1\right|:\left|C\left(\mathcal{P}_1\right)\right|\le k \right \} & \\
& k_C = \min_{\mathcal{P}_1\subseteq\mathcal{P}}\left \{\left|\mathcal{P}_1\right|:\left|C\left(\mathcal{P}_1\right)\right|\le k\} \quad \text{and} \right. \nonumber & \\
& \left. \qquad \qquad \forall P \in \mathcal{P}\setminus\mathcal{P}_1, \quad |C(\mathcal{P}_1\cup\{P\})|> k \right \} &
\end{align}
\begin{align}
& k_{Rel} = 1-\min_{P \in \mathcal{P}} \frac{Rel_{\mathcal{P}}(s,t)-Rel_{\mathcal{P}\setminus\{P\}}(s,t)}{Rel_{\{P\}}(s,t)}.&
\end{align}
\label{th:guarantee}
\end{theorem}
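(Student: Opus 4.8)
\emph{Proof plan.} The plan is to run the classical greedy analysis for \maxk, but to let the three quantities $k_{Rel}$, $K_C$, $k_C$ absorb the two extra sources of slack present here: the non-submodularity of reliability, and the fact that the constraint on the number of catalysts is a knapsack-type, not a pure cardinality, constraint on the selected paths. First I would use node-disjointness of the top-$r$ path set $\mathcal{P}$ (apart from $s$ and $t$) to get a closed form: writing $r_Q=Rel_{\{Q\}}(s,t)$ for the product of edge-probabilities along $Q$, for every $\mathcal{S}\subseteq\mathcal{P}$ one has $Rel_{\mathcal{S}}(s,t)=1-\prod_{Q\in\mathcal{S}}(1-r_Q)$, so the marginal gain of adding a path $P\notin\mathcal{S}$ is $\Delta_{\mathcal{S}}(P)=r_P\prod_{Q\in\mathcal{S}}(1-r_Q)$. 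From this I would record two facts: (i) since $\mathcal{S}\subseteq\mathcal{P}\setminus\{P\}$ forces $\prod_{Q\in\mathcal{S}}(1-r_Q)\ge\prod_{Q\in\mathcal{P}\setminus\{P\}}(1-r_Q)\ge 1-k_{Rel}$ by the very definition of $k_{Rel}$, we get the weak-submodularity bound $\Delta_{\mathcal{S}}(P)\ge(1-k_{Rel})\,r_P$ uniformly in $\mathcal{S}$; and (ii) monotonicity of $Rel_{\cdot}(s,t)$ together with the union bound $1-\prod_{Q}(1-r_Q)\le\sum_{Q}r_Q$.

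\textbf{Reading off $K_C$ and $k_C$.} Let $\mathcal{P}^*$ be an optimum of \ipi and $\mathrm{OPT}=Rel_{\mathcal{P}^*}(s,t)$. Because $|C(\mathcal{P}^*)|\le k$, the definition of $K_C$ gives $|\mathcal{P}^*|\le K_C$. On the other side, \steptwo stops only when the current path set is maximal feasible, and by definition every maximal feasible path set has at least $k_C$ paths; hence \steptwo performs at least $k_C$ iterations and adds exactly one path per iteration. Writing $\mathcal{S}_i$ for the path set after $i$ iterations ($\mathcal{S}_0=\emptyset$), $g_i=Rel_{\mathcal{S}_i}(s,t)$, and $P^*_i$ for the path chosen at iteration $i$ (the addable path of maximum marginal gain), the crux is the per-iteration progress bound
\[
g_{i+1}-g_i\;\ge\;\frac{1}{K_C}\bigl(\mathrm{OPT}-k_{Rel}\,g_i\bigr),\qquad 0\le i<k_C .
\]

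To establish it I would begin, as in the standard proof, from $\mathrm{OPT}-g_i\le Rel_{\mathcal{S}_i\cup\mathcal{P}^*}(s,t)-g_i\le\sum_{P\in\mathcal{P}^*\setminus\mathcal{S}_i}\Delta_{\mathcal{S}_i}(P)$, using monotonicity and the union-bound expansion of the first step; then split the OPT-paths outside $\mathcal{S}_i$ into those still \emph{addable} to $\mathcal{S}_i$ and those \emph{blocked} because $\mathcal{S}_i$ has already consumed catalysts. For an addable $P$ we have $\Delta_{\mathcal{S}_i}(P)\le\Delta_{\mathcal{S}_i}(P^*_i)=g_{i+1}-g_i$ by greedy's choice rule, and there are at most $|\mathcal{P}^*|\le K_C$ of them, which produces the $1/K_C$ factor; for a blocked $P$ we only have $\Delta_{\mathcal{S}_i}(P)\le r_P$, and here the weak-submodularity bound lets one charge the ``lost'' contribution of the blocked OPT-paths against the reliability $g_i$ already accumulated, losing at most a factor $k_{Rel}$. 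Once the displayed inequality holds, set $d_i=\mathrm{OPT}-k_{Rel}\,g_i$; it reads $d_{i+1}\le(1-\tfrac{k_{Rel}}{K_C})\,d_i$, so iterating from $d_0=\mathrm{OPT}$ over the $\ge k_C$ iterations gives $d_{k_C}\le\bigl(\tfrac{K_C-k_{Rel}}{K_C}\bigr)^{k_C}\mathrm{OPT}$, i.e.\ $g_{k_C}\ge\frac{1}{k_{Rel}}\bigl(1-(\tfrac{K_C-k_{Rel}}{K_C})^{k_C}\bigr)\mathrm{OPT}$; since $Rel$ is monotone and \steptwo outputs a superset of $\mathcal{S}_{k_C}$, its value is at least $g_{k_C}$, which is the claimed ratio. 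The degenerate case $k_{Rel}=0$ is obtained by letting $k_{Rel}\to0$ (the bound tends to $k_C/K_C$), and a Bernoulli inequality confirms the whole expression never exceeds $1$.

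\textbf{Main obstacle.} The only nontrivial step is the per-iteration bound, and within it the handling of OPT-paths that greedy is forbidden to take because of shared catalysts: one must argue that the reliability greedy has already banked ``pays'' for these blocked paths up to the factor $k_{Rel}$, which is precisely why $k_{Rel}$ ends up multiplying $g_i$ (and hence why a $\tfrac{1}{k_{Rel}}$ prefactor surfaces in the final ratio). Node-disjointness is what makes this feasible at all, since it turns $Rel$ into a product and reduces the catalyst constraint to the clean path-counting parameters $K_C$ and $k_C$; relaxing it would break both the closed form and the averaging argument.
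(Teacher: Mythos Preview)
The paper's proof takes a completely different and much shorter route: it shows that, under node-disjointness, the objective $Rel_{\mathcal{P}_1}(s,t)$ is \emph{submodular} in the path set (Lemma~2) and that the catalyst-count constraint $|C(\mathcal{P}_1)|\le k$ is itself a submodular set function (Lemma~1). The \ipi problem is then an instance of \textsc{Submodular-Cost Submodular Knapsack} (SCSK), and the stated ratio---with $k_{Rel}$ as the total curvature of the objective and $K_C,k_C$ as the largest/smallest maximal feasible set sizes---is quoted directly from Iyer--Bilmes~[IB13]. No per-iteration inequality is derived in the paper at all.

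Your direct greedy analysis is a legitimate alternative strategy, and the recurrence $d_{i+1}\le\bigl(1-\tfrac{k_{Rel}}{K_C}\bigr)d_i$ does yield the claimed ratio once the per-iteration bound holds. But the per-iteration bound is precisely where the content is, and your argument for it is incomplete. The ``addable'' OPT-paths are handled correctly via the greedy choice rule. For the ``blocked'' OPT-paths you write only that ``the weak-submodularity bound lets one charge the lost contribution against $g_i$, losing at most a factor $k_{Rel}$''---but you give no mechanism for this charging, and it does not follow from the single inequality $\Delta_{\mathcal{S}}(P)\ge(1-k_{Rel})\,r_P$ alone. That inequality lower-bounds marginal gains; what you need is an \emph{upper} bound on $\sum_{\text{blocked }P}\Delta_{\mathcal{S}_i}(P)$ in terms of $k_{Rel}\cdot g_i$ (or something that collapses to the displayed inequality), and nothing you wrote produces one. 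This is exactly the work that the SCSK analysis in~[IB13] does, and re-deriving it from scratch requires substantially more than a sentence.

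There is also a conceptual slip: you describe $k_{Rel}$ as absorbing ``the non-submodularity of reliability'', but under the node-disjoint assumption the objective \emph{is} submodular (this is Lemma~2 in the paper), and $k_{Rel}$ is its total curvature in the standard Conforti--Cornu\'ejols sense. Recognising this would point you straight to the curvature-based SCSK machinery rather than an ad-hoc charging scheme.
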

\vspace{-6mm}
\begin{proof}
See Appendix.
\end{proof}
In the above approximation-guarantee result, $K_C$ and $k_C$, respectively, denote maximum and minimum size of the maximal feasible path set that can be derived from $\mathcal{P}$.
$k_{Rel}$ denotes the curvature of our optimization function, which can be shown to be submodular when paths in $\mathcal{P}$ are node-disjoint (see Appendix).
Hence, in this case $k_{Rel} \in (0,1)$.
Assuming that $\mathcal{P}$ contains at least one path having less than $k$ catalysts, then in the worst case the approximation ratio is $ \ge \frac{1}{K_C} \ge \frac{1}{r}$ (where $r$ is the total number of paths in the top-$r$ path set $\mathcal{P}$).
In other words, the approximation ratio is guaranteed to be at least $\frac{1}{r}$.

\spara{Time complexity.} Let us denote by $n'$ and $m'$ the number of nodes and edges, respectively, in the subgraph induced by the top-$r$ most-reliable
path set $\mathcal{P}$.
At each iteration, our iterative path selection algorithm performs
{\sf MC} sampling over the subgraph induced by the selected paths.
The number of iterations is at most $r$. Thus, if $K$ is the number of samples used in each {\sf MC} sampling, the iterative path selection algorithm takes
$\bigO(r^2K(n'+m'))$ time.
Including the time due to the first step of selecting the top-$r$ most reliable paths, we get that the overall time complexity of the proposed \ours algorithm is $\bigO(|C|m + n\log n + r^2K(n'+m'))$.
We point out that the subgraph induced by the top-$r$ most reliable paths is typically much smaller than
the input graph $\mathcal{G}$.
Thus, our \ours method is expected to be much more efficient than the two baselines introduced earlier.
Experiments in Section \ref{sec:experiments} confirm this claim.

\vspace{-4mm}
\section{Multiple sources and targets}
\label{sec:algorithm_multi}

Real-world queries often involve sets of source and/or target nodes,  instead of a single source-target pair.
As an example, the topic-aware information cascade problem \cite{aralwalker} asks for a {\em set} of early adopters who maximally
influence a given {\em set} of target customers.
Motivated by this, in the following we discuss problems and algorithms for the case where multiple nodes can be provided as input.
Such a generalization opens the stage to various formulations of the problem.
Here we focus on two variants: (1) maximizing an aggregate function over all possible source-target pairs (Section~\ref{sec:agg_set}),
and (2) maximizing connectivity among all query nodes (Section~\ref{sec:connectivity}). Note that our first problem formulation has a notion of ``clique'' connectivity, as it applies an aggregate function over all pairs of query nodes.

\vspace{-4mm}
\subsection{Maximizing aggregate functions}
\label{sec:agg_set}
%

We formulate our problem as follows.

\begin{problem} [top-$k$ catalysts w/ aggregate]
Given an uncertain graph $\mathcal{G}=(V,E,C,P)$, a source set $S \subset V$, a target set $T \subset V$, and a positive integer $k$, find a set $C^*$ of catalysts, having size $k$ that maximizes an aggregate function $F$ over conditional reliability of all source-target pairs:
\vspace{-1mm}
\begin{align}
& \displaystyle C^* = \argmax_{C_1 \subseteq C} \underset{{\langle s, t \rangle \in S \times T}}{F}\big(R\left((s,t)|C_1\right)\big) \nonumber & \\
& \text{subject to}\quad |C_1|=k. &
\vspace{-2mm}
\end{align}
\vspace{-4mm}
\label{prob:agg_set}
\end{problem}
\vspace{-2mm}
Being a generalization of the \topcat problem, Problem \ref{prob:agg_set} can easily be recognized as \NP-hard.
In this work we consider three commonly-used aggregate functions: average, maximum, and minimum.
These aggregates give rise to three variants of Problem \ref{prob:agg_set} which we refer to \topcatavg, \topcatmax, and \topcatmin, respectively.

\vspace{-1mm}
\spara{Motivating examples for multiple sources and targets.}
\vspace{-1mm}
\begin{itemize}
\setlength\itemsep{0.1em}
\item {\bf Average.}
Find the top-$k$ catalysts such that the average reliability over all $\langle s, t \rangle$ pairs is maximized. This is equivalent
to maximization of the sum of reliability over all $\langle s, t \rangle$ pairs.  This problem occurs, e.g., in the topic-aware information cascade scenario
when the campaigner wants to maximize the spread of information to the entire target group.
\item {\bf Maximum.} Find the top-$k$ catalysts such that the reliability of the $\langle s, t \rangle$ pair with the highest reliability is maximized.
In the topic-aware information cascade problem, this is equivalent to the scenario that each early adopter is campaigning a different product of the same campaigner. The campaigner
wants at least one target user to be aware about one of her products (e.g., each target user might be a celebrity user in Twitter).
Therefore, the campaigner would be willing to maximize the spread of information from at least one early adopter to at least one target user.
\item {\bf Minimum.} Find the top-$k$ catalysts such that the reliability of the $\langle s, t \rangle$ pair having the lowest reliability is maximized. In the
topic-aware information cascade setting, this is equivalent to the problem that each early adopter is campaigning a different product of the same campaigner, and the campaigner wants to maximize the minimum spread of her campaign from any of the early adopters to any of her target users. This is motivated, in reality, because only a small percentage of the users who have heard about a campaign will buy the corresponding product.
\end{itemize}

\vspace{-1mm}
\spara{Overview of algorithms.}
In the following, we describe the algorithms that we develop for the aforementioned aggregate functions.
In principle, they follow our earlier two main steps: (1) finding the top-$r$ paths (Algorithm~\ref{alg:reliable_path}),
now between every pair of source and target
nodes, and then (2) iteratively include these paths so as to maximize the marginal gain in regards to the objective function,
while still keeping the constraint on total number of catalysts satisfied (Algorithm~\ref{alg:greedy_path}).
Intuitively, finding the top-$r$ paths between each pair of source and target
nodes is a natural extension to our \ours algorithm, this is because the
aggregate function in Problem~\ref{prob:agg_set}
is defined over all pairs of source-target nodes.
Nevertheless, the exact process somewhat varies
according to the aggregate function at hand, which we shall discuss next. Unless otherwise specified, we assume that
\mbox{$S \cap T = \emptyset$}, that is, source and target sets are non-overlapping.
We will discuss case by case how our algorithms can (easily) handle the case when \mbox{$S \cap T \ne \emptyset$}.

Extending the baselines presented in Sections \ref{sec:ind}--\ref{sec:hill_climb} to multiple query nodes
is instead straightforward (regardless of the aggregate function). We thus omit the details.

\vspace{-2mm}
\subsubsection{Algorithm for maximum reliability}
\label{sec:alg_max}

Our solution for the \topcatmax problem is the most straightforward, compared to both \topcatavg and \topcatmin
problems. First, for each $\langle s, t \rangle$ pair, we identify the top-$r$ most reliable paths. Then, separately for
each $\langle s, t \rangle$ pair, we also apply the \steptwo algorithm to find the top-$k$ catalysts for that pair.
Finally, we select the $\langle s, t \rangle$ pair which attains the maximum reliability.
We report the corresponding top-$k$ catalysts as the solution to the \topcatmax problem.

\spara{Time complexity.} The time required to find the reliable paths for all $\langle s, t \rangle$ pairs is $\bigO\left(|S||T|\left(m+n \log n+r\right)\right)$. Similarly, the time complexity of the iterative path inclusion phase is $\bigO\left(|S||T|r^2\left(n' + m'\right)K\right)$. There is an additional cost $\bigO(|S||T|)$ to find the $\langle s, t \rangle$ pair with the maximum reliability, which is, however, dominated by the time spent in path inclusion.

\spara{Overlapping source and target sets.} If a node $v$ is both in $S$ and in $T$, the above solution will return an arbitrary set $C_1$ of catalysts, since $R\left((v,v)|C_1\right) = 1$, and it will always be considered as the optimal solution. If this behavior is not intended, we eliminate all such nodes in $S \cap T$ before applying our algorithm.

\vspace{-2mm}
\subsubsection{Algorithm for average reliability}
\label{sec:alg_avg}

\vspace{-1mm}
As earlier, we first identify the top-$r$ most reliable paths for each $\langle s, t \rangle$ pair.
However, we are now interested in the {\em average} reliability considering all source and target
nodes, as opposed to that for individual source-target pairs. Thus, we consider {\em all} selected
$|S||T|r$ paths together, and apply the \steptwo algorithm to add paths so as to maximize
the marginal gain in terms of our objective function, while maintaining the budget
$k$ on total number of catalysts. Recall that here our objective function is
$\frac{1}{|S||T|}\sum_{\langle s, t \rangle \in S \times T}\big(R\left((s,t)|C_1\right)\big)$.
Finally, catalysts present in the selected paths are reported as a solution to the \topcatavg problem.

The above steps remain identical, regardless of whether $S$ and $T$
overlap or not.

\spara{Time complexity.} The time required to find the reliable paths for all $\langle s, t \rangle$ pairs is $\bigO\left(|S||T|\left(m+n \log n+r\right)\right)$,
as we apply Eppstein's algorithm for $|S||T|$ times. Next, the time complexity of the iterative path inclusion phase is
$\bigO\left(\left(|S||T|r\right)^2\left(n' + m'\right)K\right)$, where $n'$ and $m'$ are the number of nodes and edges of the subgraph induced by the reliable paths, and $K$ is the number of samples used in each {\sf MC} sampling. Note that the time required for
iterative path inclusion of \topcatavg is higher than that for the \topcatmax, since we consider
all $|S||T|r$ paths together in the former algorithm.

\vspace{-2mm}
\subsubsection{Algorithm for minimum reliability}
\label{sec:alg_min}

\vspace{-1mm}
We start again by finding the top-$r$ most reliable paths for each $\langle s, t \rangle$ pair.
However, applying the \steptwo algorithm, in this case, is more subtle.
Specifically, if there are many $\langle s, t \rangle$ pairs and a limited budget $k$ of catalysts, spending too many catalysts for a single $\langle s, t \rangle$ pair might prevent us from finding a path for another pair.
This way there will be pairs with conditional reliability very low, thus the solution to the \topcatmin problem, i.e., the pair exhibiting minimum conditional reliability, would be quite poor.
To mitigate this issue, we consider an additional step where we find a minimum set of catalysts, before applying the \steptwo algorithm. The subsequent steps remain instead identical, regardless of whether $S$ and $T$ overlap or not.

\spara{Finding minimum set of catalysts.} The objective of this step is to select a minimum set of catalysts which ensure that at least one path
for every $\langle s, t \rangle$ pair exists.
This step corresponds to the following problem.
\begin{problem}[\mincat]
Given a source set $S$, a target set $T$, a set of paths ${\mathcal P}$, an uncertain graph ${\mathcal G} = (V, E, C, P)$ induced by ${\mathcal P}$, find the
smallest set $C^*\subseteq C$ of catalysts, such that the conditional
reliability $R\left((s,t)|C^*\right)$ for each $\langle s, t \rangle$ pair is larger than zero:
\begin{align}
& \displaystyle C^* = \argmin_{C_1 \subseteq C} |C_1| \nonumber & \\
& \text{subject to} \quad  R\left((s,t)|C_1\right) > 0, \ \forall \langle s, t \rangle \in S \times T. &
\vspace{-2mm}
\end{align}
\vspace{-2mm}
\label{prob:col_set}
\end{problem}
\vspace{-2mm}

\vspace{-4mm}
\begin{theorem}
Problem \ref{prob:col_set} is \NP-hard.
\end{theorem}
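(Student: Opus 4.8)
The plan is to establish \NP-hardness of Problem~\ref{prob:col_set} by a reduction from the classical \setcov problem. Recall that a \setcov instance consists of a universe $U=\{a_1,\ldots,a_n\}$ and a family $\mathcal{S}=\{S_1,\ldots,S_m\}$ of subsets of $U$ with $\bigcup_j S_j=U$, and asks for a subfamily of minimum cardinality whose union is $U$. Given such an instance, I would construct a \mincat instance as follows: take a single source node $s$ and one target node $t_i$ for every element $a_i\in U$, so that $S=\{s\}$, $T=\{t_1,\ldots,t_n\}$, and $S\times T$ consists of exactly the $n$ pairs $\langle s,t_i\rangle$. For every $a_i$ add the directed edge $(s,t_i)$, and introduce one catalyst $c_j$ for every set $S_j$. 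Following the convention already used in the proof of Theorem~\ref{th:np_hard}, set $P((s,t_i)|c_j)>0$ (say, probability $1$) whenever $a_i\in S_j$, and let $(s,t_i)$ be unrealizable (probability $0$) under every other catalyst. Finally, let $\mathcal{P}$ be the collection of the $n$ single-edge paths $s\to t_i$; every node and every edge of the constructed graph lies on some path of $\mathcal{P}$, so $\mathcal{G}$ is indeed the uncertain graph induced by $\mathcal{P}$. The construction is clearly polynomial.

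The correctness argument is immediate. For any $C_1\subseteq C$, the edge $(s,t_i)$ has positive existence probability under $C_1$ if and only if $C_1$ contains some catalyst $c_j$ with $a_i\in S_j$; since $(s,t_i)$ is the only path from $s$ to $t_i$, this is exactly the condition $R\left((s,t_i)|C_1\right)>0$. Hence $R\left((s,t_i)|C_1\right)>0$ holds simultaneously for all $i\in[1..n]$ if and only if the subfamily $\{S_j:c_j\in C_1\}$ covers $U$. Consequently a catalyst set is feasible for the constructed \mincat instance precisely when the corresponding subfamily of $\mathcal{S}$ is a set cover, and minimizing $|C_1|$ coincides with minimizing the size of the set cover. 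An optimal \mincat solution therefore yields an optimal \setcov solution and vice versa, which establishes \NP-hardness.

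The only points requiring care — rather than a genuine obstacle — are: (i) restricting the target set (or, symmetrically, the source set) so that the all-pairs requirement over $S\times T$ degenerates to exactly one constraint per universe element, thereby avoiding infeasible ``cross'' pairs that would arise with multiple sources and multiple targets; (ii) verifying that the graph produced is legitimately induced by $\mathcal{P}$, which holds since $\mathcal{P}$ is taken to be the set of single-edge $s$-$t_i$ paths; and (iii) respecting the $(0,1]$-valued probability convention by treating a $0$ entry, exactly as elsewhere in the paper, as the absence of the corresponding edge--catalyst pair. A mirror-image construction using a single target and one source per element works equally well.
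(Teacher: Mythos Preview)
Your proposal is correct and follows essentially the same route as the paper: both reduce from \setcov by letting catalysts correspond to sets and requiring one path per universe element, so that a feasible catalyst set is exactly a set cover. Your write-up is more detailed and explicit than the paper's brief sketch (which just notes that the instance has single-catalyst paths and that covering one path per $\langle s,t\rangle$ pair is \setcov), but the underlying construction and argument are the same.
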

\begin{proof}
\NP-hardness can easily be verified by noticing that the \setcov problem can be reduced to a specific instance of \mincat where each path in ${\mathcal P}$ can be formed using a single catalyst. In this case, in fact, we ask for the
minimum number of catalysts required to cover at least one path of every $\langle s, t \rangle$ pair, which exactly corresponds to what \setcov asks for.
\end{proof}

\vspace{-3mm}
\spara{Algorithm for \mincat.} We design an algorithm to provide effective solutions to \mincat which consists of four steps:
\begin{itemize}
\setlength\itemsep{0.1em}
\item {\bf Step 1.} Mark all $\langle s, t \rangle$ pairs as disconnected.
\item {\bf Step 2.} For all disconnected $\langle s, t \rangle$ pairs, find a path $P$ that connects one of such $\langle s, t \rangle$ pairs, while adding the minimum number of new catalysts to the set of already selected catalysts.
\item {\bf Step 3.} Mark that $\langle s, t \rangle$ pair as connected. Include the catalysts in path $P$ to the set of selected catalysts.
\item {\bf Step 4.} If there is at least one disconnected $\langle s, t \rangle$ pair, go to step 2.
\end{itemize}
\begin{figure}[t]
\centering
\includegraphics[width=.42\columnwidth]{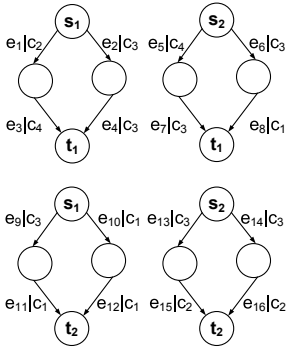}
\vspace{-2mm}
\caption{\scriptsize A demonstration of \topcatmin solution: catalysts that influence the edges are shown in the figure.
However, the corresponding probabilities are not shown. Also, $P(e|c)=0$ for all edge-catalyst combinations not shown.}
\label{fig:min_color}
\vspace{-5mm}
\end{figure}

We report the set of selected catalysts as our minimum set.
If the size of this minimum set is more than $k$, we perform an additional step. From the selected set $C^*$, if a subset $C'$ can be removed, but a path can still be found for all connected  $\langle s, t \rangle$ pairs with the remaining catalysts in $C^* \setminus C'$, then $C'$ is removed from $C^*$. We illustrate our algorithm with an example below.
\begin{example}
As shown in Figure~\ref{fig:min_color}, let us assume that the source set is $S=\{s_1, s_2\}$, and the target set is $T=\{t_1, t_2\}$.
The figure illustrates the top-$2$ most reliable paths for each source-target pair. Assume there is a budget $k=3$ on the
number of output catalysts.
We now apply our algorithm. First, we select the catalyst $c_3$, since this catalyst is sufficient
to have an edge for the $\langle s_1, t_1 \rangle$ pair. Then, we select the catalyst $c_4$ because $\{c_3, c_4\}$ together add an edge for
the  $\langle s_2, t_1 \rangle$ pair. Next, we consider the catalyst $c_1$ in order to have an edge for the pair $\langle s_1, t_2 \rangle$.
At this point, we have already saturated the budget of 3 catalysts: $\{c_1,c_3,c_4\}$, but we are yet to add an edge for the $\langle s_2, t_2 \rangle$ pair.
Thus, we delete $c_4$ from the selected set of catalysts, because this still allows a path for three previously connected source-target pairs.
Finally, we add catalyst $c_2$ to the set. The final set $\{c_1,c_2,c_3\}$ of catalysts allows a path for all source-target pairs.
\end{example}

\vspace{-2mm}
\spara{Time complexity.} In each iteration, we find the path with the smallest number of new catalysts, which requires $\bigO\left(r|S||T|\left(n' + m'\right)\right)$ time. Then,
 we also remove the redundant catalysts, which requires another $\bigO\left(kr|S||T|\left(n' + m'\right)\right)$ time. Since there can be at most $|S||T|$ iterations, overall
 time complexity of our \mincat finding algorithm is $\bigO\left(kr|S|^2|T|^2\left(n' + m'\right)\right)$.

Intuitively, the minimum set finding step ensures that, given large enough budget on catalysts, there will be at least one
path for all $\langle s, t \rangle$ pairs. Therefore, the objective function of the \topcatmin problem will be guaranteed to be larger than zero.
If our
budget has not been exhausted yet and more catalysts can be added, we next apply the \steptwo algorithm as follows. At each iteration,
we find the $\langle s, t \rangle$ pair exhibiting the minimum conditional reliability. We then add a path that maintains the budget, while also maximizing
the marginal gain in reliability for that $\langle s, t \rangle$ pair.
The algorithm terminates when no more paths can be added without exceeding the budget $k$, or all top-$r$ paths for all $\langle s, t \rangle$ pairs have been selected.
\vspace{-4mm}
\subsection{Maximizing connectivity}
\label{sec:connectivity}
In the second variant of the \topcat problem applied to multiple query nodes, we do not distinguish between source and target nodes. All query nodes are considered as peers: the objective of this \topcatconn problem is to find a set of top-$k$ catalysts which maximize the probability that all query nodes are connected in the subgraph induced by edges containing those catalysts.
Therefore, our problem is inspired by the well-established problem of \emph{k-terminal reliability}, whose objective is to \emph{compute} the probability that a given set of nodes remain connected.
An application of \topcatconn problem is
finding a suitable topic list of a thematic scientific event among researchers.
The event would be successful not only when the invitees are experts
on those topics, but also if they can network with each other, that is,
they can find connections (e.g., direct and indirect links formed due to research collaborations)
with other invitees based on those topics \cite{SG10}.
We formally define our problem below.

\begin{problem} [\topcatconn]
Given an uncertain graph $\mathcal{G}=(V,E,C,P)$, a set of query nodes $Q \subset V$, and a positive integer $k$,
find a set $C^*$ of catalysts, with size $k$, that maximizes the probability of nodes in $Q$ being connected only using catalysts in $C^*$:
\vspace{-2mm}
\begin{align}
& C^* = \argmax_{C_1 \subseteq C} \sum_{G \sqsubseteq \mathcal{G}|C_1} [J_G(Q) \times P(G|C_1)] \nonumber & \\
& \text{such that}\quad |C_1|=k. &
\vspace{-2mm}
\end{align}
\vspace{-2mm}
\label{prob:connectivity}
\end{problem}
\vspace{-2mm}
\vspace{-2mm}
In the above statement, $J_G(Q)$ is an indicator function over a possible deterministic graph $G \sqsubseteq \mathcal{G}|C_1$ taking value $1$ if nodes in $Q$ are all connected in $G$, and $0$ otherwise. For simplicity, in directed graphs we consider a weak notion of connectivity, i.e., connectivity disregarding edge-directions.
The extension to strong connectivity is straightforward.

\spara{Algorithm.}
The \topcatconn problem is a generalization of the \topcat basic problem (Problem~\ref{prob:topk_rel_col_set}).
Thus, it can be immediately be recognized as \NP-hard.

To provide high quality results, we design a two-step algorithm whose outline is similar in spirit to the \ours algorithm proposed for \topcat.
As a main difference, however, since our goal in \topcatconn is to maximize {\em connectivity} among a set of peer nodes,
we ask for the top-$r$ minimum {\em Steiner trees} as a first step of the algorithm (rather than top-$r$ most reliable paths between a single source-target pair).
A Steiner tree for a set $Q$ of nodes in a weighted graph is a tree that spans all nodes of $Q$.
A minimum Steiner tree is a Steiner tree whose sum of edge-weights is the minimum. We first apply
the technique  proposed in \cite{DYWQZL07} to find the top-$r$ minimum Steiner trees from an equivalent edge-weighted, multi-graph
$\mathcal{G''}$. We recall that $\mathcal{G''}$ can be obtained from the input uncertain graph $\mathcal{G}$ by following Algorithm~\ref{alg:reliable_path}.
Next, we iteratively include the Steiner trees in our solution so as to maximize the marginal gain in the probability that nodes in $Q$ are connected, while not exceeding the budget on catalysts.

\spara{Time complexity.} The complexity to find the top-$r$ minimum Steiner trees is $\bigO\left(3^{|Q|}n+2^{|Q|}\left(\left(|Q| + \log n\right)n + e\right)\right)$ \cite{DYWQZL07}.
As for \steptwo, the complexity of our iterative tree inclusion method is $\bigO(r^2(n' +
m')K)$, where, we recall, $K$ is the number of samples used in each {\sf MC} sampling, and $n'$ and $m'$ are the number of nodes and edges in
the subgraph induced by the top-$r$ minimum Steiner trees, respectively.
\begin{table} [tb!]
\begin{scriptsize}
\begin{center}
\vspace{-4mm}
\centering
\begin{tabular} { @{}l@{ }|@{  }c@{ }c@{ }c@{ }c@{} }
 &   &   &     & {\textsf{edge probabilities:}} \\
 {\textsf{dataset}} &      {\textsf{nodes}}               &    {\textsf{edges}}                   &      {\textsf{catalysts}}                        & {\textsf{mean, SD, quartiles}} \\ \hline
 {\em DBLP}     &        1\,291\,297  &    3\,561\,816    &     347                               &  0.21, 0.08, \{0.181, 0.181, 0.181\}  \\
 {\em BioMine}  &        1\,045\,414  &    6\,742\,943    &     20                                &  0.27, 0.17, \{0.116, 0.216, 0.363\}\\
 {\em Freebase} &        28\,483\,132 &    46\,708\,421   &     5\,428                            &  0.50, 0.24, \{0.250, 0.500, 0.750\}\\
  \hline
  \end{tabular}
\vspace{-4mm}
\end{center}
\end{scriptsize}
\caption{\scriptsize Characteristics of the uncertain graphs used in the experiments. \label{tab:data}}
\vspace{-5mm}
\end{table} 
\vspace{-4mm}
\section{Experimental evaluation}
\label{sec:experiments}

\vspace{-1mm}
We report empirical results to show accuracy, efficiency, and memory usage of the proposed methods.
We also provide results on information diffusion to demonstrate the applicability of the top-$k$ catalysts identified
by our methods. We report sensitivity analysis by varying all main parameters:
the number of catalysts, reliable paths, query nodes, and the distance between source and target nodes.
The code is implemented in C++ and experiments are performed on a single core of a 100GB, 2.26GHz Xeon server.

\vspace{-3mm}
\subsection{Experimental setup}
\label{sec:setting}

\vspace{-1mm}
\spara{Datasets.} We use three real-world uncertain graphs.

{\em DBLP (http://dblp.uni-trier.de/xml).} We use this well-known collaboration network, downloaded on
August 31, 2016. Each node represents an author, and an edge denotes co-authorship.
Each edge is defined by a set of keywords, that are present within the title of the papers, co-authored by the respective authors.
We selected 347 distinct keywords from all paper titles, e.g., databases, distributed, learning, crowd, verification, etc, based on
frequency and how well they represent various sub-areas of computer science.
We count occurrences of a specific keyword in the titles of the papers co-authored by any two authors.
Edge probabilities are derived from an exponential cdf of mean $\mu=5$ to this count \cite{JLDW11};
hence, if a keyword $c$ appeared $t$ times in the titles of the papers co-authored by the authors $u$ and $v$,
the corresponding probability is $p((u,v)|c))=1 - exp^{-t/5}$. The intuition is that the more the times
$u$ and $v$ co-authored on keyword $c$, the higher the chance (i.e., the probability) that $u$ influences $v$ (and, vice versa)
for that keyword. Therefore, keywords correspond to catalysts for information
cascade.

{\em BioMine (https://www.cs.helsinki.fi/group/biomine).} This is the database of the {\sf BIOMINE} project  \cite{Eronen2012}.
The graph is constructed by integrating cross-references from several biological databases. Nodes
represent biological concepts such as genes, proteins, etc., and edges denote
real-world phenomena between two nodes, e.g., a gene ``codes'' for a protein.
In our setting these phenomena correspond to catalysts. Edge probabilities, which quantify the existence of
a phenomenon between the two endpoints of that edge, were determined in \cite{Eronen2012} as a combination of three criteria: relevance (i.e., relative importance of that relationship type), informativeness (e.g., degrees of the nodes adjacent to that edge), confidence on the existence of a specific relationship (e.g., conformity with the biological {\sf STRING} database).

{\em Freebase (http://www. freebase.com).}  This is a knowledge graph, where nodes are named entities (e.g.,
Google) or abstract concepts (e.g., Asian people), while edges represent relationships among those
entities (Jerry Yang is the ``founder'' of Yahoo!).
Relationships corresponds to catalysts.
We use the probabilistic version of the graph  \cite{CGBY14}.

\vspace{-1mm}
\spara{Query selection.}
For each set of experiments, we select 500 different queries. If we do not impose any distance constraint between the source and the target, both of them are picked
uniformly at random. When we would like to maintain a maximum pairwise distance $d$ from the source to the target, we first select the source uniformly at random. Then, out of
all nodes that are within $d$-hops from it, one node is selected uniformly at random as the target.
All reported results are averaged over 500 such queries.

\vspace{-1mm}
\spara{Competing methods.}
We compare the proposed \ours method (Algorithm \ref{alg:ours}) to the two baselines, \bone and \btwo, discussed in Sections \ref{sec:ind}--\ref{sec:hill_climb}.
For the sake of brevity, in the remainder of this section we refer to the proposed method as
\oursshort, and to the \bone baseline as \boneshort.

\vspace{-1mm}
\spara{Reliability estimation.}
Our proposed method and the baselines need a subroutine that estimates conditional reliability for given source node(s), target node(s), and number of catalysts.
To this end, we employ the well-established {\sf Monte Carlo}-sampling method.
In particular, to improve efficiency, we combine {\sf MC} sampling with a breadth first search from the source node (set) \cite{JLDW11}, meaning
that the coin for establishing if an edge should be included in the current sample is flipped only upon request. This avoids to flip coins for edges
in parts of the graph that are not reached with the current breadth first search, thus increasing the chance of an early termination.
In the experiments, we found that {\sf MC} sampling converges at around $K = 1000$ samples in our datasets. This is roughly the same number observed
in the literature \cite{JLDW11,PBGK10} for these datasets. Hence, we set $K = 1000$ in all sets of experiments.

\vspace{-4mm}
\subsection{Single-source single-target}
\vspace{-1mm}
\spara{Experiments over different datasets.}
In Table~\ref{tab:performance}, we show conditional reliability and running time of all competitors for top-5 output catalysts.
For our \oursshort, we use top-20 most reliable paths with {\em Freebase} and {\em BioMine} and top-50 most reliable paths over {\em DBLP},
as we observe that, for finding the top-5 catalysts, increasing the number of paths beyond 20 ({\em Freebase} and {\em BioMine}) and 50 ({\em DBLP}) does not significantly increase the quality
in respective datasets. Results with varying the number of most reliable paths, and its dependence on varying number of top-$k$ catalysts, will
be reported shortly.

Conditional reliability illustrates the quality of the top-$k$ catalysts found: the higher the reliability, the better the quality.
The proposed \oursshort achieves the best quality results on all our datasets.
\begin{table}[tb!]
\begin{scriptsize}
\begin{center}
\begin{tabular} { c|ccc|ccc}
\multicolumn{1}{c}{} & \multicolumn{3}{c}{\textsf{conditional reliability}} &  \multicolumn{3}{c}{\textsf{running time (sec)}} \\
\textsf{dataset}    & \boneshort & \btwoshort & \oursshort & \boneshort & \btwoshort & \oursshort \\ \hline
{\em Freebase}       &     0.15      &    0.15       &         {\bf 0.17}&     1.38     &    43          &        {\bf 0.02} \\
{\em BioMine}        &     0.18      &    0.43       &         {\bf 0.59}&     1220     &    26217       &        {\bf 5.27}\\
{\em DBLP}           &     0.11      &    0.26       &         {\bf 0.28}&     85.97    &    36519       &        {\bf 1.07} \\ \hline
\end{tabular}
\end{center}
\end{scriptsize}
\vspace{-4mm}
\caption{\scriptsize Reliability and efficiency over different datasets. Single source-target pair, top-5 catalysts.\label{tab:performance}}
\vspace{-3mm}
\end{table}

\begin{table}[tb!]
\begin{scriptsize}
\begin{center}
\begin{tabular} { c|ccc|ccc}
\multicolumn{1}{c}{} & \multicolumn{3}{c}{\textsf{conditional reliability}} &  \multicolumn{3}{c}{\textsf{running time (sec)}} \\
$k$    & \boneshort & \btwoshort & \oursshort & \boneshort & \btwoshort & \oursshort \\ \hline
 5                  &      0.18     &    0.43        &  {\bf 0.59}       &    1220       &  26217         &     {\bf 5.27}              \\
 8                  &      0.18     &    0.49        &  {\bf 0.59}       &    2210       &  67158         &     {\bf 7.05}              \\
10                  &      0.18     &    0.50        &  {\bf 0.60}       &    2290       &  131674        &     {\bf 7.37}              \\
12                  &      0.23     &    0.53        &  {\bf 0.62}       &    2305       &  161265        &     {\bf 7.98}              \\
15                  &      0.34     &    0.53        &  {\bf 0.63}       &    2365       &  217496        &     {\bf 8.30}              \\ \hline
  \end{tabular}
\end{center}
\end{scriptsize}
\vspace{-4mm}
\caption{\scriptsize Reliability and efficiency with varying number $k$ of output catalysts. Single source-target pair, {\em BioMine} dataset. \label{tab:performance_top-k}}
\vspace{-2mm}
\end{table}

\begin{table}[tb!]
\begin{scriptsize}
\begin{center}
\begin{tabular} { c|ccc|ccc}
\multicolumn{1}{c}{distance} & \multicolumn{3}{c}{\textsf{conditional reliability}} &  \multicolumn{3}{c}{\textsf{running time (sec)}} \\
\textsf{(\# hops)}    & \boneshort & \btwoshort & \textsf{\oursshort} & \boneshort & \btwoshort & \oursshort \\ \hline
2                 &   0.45        &      0.75 &       {\bf 0.83}   &       346  & 9798      &      {\bf 4.90}        \\
4                 &   0.08        &      0.38 &       {\bf 0.64}   &       406  & 23140     &      {\bf 5.37}        \\
6                 &   0.02        &      0.17 &       {\bf 0.30}   &       548  & 29135     &      {\bf 5.58}        \\ \hline
  \end{tabular}
\end{center}
\end{scriptsize}
\vspace{-4mm}
\caption{\scriptsize Reliability and efficiency with varying distance between the source and the target. Single source-target pair, {\em BioMine} dataset, top-5 catalysts.\label{tab:performance_distance}}
\vspace{-5mm}
\end{table}

Concerning running time, we observe that \oursshort is 2-3 orders of magnitude faster than \boneshort, and 3-4 orders faster than \btwoshort. This confirms that performing {\sf MC} sampling on a significantly reduced version of the input graph leads to significant benefits in terms of efficiency, without affecting accuracy.
Surprisingly, \btwoshort is orders of magnitude slower than \boneshort.
The reason is the following.
Although only a factor $k$ separates \boneshort
from \btwoshort based on our complexity analysis, what happens in practice is that \boneshort benefits from {\sf MC}-sampling's early termination much more than \btwoshort, as \boneshort considers each catalyst individually, while \btwoshort considers a set of catalysts. One may also note that the running times over {\em BioMine} is higher than that over {\em Freebase}. Although {\em Freebase} has more nodes and edges, the graph is sparse compared to {\em BioMine}. Therefore, a breadth first search in \emph{BioMine} often traverses more nodes, thus increasing its processing time.

\spara{Varying number of catalysts.}
We show results with varying the number $k$ of output catalysts in  Table~\ref{tab:performance_top-k} and Figure~\ref{fig:r_k_dblp}.
Similar trends have been observed in all datasets, thus, for brevity, we report results on {\em BioMine} (Table~\ref{tab:performance_top-k}, $k$ varies from $5$ to $15$)
and on {\em DBLP} (Figure~\ref{fig:r_k_dblp}, $k$ varies from $10$ to $100$). As expected, conditional reliability and running time increase with more catalysts.
Moreover, as shown in Table~\ref{tab:performance_top-k}, our \oursshort remains more accurate and faster than both baselines for all $k$.

\spara{Varying distance from the source to the target.}
Table~\ref{tab:performance_distance} reports on results with varying the distance between the source and the target.
Keeping fixed the number of output catalysts, as expected, the reliability achieved by all three methods decreases with larger distance from the source to the target.
However, we observe that the reliability drops sharply for \boneshort. This is because with increasing distance, it becomes less likely that there would be a path due to only one catalyst from the source to the target.
We also note that the reliability decreases more in \btwoshort than in the proposed \oursshort.
This is due to the cold-start problem of \btwoshort: It is more likely for \btwoshort to make mistaken choices in the initial steps if the source and the target are connected by longer paths.

\spara{Varying number of most reliable paths.}
We also test \oursshort for different values of the number $r$ of most reliable paths discovered in the first step of the algorithm.
We report these results for {\em BioMine} and {\em Freebase}
in Table~\ref{tab:performance_top-r}, and for {\em DBLP} in Figures~\ref{fig:r_k_dblp}, \ref{fig:sufficient_dblp_r}.
For {\em BioMine} and {\em Freebase} datasets, we fix the number $k$ of output catalysts as $5$,
and we observe that while increasing the number of paths, the reliability initially increases, then saturates at a certain value of $r$ (e.g.,
$r=20$ for {\em BioMine} and $r=15$ for {\em Freebase}).
This behavior is expected, since the subsequent paths have very small reliability. Hence, including them does not significantly
increase the quality of the solution found so far. On the other hand, the running time increases almost linearly when more
top-$r$ paths are considered.
\begin{table}[tb!]
\begin{scriptsize}
\begin{center}
\begin{tabular} { c|c|c||c|c}
\multicolumn{1}{c}{} & \multicolumn{1}{c}{\textsf{cond. reliability}} &  \multicolumn{1}{c}{\textsf{running time (sec)}} & \multicolumn{1}{c}{\textsf{cond. reliability}} &  \multicolumn{1}{c}{\textsf{running time (sec)}}\\
     & \textsf{\oursshort} &  \textsf{\oursshort} & \textsf{\oursshort} &  \textsf{\oursshort}\\
$r$  & BioMine             &     BioMine          &   Freebase          &   Freebase         \\ \hline\hline
1                    &   0.27             &    4.29    & 0.12       & 0.0004 \\
2                    &   0.29             &    4.26    & 0.12       & 0.002  \\
3                    &   0.31             &    4.30    & 0.13       & 0.002  \\
4                    &   0.31             &    4.30    & 0.14       & 0.003  \\
5                    &   0.31             &    4.31    & 0.14       & 0.004  \\
10                   &   0.32             &    4.31    & 0.16       & 0.008  \\
15                   &   0.32             &    4.37    & {\bf 0.17} & 0.013  \\
20                   &   {\bf 0.33}       &    5.26    & 0.17       & 0.018  \\
30                   &   0.33             &    5.29    & 0.17       & 0.020  \\
50                   &   0.33             &    5.38    & 0.17       & 0.035  \\
100                  &   0.33             &    5.70    & 0.17       & 0.081  \\ \hline
  \end{tabular}
\end{center}
\end{scriptsize}
\vspace{-4mm}
\caption{\scriptsize Reliability and efficiency with varying number $r$ of most reliable paths in the proposed \oursshort. Single source-target pair, top-5 catalysts.\label{tab:performance_top-r}}
\vspace{-5mm}
\end{table}
\begin{figure}[tb!]
\centering
\subfigure [\scriptsize Conditional Reliability] {
\includegraphics[scale=0.28]{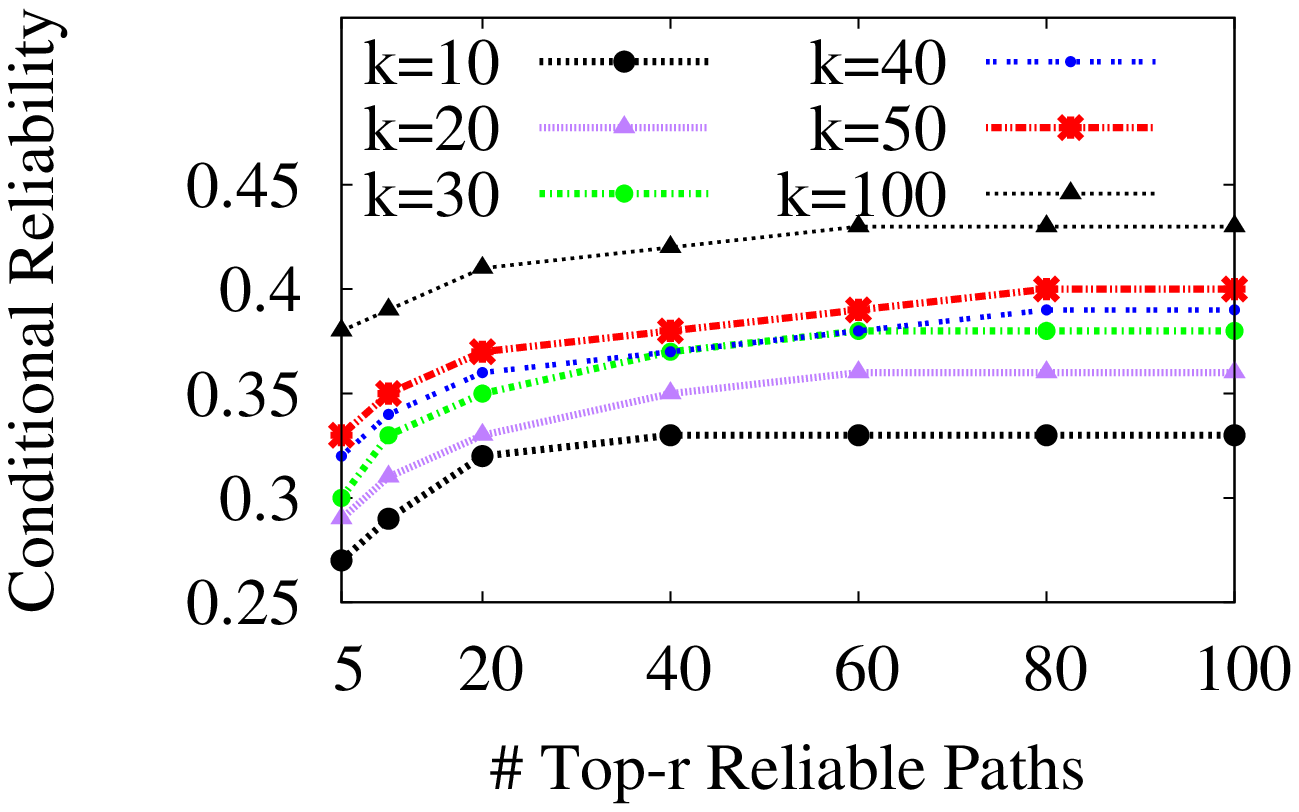}
\label{fig:r_k_dblp_rel}
}
\subfigure [\scriptsize Running Time] {
\includegraphics[scale=0.28]{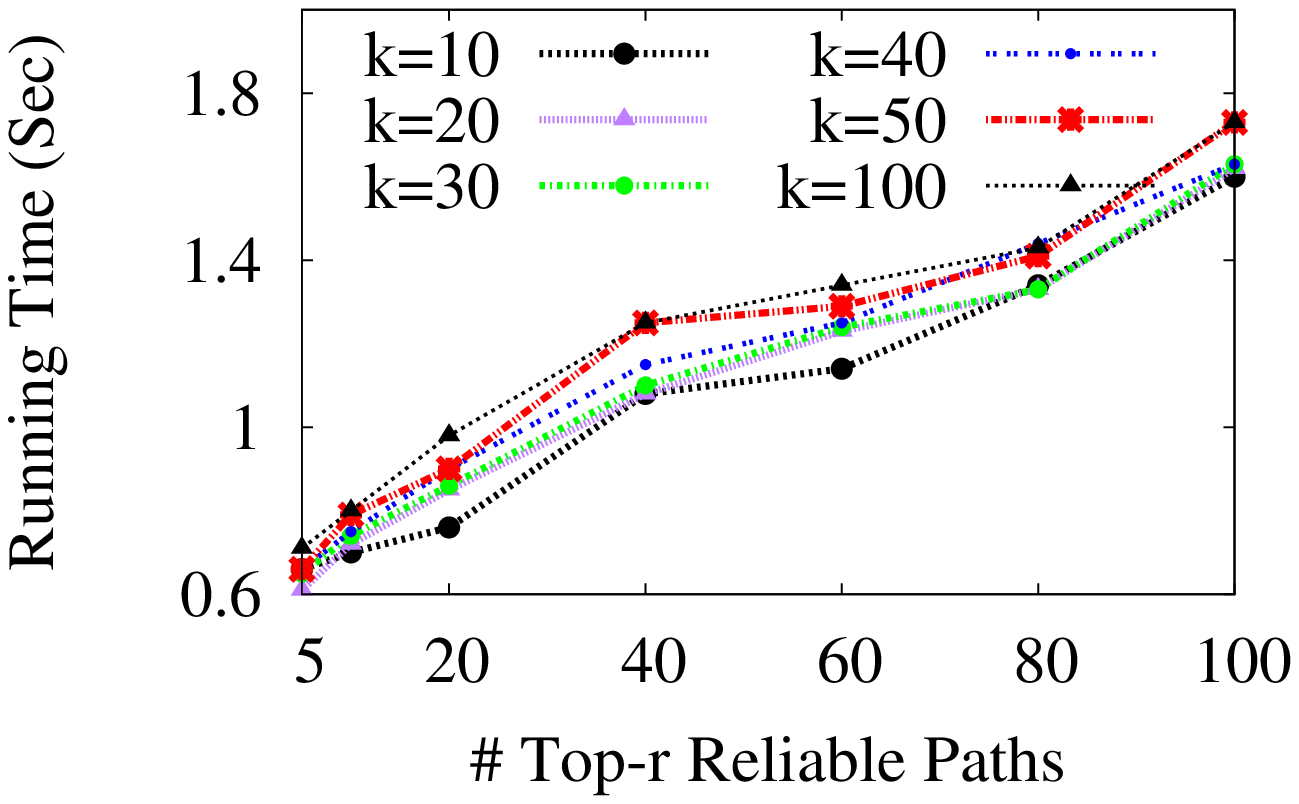}
\label{fig:r_k_dblp_time}
}
\vspace{-4mm}
\caption{\scriptsize Reliability and efficiency with varying number $r$ of most reliable paths in the proposed \oursshort. Single source-target pair, number of top-$k$ catalysts vary from $k$=10 to $k$=100, {\em DBLP}.}
\label{fig:r_k_dblp}
\vspace{-5mm}
\end{figure}
\begin{figure}[tb!]
\centering
\includegraphics[scale=0.28]{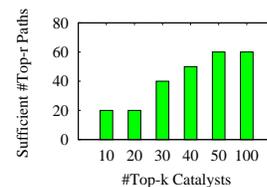}
\vspace{-4mm}
\caption{\scriptsize Sufficient number of top-$r$ reliable paths to find the top-$k$ catalysts by \oursshort,
number of top-$k$ catalysts vary from $k$=10 to $k$=100, {\em DBLP}.}
\label{fig:sufficient_dblp_r}
\vspace{-5mm}
\end{figure}
\begin{table}[t!]
\begin{varwidth}[b]{0.6\linewidth}
\centering
\vspace{1.8mm}
\begin{scriptsize}
\begin{tabular} { l|l}
\multicolumn{1}{c}{\textsf{Datasets}} & \multicolumn{1}{c}{\textsf{Memory Usage}} \\ \hline\hline
{\em DBLP} (1.3M, 3.6M) &  1.9 GB   \\
{\em BioMine} (1.0M, 6.7M) & 1.8 GB \\
{\em Freebase} (28.5M, 46.7M) & 16.0 GB \\ \hline
\end{tabular}
\end{scriptsize}
\vspace{-2mm}
\caption{\scriptsize Memory usage for \oursshort \label{tab:mem}}
\vspace{-3mm}
\end{varwidth}%
\hfill
\begin{minipage}[b]{0.37\linewidth}
\centering
\vspace{-4mm}
\includegraphics[scale=0.25]{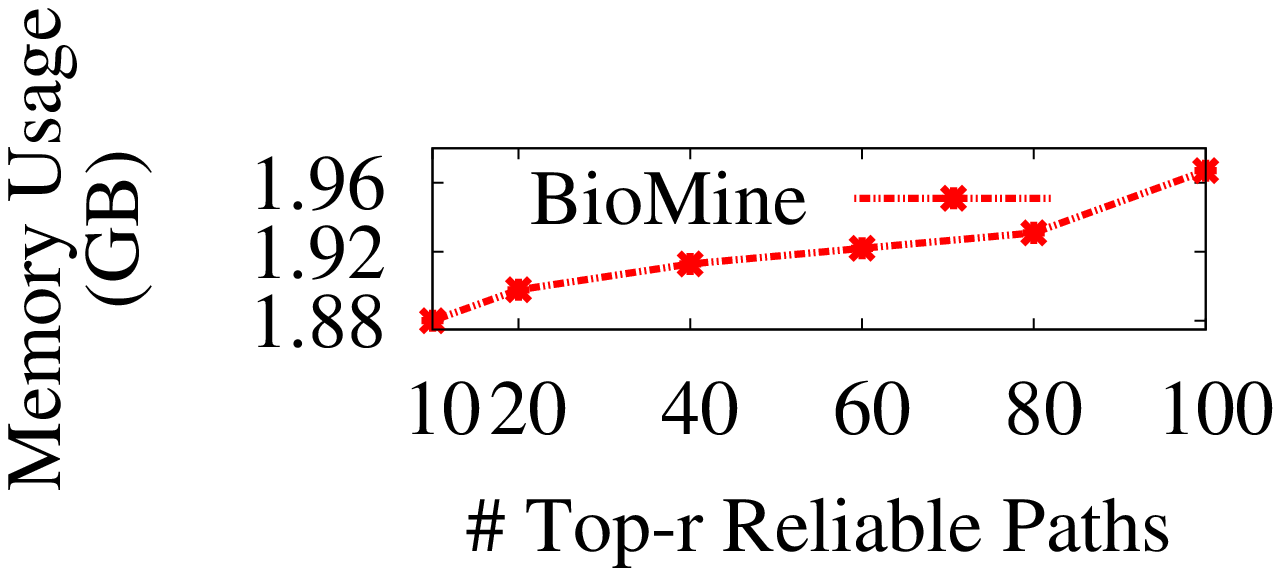}
\vspace{-7mm}
\captionof{figure}{\scriptsize Varying \#rel. paths}
\label{fig:mem_r}
\vspace{-2mm}
\end{minipage}
\vspace{-5mm}
\end{table}

A similar behavior is observed in the {\em DBLP} dataset.
Here we additionally vary the number $k$ of output catalysts
from $10$ to $100$ (Figure~\ref{fig:r_k_dblp}), and we find that, as $k$ increases,  a larger set of reliable paths need to be considered to make accuracy stabilize.
For instance, for $k=10$,
about $r = 20$ reliable paths suffice to observe no more tangible accuracy improvement.
On the other hand, for $k=100$, $r = 60$ paths are required (Figure~\ref{fig:sufficient_dblp_r}).
Once again, this behavior is expected: the larger the number $k$ of catalysts to be output, the larger the subgraph connecting source to target to be explored, and, hence, the larger the number of paths to be considered so as to satisfactorily cover that subgraph.

\spara{Memory usage.} We report the memory usage of \oursshort in Table~\ref{tab:mem}. This
is dominated by the space required for the graph in the main memory. The top-$r$ reliable paths selected by our algorithm
consumes only a few tens of megabytes. Moreover, the memory consumption increases linearly with the number of reliable paths
selected (Figure~\ref{fig:mem_r}).
\begin{figure}[tb!]
\centering
\subfigure [\scriptsize Conditional Reliability] {
\includegraphics[scale=0.23]{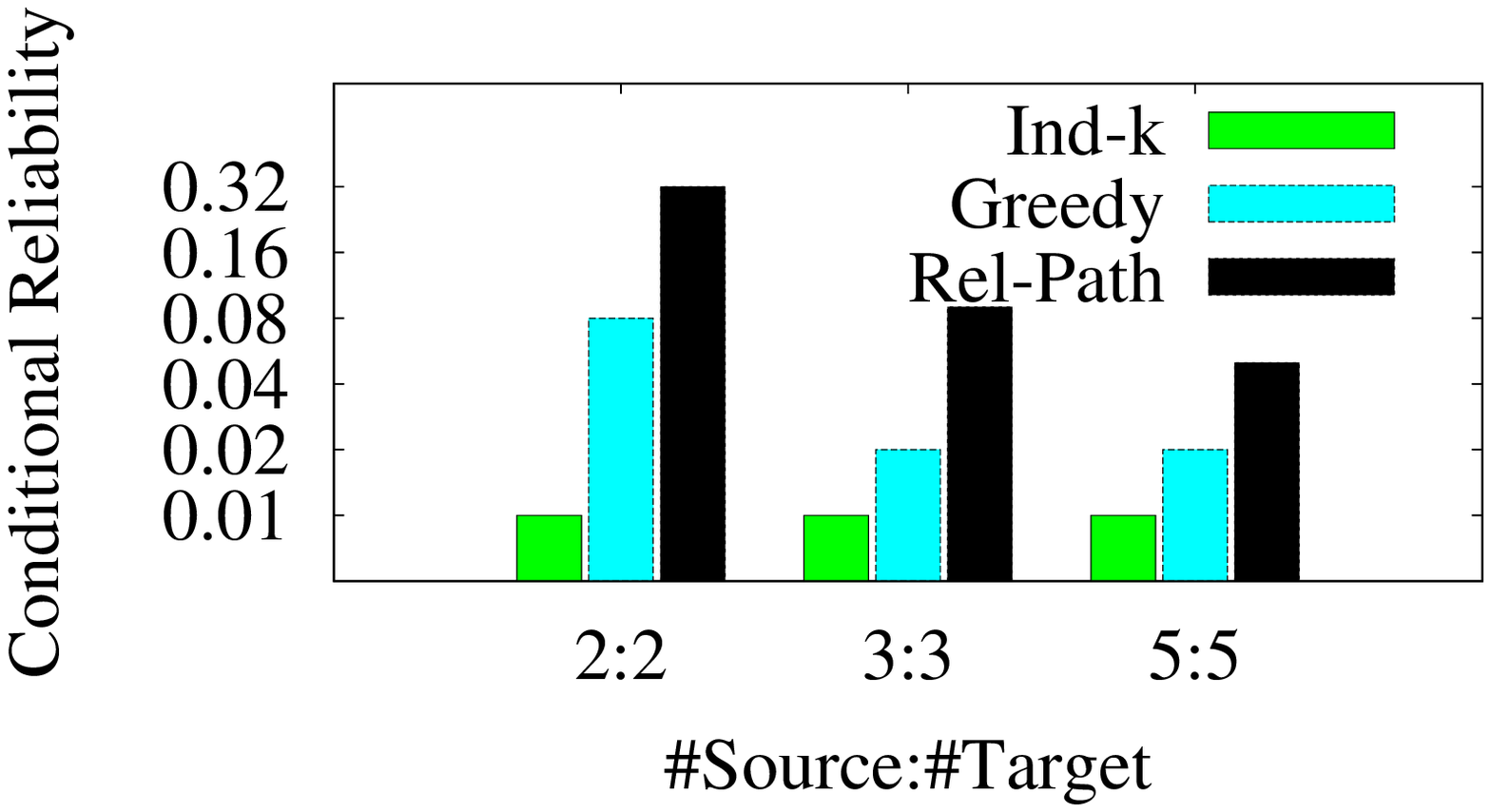}
\vspace{-3mm}
\label{fig:min_freebase_rel}
}
\subfigure [\scriptsize Running Time] {
\includegraphics[scale=0.23]{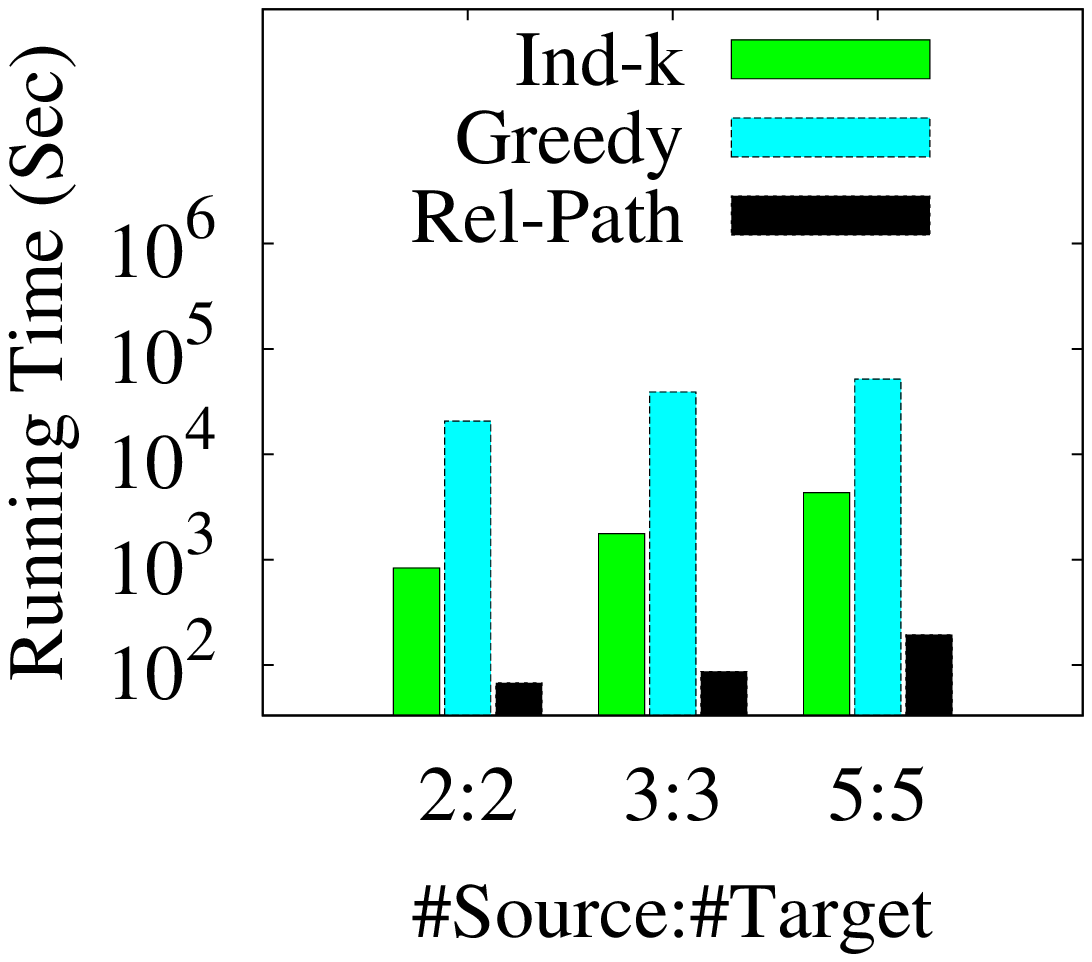}
\vspace{-3mm}
\label{fig:min_freebase_time}
}
\vspace{-4mm}
\caption{\scriptsize Reliability and efficiency for multiple source-target pairs: {\em Freebase}, top-5 catalysts, aggregate function = minimum.}
\label{fig:rel_freebase}
\vspace{-3mm}
\end{figure}
\begin{figure}[tb!]
\vspace{-4mm}
\centering
\subfigure [\scriptsize Reliability]{
\includegraphics[scale=0.23]{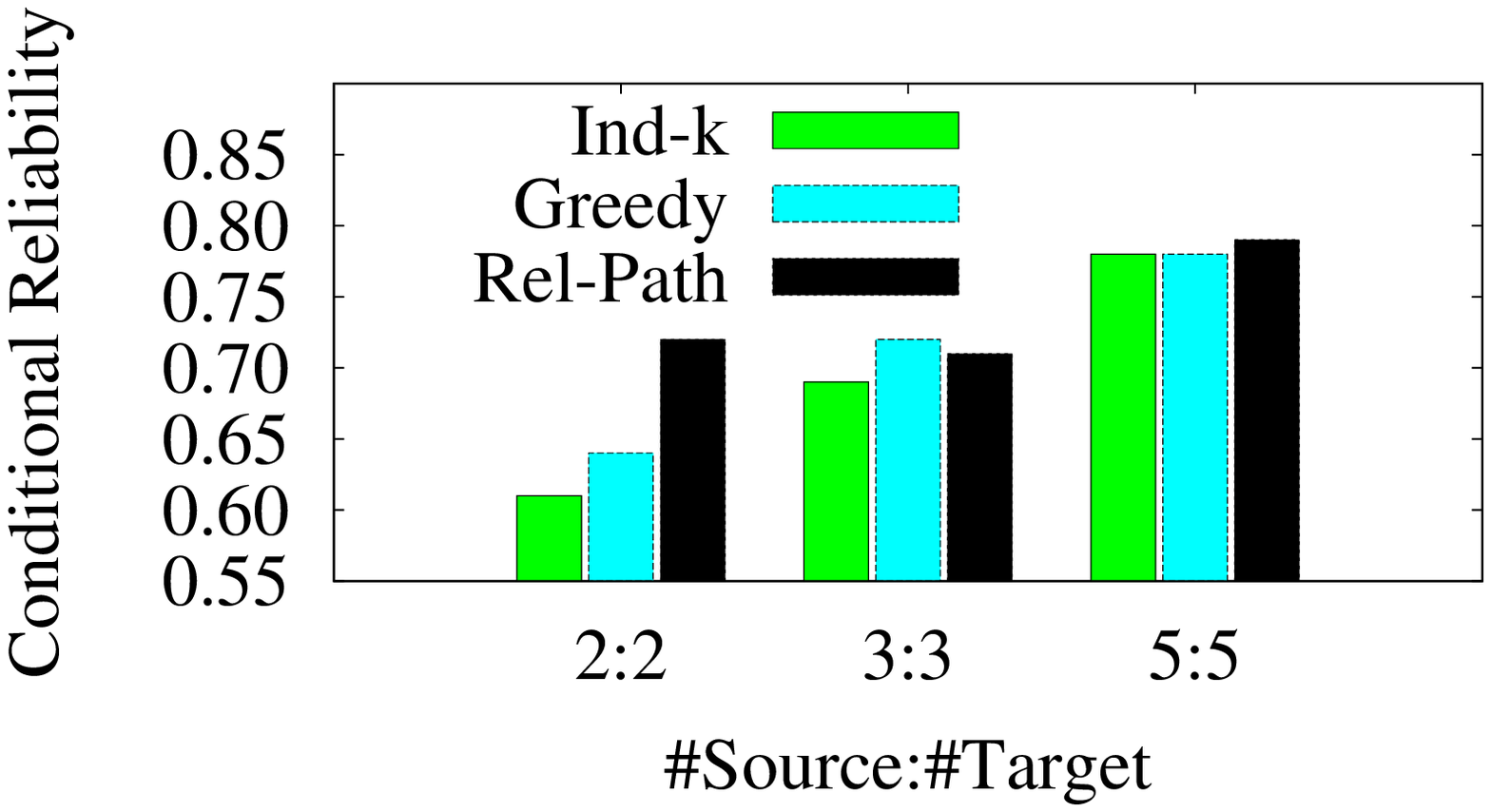}
\vspace{-3mm}
\label{fig:max_freebase_rel}
}
\subfigure [\scriptsize Running Time] {
\includegraphics[scale=0.23]{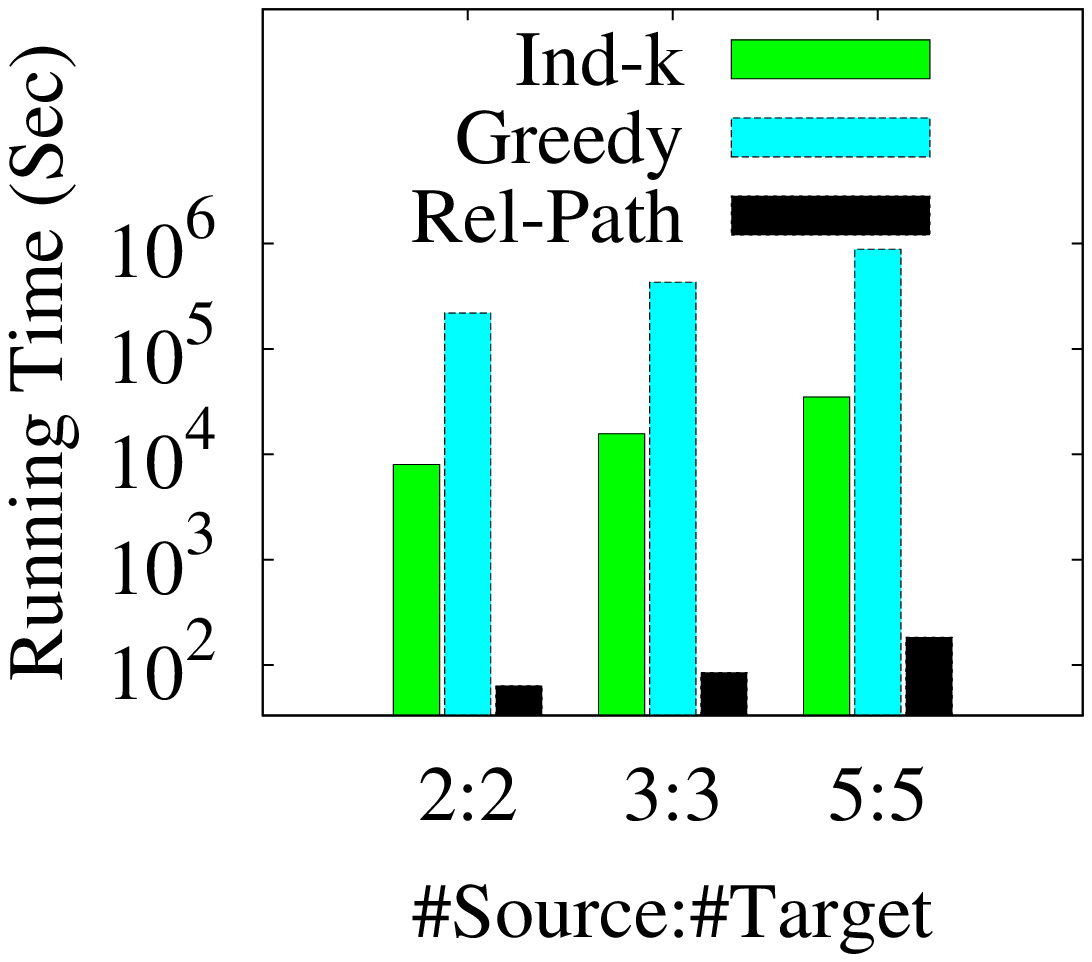}
\vspace{-3mm}
\label{fig:max_freebase_time}
}
\vspace{-4mm}
\caption{\scriptsize Reliability and efficiency for multiple source-target pairs: {\em Freebase}, top-5 catalysts, aggregate function = maximum}
\label{fig:max_freebase}
\vspace{-3mm}
\end{figure}
\begin{figure}[tb!]
\vspace{-4mm}
\centering
\subfigure [\scriptsize Reliability]{
\includegraphics[scale=0.23]{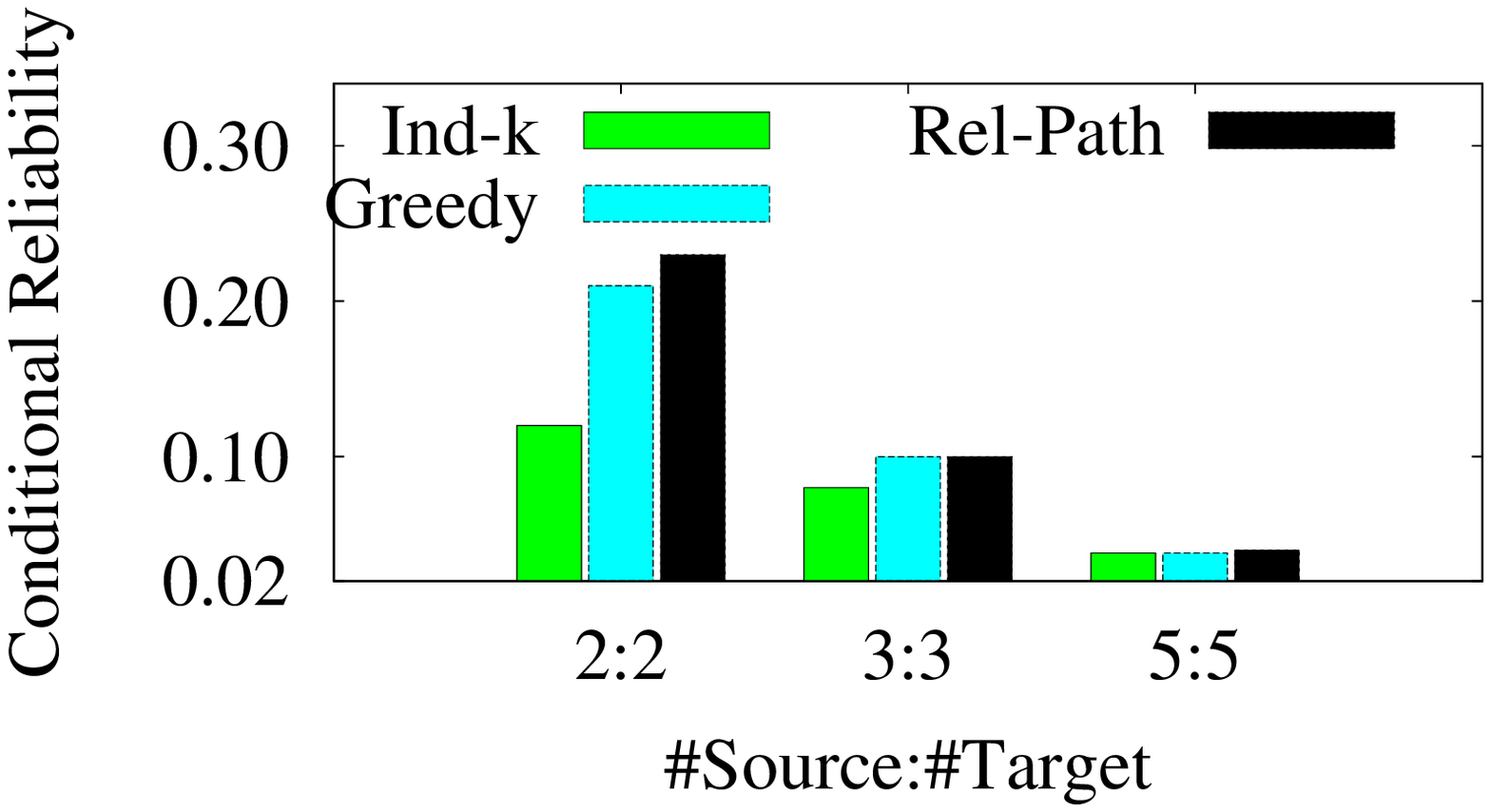}
\vspace{-3mm}
\label{fig:avg_dblp_rel}
}
\subfigure [\scriptsize Running Time] {
\includegraphics[scale=0.23]{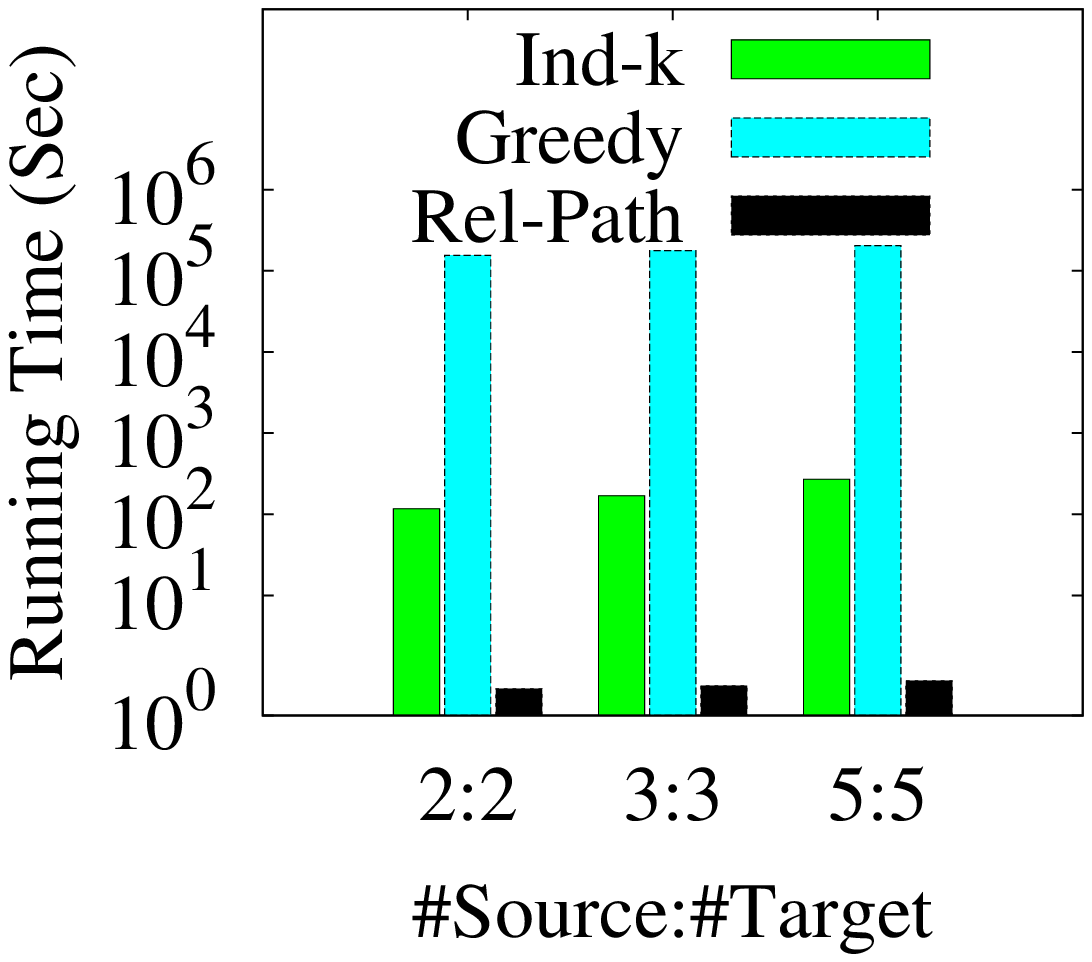}
\vspace{-3mm}
\label{fig:avg_dblp_time}
}
\vspace{-4mm}
\caption{\scriptsize Reliability and efficiency for multiple source-target pairs: {\em DBLP}, top-10 catalysts, aggregate function = average}
\label{fig:avg_dblp}
\vspace{-4mm}
\end{figure}
\begin{figure}[tb!]
\centering
\subfigure [\scriptsize {\em BioMine}, Avg. Aggregate]{
\includegraphics[scale=0.23]{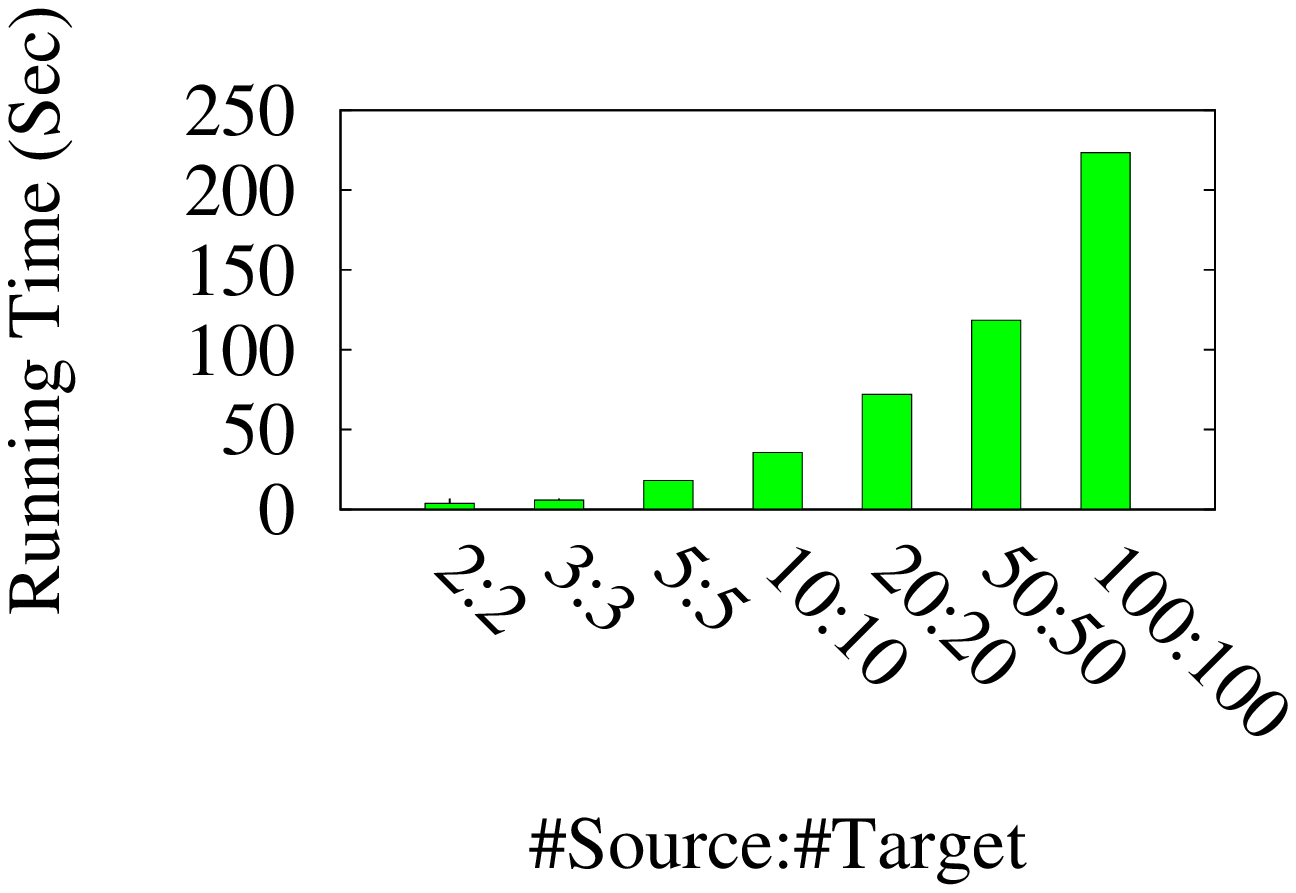}
\vspace{-3mm}
\label{fig:scalability_biomine}
}
\subfigure [\scriptsize {\em Freebase}, Min. Aggregate] {
\includegraphics[scale=0.23]{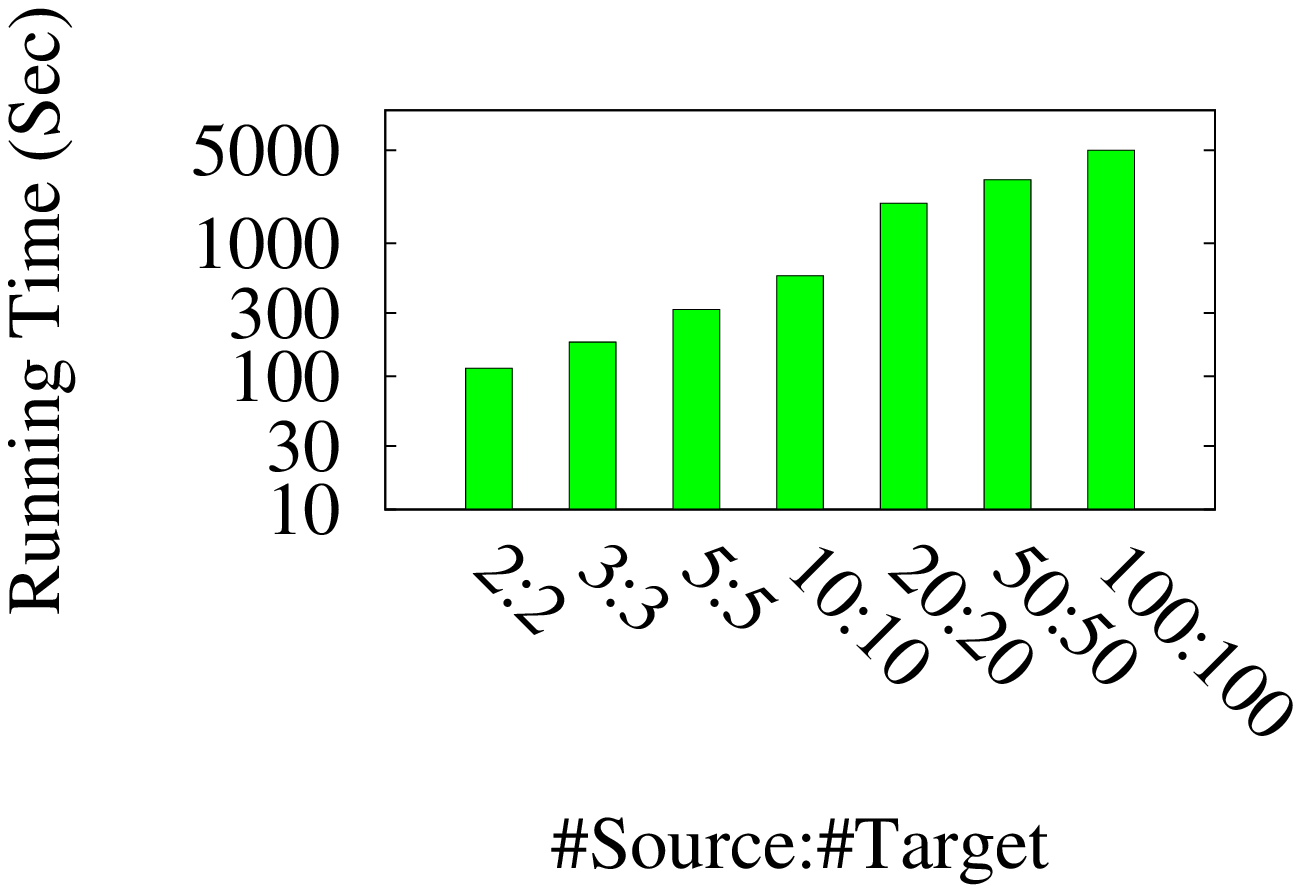}
\vspace{-3mm}
\label{fig:scalability_freebase}
}
\vspace{-4mm}
\caption{\scriptsize Scalability with many sources-targets for \oursshort: top-20 catalysts}
\label{fig:scalability}
\vspace{-6mm}
\end{figure}

\vspace{-4mm}
\subsection{Multiple-sources multiple-targets}

\vspace{-2mm}
\spara{Aggregate functions.}
We perform experiments to evaluate the reliability and efficiency of our methods that maximize an aggregate function over conditional
reliabilities for many source-target pairs. We consider {\sf Minimum} aggregate function, and vary the number of source and target nodes from 2 to 5. In these experiments, we fix the maximum distance between any source-target pair as 4. We also ensure that the same node is not included both in source and target sets.

We show the performance of our algorithms over {\em Freebase} (Figure~\ref{fig:rel_freebase}).
Similar to queries with single source-target pairs, \oursshort outperforms \boneshort and \btwoshort both in terms of efficiency and conditional reliability. Particularly,
due to the presence of multiple source-target pairs, running time differences scale up, and \oursshort is at least
four orders of magnitude faster than the baselines.

We find that with more source-target pairs, the minimum reliability achieved
decreases (Figures~\ref{fig:min_freebase_rel}). This can be explained as follows.
As we keep the number of top-$k$ catalysts fixed at $k=5$, with more source and target nodes, the likelihood of getting one source-target pair with small reliability attained by those top-$k$ catalysts
increases.

\vspace{-0.6mm}
\spara{Different aggregate functions and datasets.}
We demonstrate how our aggregate functions perform over {\em Freebase} and {\em DBLP}, in Figures~\ref{fig:max_freebase} and \ref{fig:avg_dblp},
respectively. Due to common trends, we only show {\sf Maximum} over {\em Freebase} and {\sf Average} over {\em DBLP}. We find that {\sf Rel-Path} results in better reliability
compared to {\sf Greedy} over all experiments. Their difference minimizes in both datasets with more source-target
pairs, which is due to the fact that we keep the number of top-$k$ catalysts fixed at $k=5$ (for {\em Freebase}) and at $k=10$ (for {\em DBLP}). As before,
{\sf Rel-Path} is at least four to five orders of magnitude faster than {\sf Greedy} in all scenarios. In particular, {\sf Greedy}
requires about $10^5$ seconds to answer a single query, which makes almost infeasible to apply this baseline technique in any real-world
online application.

It is interesting that with more source and target pairs, the maximum reliability increases (Figure~\ref{fig:max_freebase}), but the
average reliability decreases (Figure \ref{fig:avg_dblp}). This is expected since with more
source-target pairs, the chance of getting one pair with higher reliability also increases, thereby improving the maximum reliability.
On the contrary, as we consider more source-target pairs while keeping the total number of catalysts same, the average
reliability naturally decreases.

\vspace{-0.6mm}
\spara{Scalability with many sources and targets.}
We demonstrate scalability of our \ours algorithm with multiple source and target nodes (up to 100$\times$100=10K source-target pairs
and top-$20$ catalysts) in Figure~\ref{fig:scalability}. We observe that the running time of \ours increases almost linearly with the number of source-target pairs.
Note that we do not report running times of the \boneshort and \btwoshort baselines, as they do not scale
beyond a small number of sources and targets, as shown earlier in Figures~\ref{fig:rel_freebase}, \ref{fig:max_freebase},
and \ref{fig:avg_dblp}.
\begin{table}[tb!]
\vspace{-1mm}
\begin{scriptsize}
\begin{center}
\begin{tabular} { c|ccc|ccc}
\multicolumn{1}{c}{} & \multicolumn{3}{c}{\textsf{Connectivity}} &  \multicolumn{3}{c}{\textsf{Running Time (Sec)}} \\
\textsf{Datasets}    & \textsf{Ind-k}& \textsf{Greedy}& \textsf{Rel-Path} & \textsf{Ind-k}& \textsf{Greedy}& \textsf{Rel-Path} \\ \hline\hline
{\em Freebase}       &     0.01      &    {\bf 0.10} &         {\bf 0.10}&     1\,908 &    13\,175   &        {\bf 80} \\
{\em BioMine}        &     0.29      &    0.47       &         {\bf 0.71}&     893      &    37\,992       &        {\bf 310}\\
{\em DBLP}           &     0.30      &    0.33       &         {\bf 0.35}&     306      &    116\,340       &        {\bf 85} \\ \hline
\end{tabular}
\end{center}
\end{scriptsize}
\vspace{-4mm}
\caption{\scriptsize Connectivity query with 4 nodes, top-5 catalysts}
\label{tab:connectivity}
\vspace{-5mm}
\end{table}

\vspace{-0.6mm}
\spara{Connectivity maximization.}
We illustrate the performance of our algorithms that maximize connectivity (defined in Section~\ref{sec:connectivity}) across multiple query nodes. For these experiments, we select
4 query nodes with maximum pairwise distance between any two nodes fixed at 2. We compare the connectivity
attained by top-$5$ catalysts in Table~\ref{tab:connectivity}. It can be observed that \btwoshort and \oursshort perform equally well in {\em Freebase},
whereas \oursshort results in higher connectivity over {\em BioMine} and {\em DBLP}. We further analyze the top-20 Steiner trees retrieved in {\sf BioMine}, and find that each of these
Steiner trees require 3$\sim$5 distinct catalysts. Therefore, in this dataset, \btwoshort makes more mistakes at initial stages. Because of
the complexity of the top-20 Steiner tree finding algorithm, \oursshort requires more running time in these experiments. However, \oursshort is still significantly
faster that the other two baselines over all our datasets.

\vspace{-4mm}
\subsection{Application in information cascade}
\label{sec:inf_diffusion}
\vspace{-2mm}
Here we showcase our top-$k$ catalysts problem in the context of information diffusion
over social networks. We present our results over the {\em DBLP} dataset.

We select top-$k$ catalysts (i.e., keywords) according to the {\bf Average} aggregate function for multiple sources and targets (see Section 5).
As discussed earlier, if $u$ and $v$ co-authored more on keyword $c$, the higher is the chance (i.e., the probability) that $u$ influences $v$ (and, vice versa) for that keyword. Therefore, keywords correspond to catalysts for information cascade.
\emph{The ultimate goal of this application is to show that the catalysts selected by our method effectively accomplish the task of maximizing the expected spread of information between the source nodes and the target nodes}.
To this purpose, we measure the expected spread achieved by the top-$k$ catalysts selected by our method, and compare it to the expected spread achieved by  (\emph{i}) $k$ random catalysts, and (\emph{ii}) \emph{all} catalysts.
\begin{figure}[tb!]
\vspace{-1mm}
\centering
\subfigure [\scriptsize Information Spread]{
\includegraphics[scale=0.2]{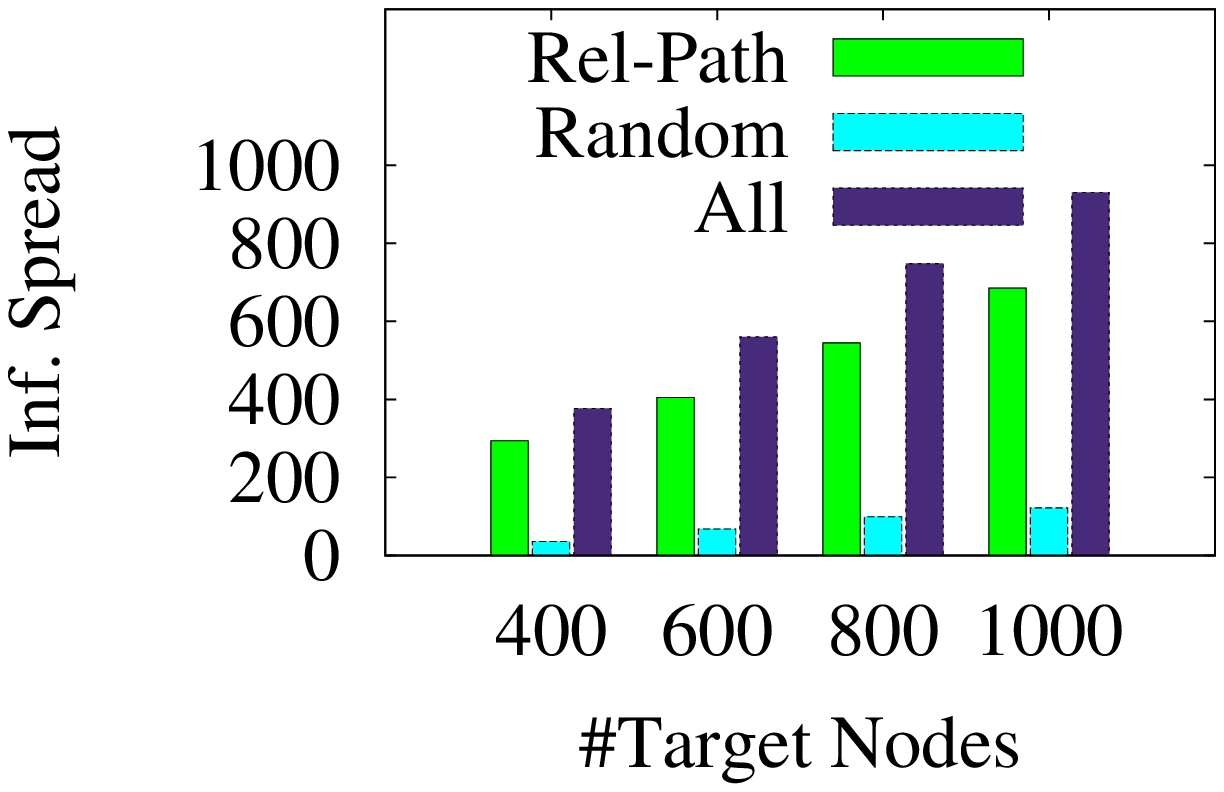}
\label{fig:inf_db_db}
}
\subfigure [\scriptsize Running Time] {
\includegraphics[scale=0.2]{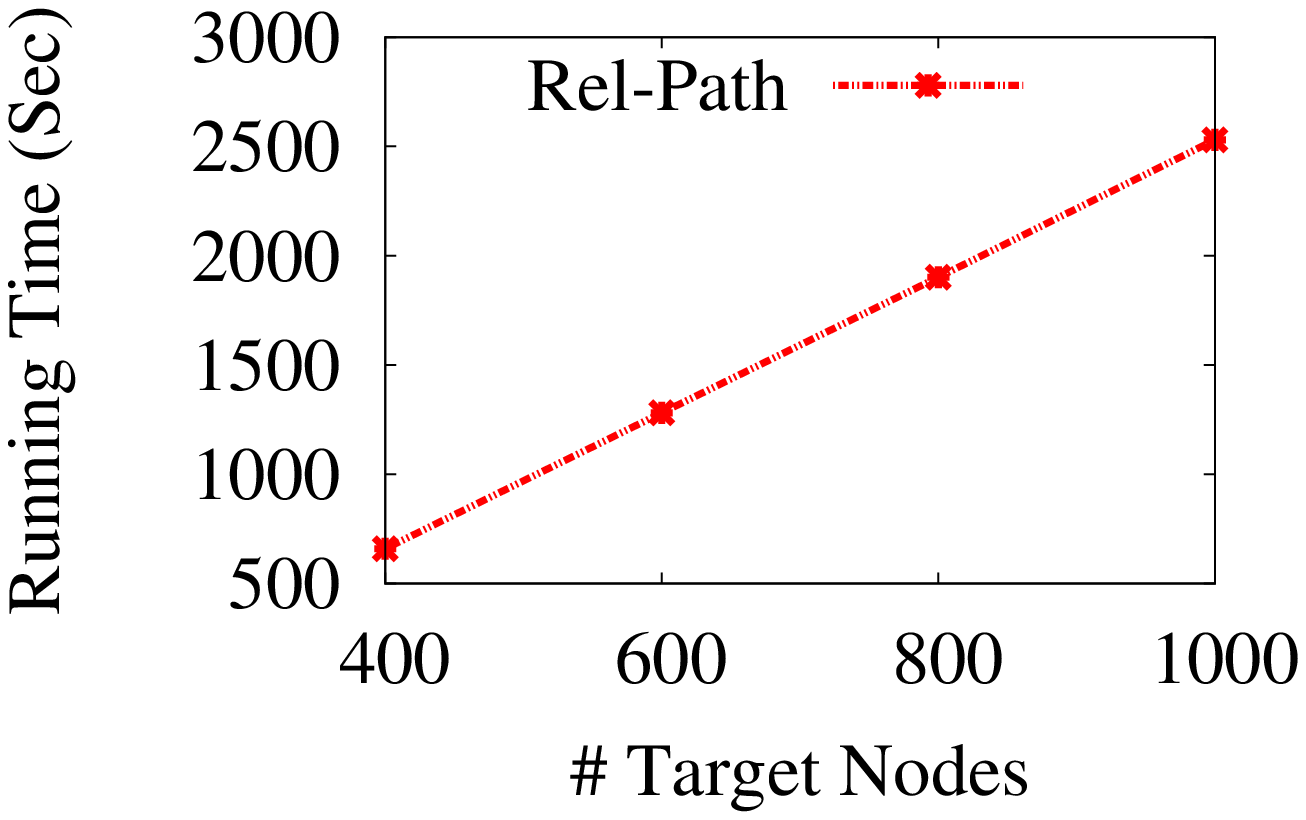}
\label{fig:time_inf_db_db}
}
\vspace{-4mm}
\caption{\scriptsize (a) Expected information spread by the top-$10$ catalysts and (b) running time to find the top-$10$ catalysts: {\em DBLP}, DB source nodes, DB target nodes}
\label{fig:db_db}
\vspace{-4mm}
\end{figure}
\begin{figure}[tb!]
\vspace{-1mm}
\centering
\subfigure [\scriptsize Information Spread]{
\includegraphics[scale=0.2]{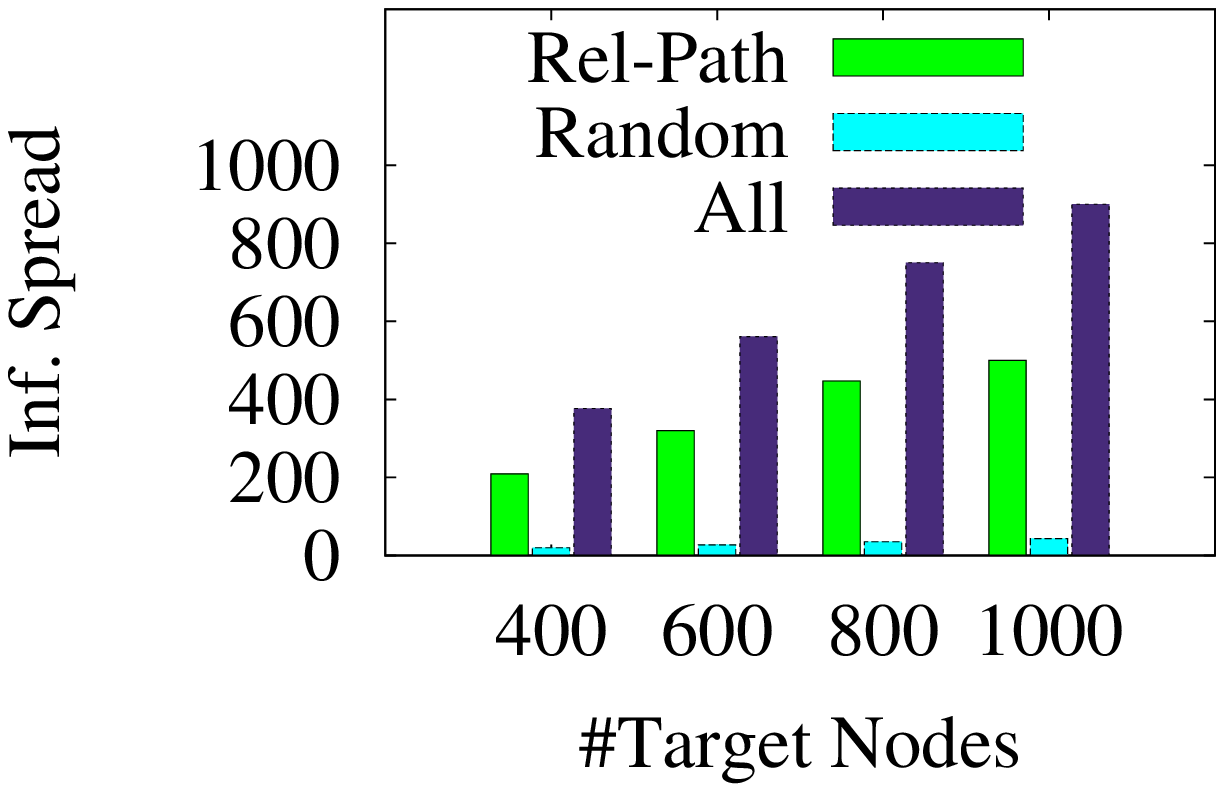}
\label{fig:inf_arch_db}
}
\subfigure [\scriptsize Running Time] {
\includegraphics[scale=0.2]{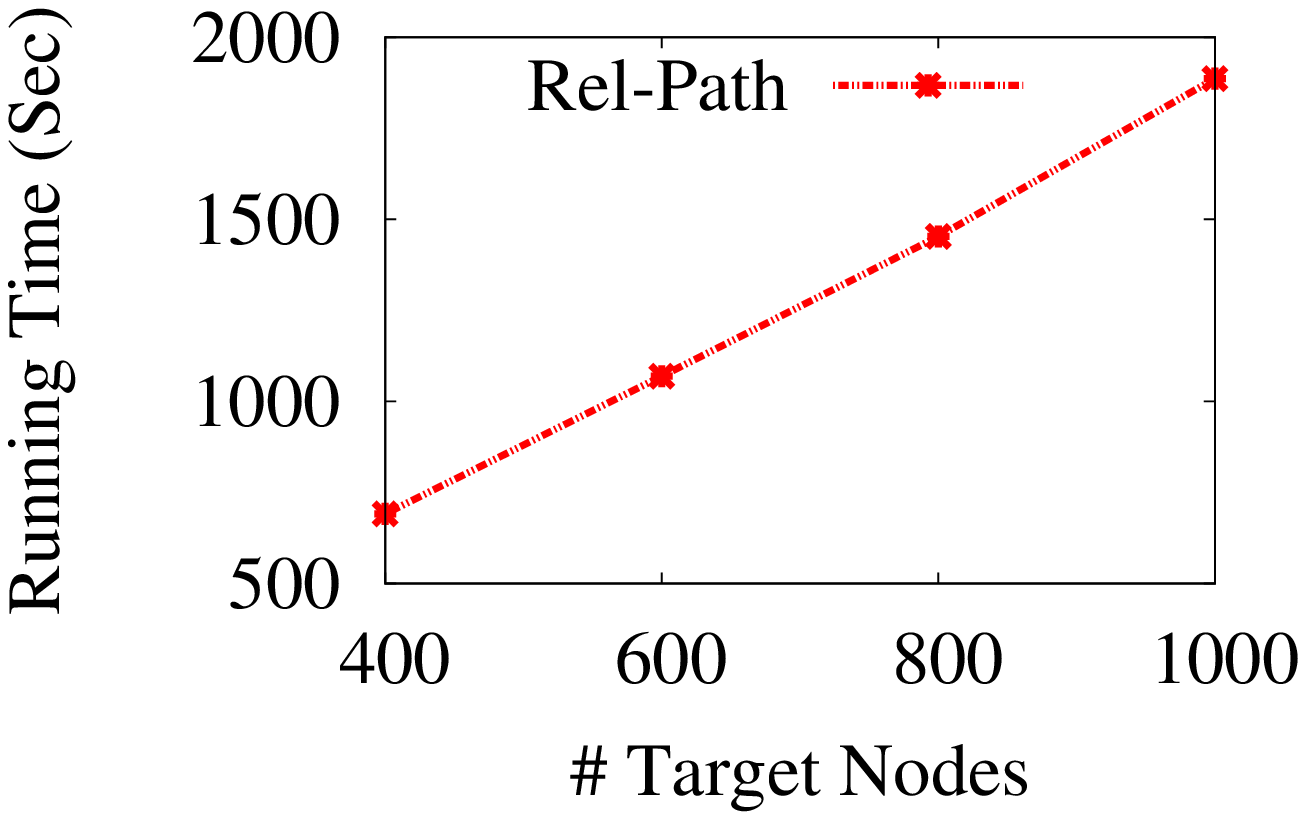}
\label{fig:time_inf_arch_db}
}
\vspace{-4mm}
\caption{\scriptsize (a) Expected information spread by the top-$10$ catalysts and (b) running time to find the top-$10$ catalysts: {\em DBLP}, ARCH source nodes, DB target nodes}
\label{fig:arch_db}
\vspace{-5mm}
\end{figure}

\noindent \underline{Source nodes from Databases.} We find the top-$10$ authors having the maximum number of publications in top-tier
database conferences and journals. They are: \{\textsf{Divesh Srivastava}, \textsf{Surajit Chaudhuri}, \textsf{Jiawei Han}, \textsf{Philip S. Yu}, \textsf{Hector Garcia-Molina},
\textsf{Jeffrey F. Naughton}, \textsf{H. V. Jagadish}, \textsf{Michael Stonebraker}, \textsf{Beng Chin Ooi}, \textsf{Raghu Ramakrishnan}\}.
\newline\underline{Source nodes from Computer Architecture.} In an analogous manner, we select the top-$10$ authors from the computer
architecture domain:  \{\textsf{Alberto L. Sangiovanni-Vincentelli}, \textsf{Jingsheng Jason Cong}, \textsf{Massoud Pedram},
\textsf{Andrew B. Kahng}, \textsf{Robert K. Brayton}, \textsf{Yao-Wen Chang}, \textsf{David Blaauw}, \textsf{Miodrag Potkonjak}, \textsf{Kaushik Roy}, \textsf{Xianlong Hong}\}.
\newline\underline{Target nodes from Databases.} We consider authors having at least $5$ publications in top-tier database conferences and journals
as our target nodes. We vary the number of target nodes from 400 to 1000, selected uniformly at random from them, to demonstrate the scalability of our algorithm.

In Figures~\ref{fig:inf_db_db} and \ref{fig:inf_arch_db}, we show the expected information spread achieved by the top-$10$ catalysts selected
via our \oursshort method, under two scenarios, respectively, {\bf Case-1:} both source nodes and target nodes are from databases (DB), {\bf Case-2:}
source nodes are from architecture (ARCH), and target nodes from databases (DB).
To demonstrate the quality of our results, we
report the expected information spread achieved by uniformly at random selection of $10$ catalysts ({\em denoted as ``{\sf Random}'' in the figures}).
We observe from Figures~\ref{fig:inf_db_db} and \ref{fig:inf_arch_db} that \oursshort selects high-quality catalysts, and significantly
outperforms such a {\sf Random} method.

In particular, the catalysts selected by \oursshort under Case-1 are all DB-related, e.g., database systems, relational, information
extraction, keyword search, XML, data mining, etc. On the other hand, the catalysts selected by \oursshort under Case-2 belong to DB or
ARCH areas, e.g., CMOS, FPGA, storage system, cache, VLSI circuit, On-chip, transactional memory, data stream, etc. Both in
Figures~\ref{fig:inf_db_db} and \ref{fig:inf_arch_db}, we also report the total information spread achieved by all 5\,428 catalysts
(i.e., keywords) present in the {\em DBLP} dataset. This is denoted as ``{\sf All}'' in the figures. We find that {\em the information
spread achieved by only the top-10 catalysts is generally within 70-90\% of the total information spread achieved by all 5\,428
catalysts}. These results demonstrate the relevance of our novel problem and its solution in the domain of information cascade
over social influence networks.

Furthermore, we find that that running time to find the top-$k$ catalysts via \oursshort increases almost linearly with more
target nodes (see Figures~\ref{fig:time_inf_db_db} and \ref{fig:time_inf_arch_db}), which illustrates the {\em scalability} of our technique.  
\vspace{-4mm}
\section{Related Work}
\label{sec:related}
\vspace{-2mm}
To the best of our knowledge, the problem of finding the top-$k$ catalysts
for maximizing the conditional reliability, that we study in this work, is novel.
In the following, we provide an overview of relevant work in neighboring areas.

\vspace{-1mm}
\spara{Reliability queries in uncertain graphs.} Reliability is a classic problem studied in systems and device networks
\cite{AMG75}. Reliability has been recently studied in the context of large social and biological networks.  Due to its \sharpP-completeness \cite{B86},
efficient sampling, pruning, and indexing methods have been considered~\cite{JLDW11,PBGK10,LYMJ14,KBGG14,ZZL15}.

\vspace{-1mm}
\spara{Constrained reachability queries.}
Mendelson and Wood show that finding all simple paths in a (deterministic) graph matching a regular expression is \NP-hard \cite{MW95}.
There are some query languages which support regular expression queries only in some restricted form, e.g., {\sf GraphQL}, {\sf SoQL}, {\sf GLEEN}, {\sf XPATH}, and {\sf SPARQL}.
Fan et. al. \cite{FLMTW11} study a special case of regular expressions
that can be solved in quadratic time. Edge-label constrained reachability and distance queries have been studied in \cite{JHWRX10,BGGU14}.

Label-constrained reachability queries have been also considered in the context of uncertain graphs \cite{CGBY14}.
However, in that work the goal was to estimate the reliability between two nodes under the constraint that paths connecting the two nodes contain only some admissible labels.
Thus, the input graph still has fixed edge probabilities that do not vary based on external conditions.
As a result, label-constrained reachability differs from conditional reliability introduced in this work, and, more importantly, our problem of finding the top-$k$ external conditions is not addressed in those works.

\vspace{-1mm}
\spara{Explaining relationships among entities.} Several works aim at identifying the best subgraphs/paths to describe how some input entities are related~\cite{FSYB11,SG10,FMT04}.
Sun et. al. propose {\sf PathSIM} \cite{SHYYW11} to find entities that are connected by similar relationship patterns.
However, all these works consider deterministic graphs. The semantics behind the notion of connectivity in uncertain graphs is different.

\vspace{-1mm}
\spara{Uncertain graphs with correlated edge probabilities.}
Although the bulk of the literature on uncertain graphs assumes edges to be independent of one another \cite{Boldietal12,CGBY14,JLDW11,LYMJ14}, some works deal with correlated edge probabilities, where the existence of an edge may depend on the existence of other edges in the graph (typically, edges sharing an end node) \cite{Cheng2014,ChengThreshold2015,PBGK10,YWWC11}.
Another model that differs from the classic one is the one adopted by Liu et. al. \cite{Liu2014}, which considers that every edge is assigned a (discrete) probability density function over a set of possible edge weights.
However, none of those works model edge-existence probabilities conditioned on external factors, nor they study the problem of finding the top-$k$ factors that maximize the reliability between two (sets of) nodes.

\vspace{-1mm}
\spara{Topic-aware influence maximization}.
The classical problem of influence maximization has been recently considered in a topic-aware fashion \cite{BBM12,CFLFTT15}.
Although the input to that problem is similar to the input considered in this work (an uncertain graph where edge probabilities depend on some conditions),
topic-aware influence maximization solves a different problem, i.e., finding a set of seed nodes that maximize the spread of information for a given topic set.
Topic-aware influence maximization can however benefit from the solutions provided by our top-$k$ catalysts problem, e.g., in the case where topics are not known in advance.
A recent work by Li et. al. \cite{LFZT17} focuses on the problem of finding a size-$k$ tag set that maximizes the {\em expected spread} of influence started from a given source node. Our work is different as we aim at finding the top-$k$ external factors maximizing the {\em reliability} between two given (sets of) nodes.

\vspace{-1mm}
\spara{Difference with our prior work}. A preliminary version of this work was published as a short paper in \cite{KGWB15}.
The present version contains a lot of new significant material: a complete piece of research work concerning the generalization to the case of multiple source/target nodes, including problem formulations, theory, applications, algorithms, and experiments; important theoretical findings; more details, examples, and motivations for all the proposed algorithms, including detailed time-complexity analyses; a lot of additional experiments, including applications in information cascade; a detailed overview of the related literature.

\vspace{-5mm}
\section{Conclusions}
\label{sec:conc}

\vspace{-1mm}
We formulated and investigated a novel problem of identifying the top-$k$ catalysts that maximize the reliability between source and target nodes in an uncertain graph.
We proposed a method based on iterative reliable-path inclusion.
Our experiments show that the proposed method achieves better
quality and significantly higher efficiency compared to simpler baselines.
In future, we shall consider more complex relationships between an edge and the catalysts, and other problems from the perspective of top-$k$ catalysts, e.g., nearest neighbors and influence maximization.


\vspace{-3mm}
{\scriptsize
\bibliographystyle{abbrv}
\bibliography{ref,ref_rel}
}

\vspace{-6mm}
\appendix
\vspace{-2mm}
\spara{Proof of Theorem 2.}
A problem is said to admit a \emph{Polynomial Time Approximation Scheme} (\PTAS) if the problem admits a polynomial-time constant-factor approximation algorithm for \emph{every} constant $\beta \in (0,1)$.
We prove the theorem by showing that there exists at least one value of $\beta$ such that, if a $\beta$-approximation algorithm for \topcat exists, then we can solve the well-known \setcov problem in polynomial time. Since \setcov is an \NP-hard problem, clearly this can happen only if \Poly\ = \NP.

In \setcov we are given a universe $U$, and a set of $h$ subsets of $U$, i.e., $\mathcal{S} = \{S_1, S_2,\ldots, S_h\}$,
where $S_i \subseteq U$, for all $i \in [1\ldots h]$. The decision version of \setcov asks the following question: given $k$, is there
any a solution with no more than $k$ sets that cover the whole universe?

Given an instance of \setcov, we construct in polynomial time an instance
of our \topcat problem in the same way as in Theorem \ref{th:np_hard}.
On this instance, if $k$ sets suffice to cover the whole universe in the original instance of \setcov, the optimal solution $C^*$
 would have reliability at most $[1-(1-p^2)^Z]$, where $Z = |U|$ (because at most $Z$ disjoint paths from $s$ to $t$ would be produced, each with existence probability $p^2$).
On the other hand, if no $k$ sets cover the whole universe, $C^*$ would have reliability at most $[1-(1-p^2)^{Z-1}]$ (because at least one of the disjoint paths would be discarded).

Now, assume that a polynomial-time $\beta$-approximation algorithm for \topcat exists, for some $\beta \in (0,1)$. Call it  ``{\sf Approx}''.
{\sf Approx} would yield a solution $C_2$ such that
$R\left((s,t)|C_2\right) \ge \beta R\left((s,t)|C^*\right)$.
Now, consider the inequality $[1-(1-p^2)^{Z-1}] < \beta [1-(1-p^2)^Z]$.
If this inequality has solution for some values of $\beta$ and $p$, then
by simply running {\sf Approx} on the instance of \topcat constructed this way, and checking the reliability of the solution returned by {\sf Approx}, one can answer \setcov in polynomial time: a solution to \setcov exists iff the solution given by {\sf Approx} has reliability $\ge \beta[1-(1-p^2)^Z]$.
Thus, to prove the theorem we need to show that a solution to that inequality exists.

To this end, consider the real-valued function $f(p,Z) =\frac{1-(1-p^2)^{Z-1}}{1-(1-p^2)^Z}$.
Our inequality has a solution iff $\beta > f(p, Z)$.
It is easy to see that $f(p,Z) < 1$, for all $Z \ge 1$ and $p > 0$.
This means that there will always be a value of $\beta \in (0,1)$ and $p$ for which $\beta > f(p, Z)$ is satisfied, regardless of $Z$.
Hence, there exists at least one value of $\beta$ such that the inequality $[1-(1-p^2)^{Z-1}] < \beta [1-(1-p^2)^Z]$ has solution, and, based on the above argument, such that no $\beta$-approximation algorithm for Problem \ref{prob:topk_rel_col_set} can exist.
The theorem follows.

\spara{Proof of Theorem 4.}
If both our objective function and the constraint were proved to be submodular, our iterative path inclusion problem (Problem 2) would become an instance of the  {\em Sub-modular Cost Sub-modular Knapsack} ({\sf SCSK}) problem \cite{IB13}, and the approximation result in Theorem 4 would easily follow from \cite{IB13}.
In the following we show that indeed both our objective function (Lemma 1) and our constraints (Lemma 2) are submodular, thus also proving Theorem 4.

\vspace{-1mm}
\begin{lemma}
The constraint of the iterative path inclusion problem, i.e., total number of catalysts on
edges of the included paths is sub-modular with respect to inclusion of paths.
\label{th:submodular_catalyst}
\end{lemma}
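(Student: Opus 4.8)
The plan is to recognize the constraint function as a set-coverage function and then invoke the textbook fact that coverage functions are submodular. First I would fix notation consistent with the multigraph $\mathcal{G'}$ built in Algorithm~\ref{alg:reliable_path}: there every edge $e$ carries exactly one catalyst $C(e)$, so each path $P\in\mathcal{P}$ determines a well-defined catalyst set $C(P)=\{C(e):e\in P\}\subseteq C$. For a path set $\mathcal{P}_1\subseteq\mathcal{P}$ define the constraint function $g(\mathcal{P}_1)=|C(\mathcal{P}_1)|$, where $C(\mathcal{P}_1)=\bigcup_{e\in\mathcal{P}_1}C(e)=\bigcup_{P\in\mathcal{P}_1}C(P)$ is the set of distinct catalysts appearing on the edges of the paths in $\mathcal{P}_1$. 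This is exactly the number of catalysts used by the included paths, i.e., the quantity constrained by $k$ in Problem~\ref{prob:path}.

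Next I would verify the diminishing-returns inequality directly. Take any $\mathcal{A}\subseteq\mathcal{B}\subseteq\mathcal{P}$ and any path $P\notin\mathcal{B}$. For an arbitrary set $\mathcal{X}\subseteq\mathcal{P}$, the marginal gain of adding $P$ is $g(\mathcal{X}\cup\{P\})-g(\mathcal{X})=|C(P)\setminus C(\mathcal{X})|$, since the only catalysts newly contributed by $P$ are those on $P$ that are not already covered by $\mathcal{X}$. Because $\mathcal{A}\subseteq\mathcal{B}$ implies $C(\mathcal{A})\subseteq C(\mathcal{B})$, we get the set inclusion $C(P)\setminus C(\mathcal{B})\subseteq C(P)\setminus C(\mathcal{A})$, hence $|C(P)\setminus C(\mathcal{B})|\le|C(P)\setminus C(\mathcal{A})|$, which is precisely $g(\mathcal{B}\cup\{P\})-g(\mathcal{B})\le g(\mathcal{A}\cup\{P\})-g(\mathcal{A})$. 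This establishes submodularity of $g$.

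I do not expect a genuine obstacle here: the statement is the standard submodularity of a union/coverage function, so the only care required is bookkeeping — making sure the catalyst count is taken on the multigraph $\mathcal{G'}$ (so each edge, and therefore each path, maps to a fixed subset of $C$) and that $g$ is $|\cdot|$ of a union of those subsets. I would additionally remark that $g$ is monotone non-decreasing (adding paths never removes covered catalysts) and normalized, $g(\emptyset)=0$; together with the submodularity of the objective (Lemma~\ref{th:submodular_catalyst}'s companion lemma on $Rel_{\mathcal{P}_1}(s,t)$), this is what places Problem~\ref{prob:path} inside the {\sf SCSK} framework and yields Theorem~\ref{th:guarantee}.
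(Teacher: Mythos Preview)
Your proposal is correct and is essentially the same argument the paper gives: both verify the diminishing-returns inequality for the function $\mathcal{P}_1\mapsto\big|\bigcup_{e\in\mathcal{P}_1}C(e)\big|$ by comparing the marginal contribution of a new path $P$ to a smaller versus a larger path set. The only cosmetic difference is that the paper splits into two cases (whether $P$ shares a catalyst with $\mathcal{P}_2\setminus\mathcal{P}_1$), whereas you compute the marginal gain uniformly as $|C(P)\setminus C(\mathcal{X})|$ and use $C(\mathcal{A})\subseteq C(\mathcal{B})$; your version is a bit cleaner and also sidesteps a sign slip in the paper's displayed inequality for case~(b), where the ``$<$'' is written the wrong way around relative to the submodularity direction.
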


\vspace{-2mm}
\begin{proof}
Consider two path sets $\mathcal{P}_1$, $\mathcal{P}_2$ from $s$ to $t$ such that $\mathcal{P}_2 \supseteq\mathcal{P}_1$. Also,
we assume a path $P$ from $s$ to $t$, where $P \not \in \mathcal{P}_2$. There can be two distinct cases: (a) $P$ has no common catalyst
with the paths in $\mathcal{P}_2\setminus\mathcal{P}_1$. (b) $P$ has at least one common catalyst with the paths in
$\mathcal{P}_2\setminus\mathcal{P}_1$. In the first case,
{\tiny
\begin{align}
\displaystyle \left|\underset{e\in \mathcal{P}_1\cup\{P\}}{\cup} C(e)\right|-\left|\underset{e\in\mathcal{P}_1}{\cup}C(e)\right| = \left|\underset{e\in\mathcal{P}_2\cup\{P\}}{\cup}C(e)\right|-\left|\underset{e\in\mathcal{P}_2}{\cup}C(e)\right|
\end{align}
}%
In the second case,
{\tiny
\begin{align}
\displaystyle \left|\underset{e\in \mathcal{P}_1\cup\{P\}}{\cup} C(e)\right|-\left|\underset{e\in\mathcal{P}_1}{\cup}C(e)\right|  < \left|\underset{e\in\mathcal{P}_2\cup\{P\}}{\cup}C(e)\right|-\left|\underset{e\in\mathcal{P}_2}{\cup}C(e)\right|
\end{align}
}%
Hence, the result follows.
\end{proof}
\vspace{-3mm}
\begin{lemma}
If the top-$r$ most reliable paths are node-disjoint (except at source and target nodes), then the objective
function of the iterative path selection problem (Problem~\ref{prob:path}), i.e., $Rel_{\mathcal{P}_1}(s,t)$ is sub-modular with respect to inclusion of paths.
\label{th:submodular_nonoverlap}
\end{lemma}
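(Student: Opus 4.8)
The plan is to show that node-disjointness collapses $Rel_{\mathcal{P}_1}(s,t)$ into a textbook ``probabilistic OR'' (weighted-coverage) function, for which submodularity is elementary. First I would observe that, when the paths in $\mathcal{P}$ are internally node-disjoint, any two distinct paths share no vertex other than $s$ and $t$, hence share no edge of the multigraph $\mathcal{G}'$; so for every $\mathcal{P}_1\subseteq\mathcal{P}$ the subgraph induced by $\mathcal{P}_1$ is a disjoint union of $s$-$t$ paths with pairwise-disjoint edge sets. Consequently the event ``$t$ is reachable from $s$'' is exactly ``at least one path $P\in\mathcal{P}_1$ has all of its edges present'', and, since distinct edges of $\mathcal{G}'$ are independent and the paths are edge-disjoint, the per-path survival events are mutually independent. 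This delivers the closed form
\begin{align}
Rel_{\mathcal{P}_1}(s,t)\;=\;1-\prod_{P\in\mathcal{P}_1}(1-q_P),\qquad q_P:=\prod_{e\in P}p_e\in(0,1],
\end{align}
where $q_P$ is the reliability of path $P$ in isolation and $p_e$ is the probability of edge $e$ in $\mathcal{G}'$. Note that the single-catalyst labels $C(e)$ of $\mathcal{G}'$ play no role here: catalysts enter only through the budget constraint, which is handled separately in Lemma~\ref{th:submodular_catalyst}.

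Given this form, the remainder is the standard diminishing-returns computation. For $\mathcal{P}_1\subseteq\mathcal{P}$ and a path $P'\notin\mathcal{P}_1$, the marginal gain telescopes to
\begin{align}
Rel_{\mathcal{P}_1\cup\{P'\}}(s,t)-Rel_{\mathcal{P}_1}(s,t)\;=\;q_{P'}\prod_{P\in\mathcal{P}_1}(1-q_P).
\end{align}
Every factor $1-q_P$ lies in $[0,1)$, so $\prod_{P\in\mathcal{P}_1}(1-q_P)$ is non-increasing in $\mathcal{P}_1$; hence for all $\mathcal{P}_1\subseteq\mathcal{P}_2\subseteq\mathcal{P}$ and $P'\notin\mathcal{P}_2$ the marginal gain at $\mathcal{P}_1$ is at least that at $\mathcal{P}_2$, which is precisely the diminishing-returns definition of submodularity (monotonicity is immediate since $q_{P'}\ge 0$). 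As a byproduct, whenever some edge of some path has probability strictly below $1$ the products are strictly positive, which is exactly what pins the curvature $k_{Rel}$ of the objective into $(0,1)$, as asserted after Theorem~\ref{th:guarantee}. Finally, combining this lemma with the submodular-constraint Lemma~\ref{th:submodular_catalyst} casts Problem~\ref{prob:path} as a Submodular-Cost Submodular-Knapsack (\textsf{SCSK}) instance, so Theorem~\ref{th:guarantee} follows from the analysis of \cite{IB13}.

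The only non-routine step, and the one I would write most carefully, is the justification of the first displayed equation. It rests on two claims: (i) ``node-disjoint except at $s,t$'' forces the selected paths to be pairwise edge-disjoint inside $\mathcal{G}'$ (so that no original edge is implicitly shared by two selected paths and there is no hidden positive correlation between their survivals); and (ii) given edge-disjointness, the standing independence of distinct edges makes the path-survival indicators independent, so an $s$-$t$ connection is absent iff every selected path fails, which gives the product-of-complements form. Both substeps are short, but they are where the node-disjointness hypothesis is genuinely used; everything downstream is the elementary algebra above.
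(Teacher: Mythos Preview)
Your proof is correct and follows essentially the same approach as the paper. Both arguments use node-disjointness to obtain independence of the per-path survival events, yielding a product-of-complements form for $Rel_{\mathcal{P}_1}(s,t)$, from which the diminishing-returns inequality is immediate; the paper merely groups the factors into $p_1=Rel_{\mathcal{P}_1}(s,t)$, $p_2=Rel_{\mathcal{P}_2\setminus\mathcal{P}_1}(s,t)$, and $\delta$ rather than writing your explicit product $1-\prod_{P}(1-q_P)$, but this is a cosmetic difference.
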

\vspace{-3mm}
\begin{proof}
Assume $\mathcal{P}_1,\mathcal{P}_2 \subset \mathcal{P}$, such that $\mathcal{P}_1 \subseteq \mathcal{P}_2$. Also consider a path $P \in \mathcal{P}$
and $P \not \in \mathcal{P}_2$. Let us denote by $Rel_{\mathcal{P}_1}(s,t)=p_1$, $Rel_{\mathcal{P}_1\cup\{P\}}(s,t)=p_1+\delta$, and
$Rel_{\mathcal{P}_2\setminus\mathcal{P}_1}(s,t)=p_2$. Due to our assumption that the top-$r$ most reliable paths in $\mathcal{P}$ are node-disjoint except at the source and the target, we have: $Rel_{\mathcal{P}_2}(s,t)=1-(1-p_1)(1-p_2)$,
and $Rel_{\mathcal{P}_2\cup\{P\}}(s,t)=1-(1-p_1-\delta)(1-p_2)$. Hence, $Rel_{\mathcal{P}_2\cup\{P\}}(s,t)$ $-$ $Rel_{\mathcal{P}_2}(s,t)$
$=$ $(1-p_2)\delta$. This is smaller than or equal to $\delta$, which was the marginal gain for including the path $P$ in the set $\mathcal{P}_1$.
Therefore, our objective function is sub-modular.
\end{proof}

\IEEEoverridecommandlockouts
\vspace{-15mm}
\begin{IEEEbiography}[{\includegraphics[width=0.7in,height=0.9in]{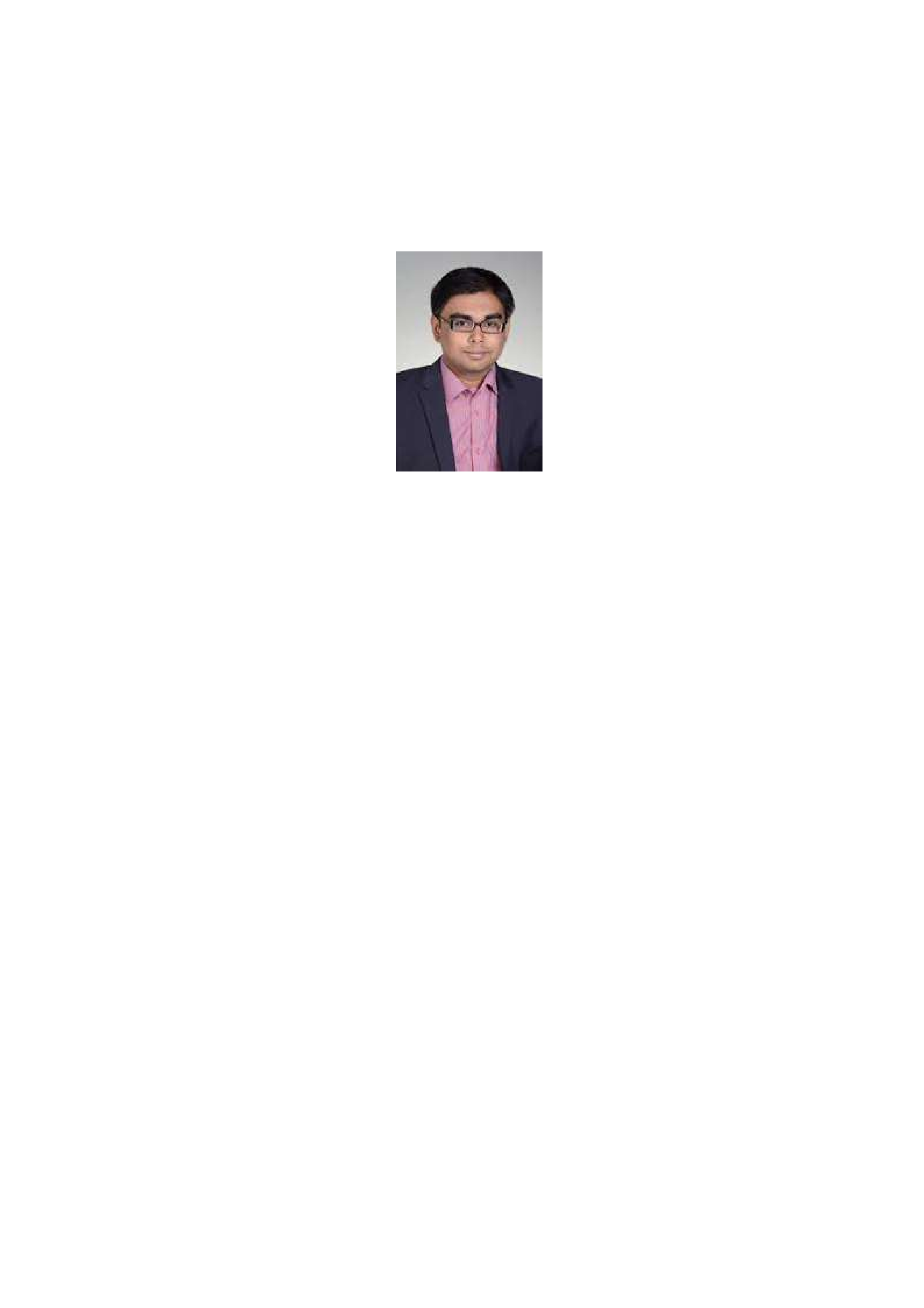}}]
{Arijit Khan} is an Assistant Professor at Nanyang
Technological University, Singapore. He earned his PhD from the University of
California, Santa Barbara, and did a post-doc in the Systems group at ETH Zurich.
Arijit is the recipient of the IBM PhD Fellowship in 2012-13.
He co-presented tutorials on graph queries and systems
at ICDE 2012, VLDB 2014, 2015, 2017.
\end{IEEEbiography}
\vspace{-17mm}
\begin{IEEEbiography}[{\includegraphics[width=0.7in,height=0.9in]{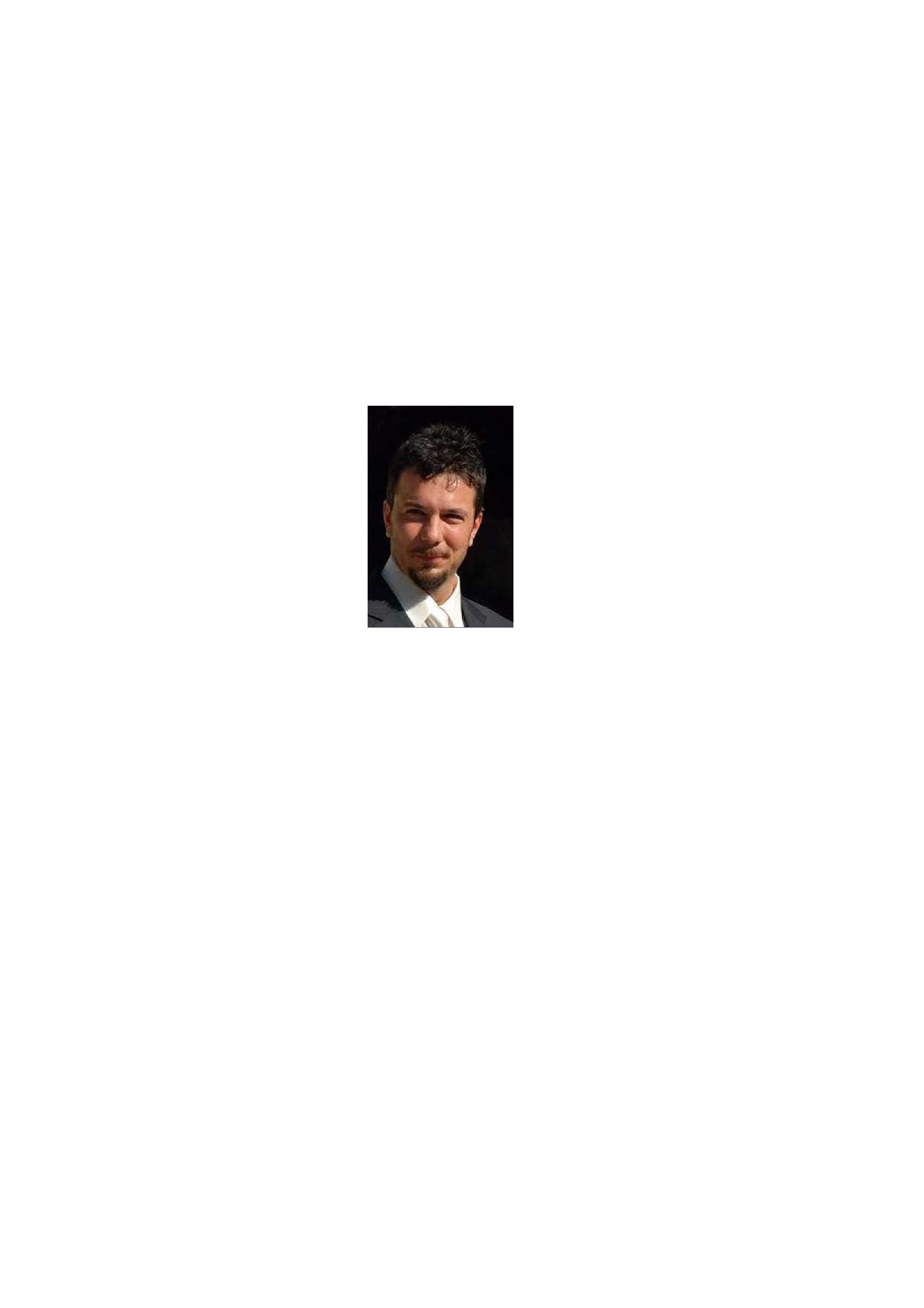}}]
{Francesco Bonchi} is a Research Leader at the ISI Foundation, Turin, Italy.
Before he was Director of Research at Yahoo Labs in Barcelona, Spain.
He is an Associate Editor of the IEEE Transactions on Knowledge and Data Engineering (TKDE), and the ACM Transactions on Intelligent Systems and Technology (TIST).
Dr. Bonchi has served as program co-chair of HT 2017, ICDM 2016, and ECML PKDD 2010.
\end{IEEEbiography}
\vspace{-16mm}
\begin{IEEEbiography} [{\includegraphics[width=0.7in,height=0.9in]{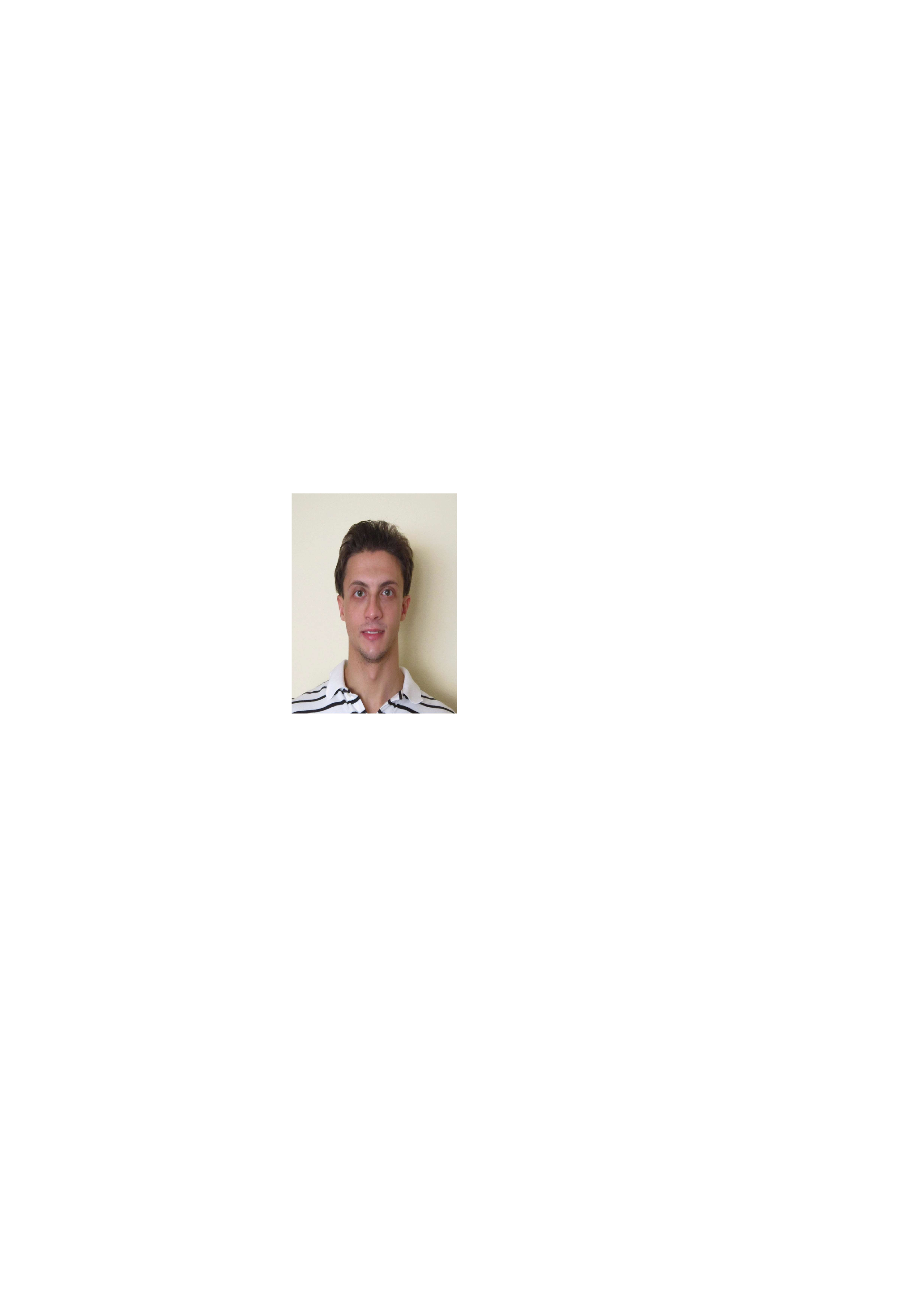}}]
{Francesco Gullo} is a research scientist at UniCredit,, R\&D department.
He received his Ph.D. from the University of Calabria (Italy) in 2010.
He spent four years at Yahoo Labs, Barcelona, first as a postdoctoral researcher, and then
as a research scientist. He served as workshop chair of ICDM'16, and organized several workshops/symposia (MIDAS @ECML-PKDD'16, MultiClust @SDM'14,
@KDD'13).
\end{IEEEbiography}
\vspace{-16mm}
\begin{IEEEbiography}[{\includegraphics[width=0.7in,height=0.9in]{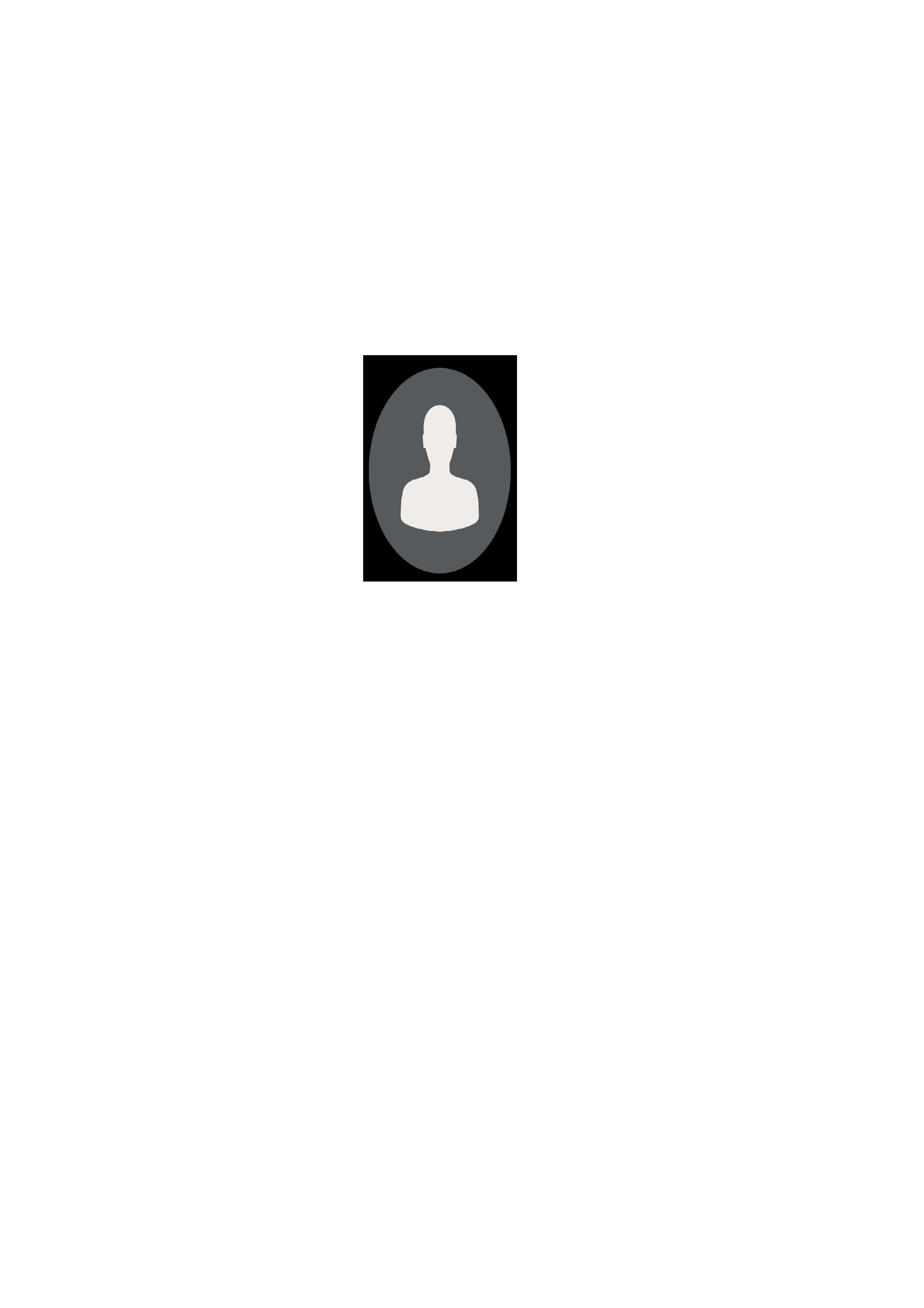}}]
{Andreas Nufer} is a Senior Consultant at GridSoft AG in Switzerland. He completed his masters in Computer Science at ETH Zurich,
and did his bachelors at Ecole Superieure en Sciences Informatiques in France, and also from Bern University of Applied Science
in Switzerland.
\end{IEEEbiography}

\end{document}